\newtheorem{theorem}{Theorem}[section]
\newtheorem{lemma}[theorem]{Lemma}
\newtheorem{corollary}[theorem]{Corollary}
\newtheorem{remark}{Remark}[section]
\newenvironment{proof}[1][Proof]{\begin{trivlist}
\item[\hskip \labelsep {\bfseries #1}]}{\end{trivlist}}
\newcommand{\qed}{\nobreak \ifvmode \relax \else
      \ifdim\lastskip<1.5em \hskip-\lastskip
      \hskip1.5em plus0em minus0.5em \fi \nobreak
      \vrule height0.75em width0.5em depth0.25em\fi}
\begin{document}

\title[Cosmic Censorship for Self-Similar Spherical Dust Collapse]{Cosmic Censorship for Self-Similar Spherical Dust Collapse}

\author{Emily M. Duffy and Brien C. Nolan}

\address{School of Mathematical Sciences, Dublin City University, Glasnevin, Dublin 9, Ireland.}
\eads{\mailto{emilymargaret.duffy27@mail.dcu.ie}, \mailto{brien.nolan@dcu.ie}}
\begin{abstract}
We undertake a rigorous study of the stability of the Cauchy horizon in the naked self-similar Lema\^{i}tre-Tolman-Bondi spacetimes under even parity linear perturbations. We use a combination of energy methods and results about $L^p$-spaces to determine the behaviour of the perturbations as they evolve through the spacetime. We first establish that an average of the perturbation generically diverges on the Cauchy horizon. We next introduce a rescaled version of the perturbation, and show that it is bounded and non-zero on the Cauchy horizon. This in turn shows that the perturbation itself diverges in a pointwise fashion on the Cauchy horizon. We give a physical interpretation of this result using the perturbed Weyl scalars. This result supports the hypothesis of cosmic censorship. 

\end{abstract}

\pacs{04.20.Dw}
\maketitle

\section{Introduction: Cosmic Censorship and Perturbation Theory}
\label{sec:intro}

The formation of naked singularities in various collapse models is a well known occurance; examples include the Reissner-Nordstr\"{o}m (RN) and Kerr spacetimes \cite{Wald}, certain models of critical collapse \cite{LivRev} and certain self-similar perfect fluid and dust solutions \cite{CCreview}. In response to the formation of such singularities, Roger Penrose proposed that the evolution of physically reasonable and generic initial data will not result in the formation of a naked singularity, a statement known as the cosmic censorship hypothesis (CCH) \cite{Wald}. Indeed, it is noted that many spacetimes which contain naked singularities also show high degrees of symmetry and it is tempting to conclude that the formation of the naked singularity is (in some cases) due to the unphysical symmetry of the spacetime. 

It follows that one manner in which the CCH can be studied is by disturbing the symmetry of a naked singularity spacetime by introducing perturbations. Should these perturbations remain finite as they evolve through the spacetime, then the singularity has displayed stability to such perturbations. In particular, the behaviour of perturbations on the Cauchy horizon (CH) of the RN spacetime is illustrative. Recall that the CH is the first null ray emitted by the singularity; it can equally be thought of as the boundary which separates observers who can see the singularity from those who cannot. In RN, the CH is a hypersurface on which perturbations diverge. More precisely, metric perturbations which arrive at the CH from the exterior have an infinite flux on the CH, as measured by observers crossing the CH. One might expect that perturbations evolving through other spacetimes containing naked singularities would show similar divergent behaviour on the CH. 

Should the perturbations turn out to behave in a finite manner on the CH, in certain cases one can still rule out the spacetime as a serious counter-example to the CCH.  In some cases, the spacetime involves an unrealistic matter model, for example the Vaidya spacetime containing null dust or the perfect fluid spacetime, which neglects shear, viscosity and heat conduction. Another phenomenon which sometimes occurs is that the formation of the naked singularity depends on specific initial data; any perturbation of the initial data and the naked singularity fails to form \cite{Christod}. 

We consider here the behaviour of even parity perturbations of the self-similar Lema\^{i}tre-Tolman-Bondi spacetime. We note that this spacetime cannot be taken as a serious counter-example to the CCH, as the first of the above-mentioned defects is present in this spacetime. The matter model used is a dust cloud which ignores pressure (and pressure gradients), and therefore does not provide a realistic description of gravitational collapse. Nonetheless, the simplicity of this spacetime makes it a very useful toy model for gravitational collapse resulting in naked singularity formation. For example, in the present paper we develop techniques that should in principle be applicable to any spherically symmetric self-similar spacetime. 

In a previous paper \cite{DN}, the behaviour of odd parity perturbations of the self-similar LTB spacetime was considered. These perturbations were found to remain finite at the CH, where finiteness is measured with respect to certain energy norms of the perturbation, and pointwise values thereof. This result holds for a general choice of initial data and initial data surface. In this paper, we use broadly similar techniques, with non-trivial extensions, to examine the behaviour of even parity perturbations of the same spacetime. 

The question of the behaviour of even parity perturbations of this spacetime was considered by Nolan and Waters \cite{NW}. In this work, both a harmonic decomposition and a Fourier mode decomposition of the gauge invariant perturbation variables and equations were used, and the individual Fourier modes $X_{\omega,\ell,m}(z)$ were analysed. It was found that modes which are finite on the past null cone remain finite on the Cauchy horizon. However, the question of how to resum these modes on the Cauchy horizon was not fully addressed and so the problem of getting a complete analytic understanding the even parity linear perturbations remains unresolved.

In the next section, we describe the structure of the self-similar LTB spacetime and examine conditions necessary for this spacetime to have a naked singularity. In Section \ref{sec:GS}, we discuss the perturbation formalism due to Gerlach and Sengupta. This formalism uses a multipole decomposition to write linear metric and matter perturbations in terms of functions depending on the similarity coordinate $z$ and the comoving radial coordinate $r$. These perturbations do not share the background symmetries of self-similarity and spherical symmetry. In Section \ref{sec:reduction} we derive a fundamental four-dimensional system of coupled linear PDEs which describe the evolution of the even parity perturbations. 

In practice, we work with a five dimensional system, which has the very useful property of symmetric hyperbolicity. These five variables obey an equation of motion arising from the linearised Einstein equations and must also satisify a constraint, the elimination of which produces the four variable system. Having determined a useful form for the perturbation variables and the Einstein equations, we aim to examine the behaviour of the perturbations as they reach the CH. In particular, we wish to know whether or not the perturbations remain finite there. 

Our strategy in addressing this problem is as follows. In Section \ref{sec:reduction}, we find that the perturbations obey a PDE whose coefficients, as a consequence of the self-similarity, depend only on the similarity coordinate. This means that we can take the state vector $\vec{u}$ which describes the perturbations and integrate with respect to the radius, to produce a kind of ``average''. Since the CH is a surface of constant similarity variable, and since the equations of motion have coefficients independent of the radius, we can reasonably expect that the behaviour of this averaged perturbation should reflect the behaviour of the perturbation itself. In other words, we use the behaviour of solutions to the averaged perturbation's equation of motion as a guide to the behaviour of the perturbation itself. This is the core strategy behind this work. 

In Section \ref{sec:lqblowup} we analyse the behaviour of solutions to the ODE which this averaged form satisfies, as they approach the CH. This ODE has a regular singular point, and methods for solving such systems are well understood. We can show that solutions to this system generically blow-up on the CH, with a characteristic power given by an eigenvalue of a particular matrix, which we denote $-c$. However, since this result applies only to an average of $\vec{u}$, we cannot immediately conclude that $\vec{u}$ blows-up at the CH in a pointwise manner. 

In Section \ref{sec:chdivbehaviour}, we investigate the pointwise behaviour of $\vec{u}$ by introducing a new state vector $\vec{x}$, related to $\vec{u}$, which we expect to have a finite limit on the CH. We devote the next section to a series of results, which cumulatively show that $\vec{x}$ is finite and non-zero on the CH. This in turn establishes that $\vec{u}$ blows-up in a pointwise manner on the CH. 

Finally, in Section \ref{sec:gi} we use the perturbed Weyl scalars to give a physical interpretation of our result. In Section \ref{sec:concl} we make some concluding remarks and discuss further developments of this work. We use units in which $G=c=1$. 


\section{The Self-Similar LTB Spacetime }
\label{sec:background}
\subsection{The LTB Spacetime}
\label{sec:ltb}
The Lema\^{i}tre-Tolman-Bondi spacetime is a spherically symmetric spacetime containing a pressure-free perfect fluid which undergoes an inhomogeneous collapse into a singularity. Under certain conditions this singularity can be naked. We will initially use comoving coordinates $(t,r, \theta, \phi)$, in which the radius $r$ labels each successive shell in the collapsing dust. The line element for such a spacetime can be written as 
\begin{equation}
ds^2=-dt^2 + \rme^{\nu}(t,r)dr^2 + R^2(t,r)d\Omega^2,
\label{metric_tr}
\end{equation}
where $d\Omega^2=d\theta^2+\sin^2\theta d\phi^2$ and $R(t,r)$ is the physical radius of the dust. The stress-energy tensor of the dust can be written as 
\begin{equation*}
\bar{T}^{\mu \nu} = \bar{\rho}(t,r) \bar{u}^{\mu} \bar{u}^{\nu},
\end{equation*}
where $\bar{u}^{\mu}$ is the 4-velocity of the dust and $\bar{\rho}(t,r)$ is its rest mass density. In comoving coordinates, $\bar{u}^{\mu} = \delta_{0}^{\mu}$. 

The background Einstein equations for the metric and stress energy in comoving coordinates immediately provide the following results,
\begin{equation}
\label{rhoande}
\rme^{\nu /2}=\frac{1}{\sqrt{1+f(r)}} \frac{\partial R}{\partial r} , \qquad   \bar{\rho}(t,r)=\frac{1}{4 \pi R^2} \left(\frac{\partial R}{\partial r}\right)^{-1} \frac{\partial m}{\partial r},  
\end{equation}
\begin{equation}
\label{Rdot}
\left( \frac{\partial R}{\partial t} \right)^2-\frac{2 m(r)}{R}=f(r). 
\end{equation}  
The function $m(r)$ is known as the Misner-Sharp mass and is a suitable mass measure for spherically symmetric spacetimes.  

Recall that a shell focusing singularity is a singularity which occurs when the physical radius $R(t,r)$ of the dust cloud decreases to zero, so that all the matter shells have been ``focused'' onto a single point. In this spacetime, a shell focusing singularity occurs on a surface of the form $t=t_{sf}(r)$, which includes the scaling origin $(t,r)=(0,0)$. 

We immediately specialise to the marginally bound case by setting $f(r)=0$. See \cite{DN} for more details on the background spacetime. 


\subsection{Self-Similarity}
\label{sec:selfsim}
We follow here the conventions of \cite{CC}. A spacetime displays self-similarity if it admits a homothetic Killing vector field, that is, a vector field $\vec{\xi}$ such that 
\begin{equation} \label{selfsim}
\mathcal{L}_{\vec{\xi}} g_{\mu \nu} = 2 g_{\mu \nu}. 
\end{equation}
In comoving coordinates, the homothetic Killing vector field is given by $\vec{\xi}=t\frac{\partial}{\partial t}+ r \frac{\partial}{\partial r }$. When self-similarity is imposed on the metric and stress-energy tensor, we find that functions appearing in the metric, the dust density and the Misner-Sharp mass have the following scaling behaviour
\begin{eqnarray} \label{nuandR}
\nu(t,r)=\nu(z), \qquad \qquad \qquad
R(t,r)=rS(z),
\end{eqnarray}
\begin{eqnarray} \label{rhoandm}
\bar{\rho}(t,r)=\frac{q(z)}{r^2}, \qquad \qquad \qquad
m(r)=\lambda r,
\end{eqnarray}
where $z=-t/r$ is the similarity variable and $\lambda$ is a constant (the case $\lambda=0$ corresponds to flat spacetime). Having imposed self-similarity, we find that combining (\ref{Rdot}), (\ref{nuandR}) and (\ref{rhoandm}), produces
\begin{equation} \label{Sdef}
S(z)= (a z +1)^{2/3},
\end{equation}
where $a=3\sqrt{\frac{\lambda}{2}}$. The form of the line element (\ref{metric_tr}) is invariant under the coordinate transformation $r \rightarrow \tilde{r} = \tilde{r}(r)$. We choose $r$ so that $R|_{t=0}=r$ and $\frac{\partial R}{\partial r}|_{t=0}=1$. With this expression for $S(z)$ we can solve for $\frac{\partial R}{\partial r}$ explicitly. In (\ref{rhoande}) we convert $\frac{\partial R}{\partial r}$ to a derivative in $(z,r)$ and find that 
\begin{equation} \label{expform}
\rme^{\nu/2}=\frac{\partial R}{\partial r}=(\frac{1}{3}az+1)(1+az)^{-1/3}. 
\end{equation}
We note that by combining (\ref{rhoandm}) with (\ref{rhoande}), we can find an expression for $q(z)$, 
\begin{equation}
\label{qform}
q(z) = \frac{a^2}{6 \pi (3+4az+a^2z^2)}. 
\end{equation}
We state the metric in $(z,r)$ coordinates for future use,

\begin{equation} \label{metriczr}
ds^2=-r^2 dz^2+\rme^{\nu}(z)(1-z^2\rme^{-\nu}(z))dr^2-2rz drdz + R^2d\Omega^2.  
\end{equation} 
In Section \ref{sec:gi} we will need the radial null directions of the self-similar LTB spacetime. In terms of $(z,r)$ coordinates, retarded null coordinates $u$ and $v$ take the form
\begin{eqnarray} \label{uvcoords}
u=r \, \exp \left( - \int_{z}^{z_{o}} \frac{dz'}{f_{+}(z')} \right), \qquad \quad
v=r \, \exp \left( - \int_{z}^{z_{o}} \frac{dz'}{f_{-}(z')}\right),
\end{eqnarray}
where $f_{\pm}:=\pm \rme^{\nu/2} + z$. In these coordinates, the metric takes the form

\begin{equation*} 
ds^2=- \frac{t^2}{uv} (1-\rme^{\nu}z^{-2}) \, du \, dv + R^2(t,r) d \, \Omega^2. 
\end{equation*}
In  order to calculate the perturbed Weyl scalars, we will need the radially in- and outgoing null vectors, $l^{\mu}$ and $n^{\mu}$. These vectors obey the normalization $g_{\mu \nu} l^{\mu} n^{\nu}=-1$. A suitable choice is therefore
\begin{equation}
\label{inoutvectors}
\vec{l}=\frac{1}{B(u,v)} \frac{\partial}{\partial u}, \qquad \qquad \vec{n}=\frac{\partial}{\partial v}, 
\end{equation}
where $B(u,v)=\frac{t^2}{2uv} \left( 1-\frac{\rme^{\nu(z)}}{z^2} \right)$. In what follows, we shall always use a dot to indicate differentiation with respect to the similarity variable $z$, $\cdot = \frac{\partial}{\partial z}$. 
 

\subsection{Nakedness of the Singular Origin}
\label{nakedsing}
We now consider the conditions required for the singularity at the scaling origin $(t,r)=(0,0)$ to be naked. As a necessary and sufficient condition for nakedness, the spacetime must admit causal curves which have their past endpoint on the singularity. It can be shown \cite{NW} that it is actually sufficient to consider only null geodesics with their past endpoints on the singularity, and without loss of generality, we restrict our attention to the case of radial null geodesics (RNGs). 

The equation which governs RNGs can be read off the metric (\ref{metric_tr}), 
\begin{equation} \label{RNGS}
\frac{dt}{dr}=\pm \rme^{\nu /2}. 
\end{equation}
Since we wish to consider outgoing RNGs we select the $+$ sign. We can convert the above equation into an ODE in the similarity variable, 
\begin{equation}\label{zrrngs}
z+rz'=-\rme^{\nu /2}. 
\end{equation}
We look for constant solutions to this equation, which correspond to null geodesics that originate from the singularity. In other words, the existence of constant solutions to (\ref{zrrngs}) is equivalent to the nakedness of the singularity. For constant solutions, we set the derivative of $z$ to zero, so that the Cauchy horizon is the first root of 
\begin{equation} \label{CH}
z+\rme^{\nu /2}= z+S(z)-z \dot{S}(z)=0, 
\end{equation}
where we reserve the dot notation to indicate a derivative with respect to $z$, $\cdot:=d / dz$. By using (\ref{Sdef}), we can write (\ref{CH}) as an algebraic equation in $z$, 
\begin{equation} \label{algz}
az^4+\left(1+\frac{a^3}{27}\right)z^3+\left(\frac{a^2}{3}\right)z^2 +az+1=0. 
\end{equation}
We wish to discover when this equation will have real solutions. This can easily be found using the polynomial discriminant for a quartic equation, which is negative when there are two real roots. In this case we have
\begin{equation}
D=\frac{1}{27}(-729+2808a^3-4a^6),
\end{equation}
which is negative in the region $a<a^*$ where $a^*$ is
\begin{equation}
a^*=\frac{3}{(2(26+15\sqrt{3}))^{1/3}} \approx 0.638 \ldots
\end{equation}
This translates to the bound $\lambda \leq 0.09$. From (\ref{rhoandm}), we can see that this result implies that singularities which are ``not too massive'' can be naked. See Figure \ref{Fig1} for a Penrose diagram of this spacetime. 

\begin{remark} {\em In fact, one can find $D < 0$ in two ranges, namely $a<a^* \approx 0.64$ and $a > a^{**} \approx 8.89$. We reject the latter range as begin unphysical. Consider (\ref{Sdef}), which indicates that the shell-focusing singularity occurs at $z=-1/a$. If we chose the range $a>a^{**}$ we would find that the corresponding outgoing RNG occurs after the shell focusing singularity and so is not part of the spacetime.}
\end{remark}

\begin{remark} {\em We note that this analysis has assumed that the entire spacetime is filled with a dust fluid. A more realistic model would involve introducing a cutoff at some radius  $r=r*$, after which the spacetime would be empty. We would then match the interior matter-filled region to an exterior Schwarszchild spacetime. However, it can be shown that this cutoff spacetime will be globally naked so long as the cutoff radius is chosen to be sufficiently small \cite{Joshi}. We will therefore neglect to introduce such a cutoff. }
\end{remark}

\begin{figure} 
\begin{center}
\includegraphics[scale=1]{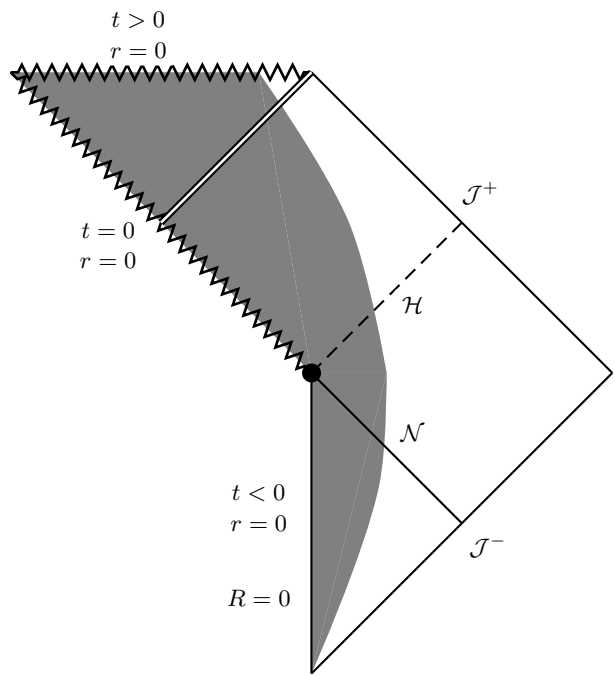}
\end{center}
\caption{Structure of the self-similar LTB spacetime. We present here the conformal diagram for the self-similar LTB spacetime. The gray shaded region represents the interior of the collapsing dust cloud. We label the past null cone of the naked singularity by $\mathcal{N}$, future and past null infinity by $\mathcal{J}^{+}$ and $\mathcal{J}^{-}$ and the Cauchy horizon by $\mathcal{H}$. 
}
\label{Fig1}
\end{figure}

\section{The Gerlach-Sengupta Formalism} 
\label{sec:GS}
We shall use the Gerlach-Sengupta method \cite{GS} to perturb this spacetime (we follow the presentation of \cite{MGGundlach}). This method exploits the spherical symmetry of the spacetime by performing a decomposition of the spacetime into two submanifolds (with corresponding metrics).  Perturbations of the spacetime are then expanded in a multipole decomposition and gauge invariant combinations of the perturbations are constructed. 

We begin by writing the metric of the entire spacetime $(\mathcal{M}^4, g_{\mu\nu})$ as
\begin{equation} \label{gsmetric}
ds^2=g_{AB}(x^{C}) dx^A dx^B + R^2(x^{C}) \gamma_{ab} dx^adx^b,
\end{equation}
where $g_{AB}$ is a Lorentzian metric on the 2-dimensional manifold $\mathcal{M}^2$ and $\gamma_{ab}$ is the metric for the 2-sphere $\mathcal{S}^2$ 
(and the full manifold is $\mathcal{M}^4=\mathcal{M}^2 \times \mathcal{S}^2$). The indices $A, B, C \ldots$ indicate coordinates on $\mathcal{M}^2$ and take the values $A, B \ldots =1,2$ while the indices $a, b, c \ldots$ indicate coordinates on $\mathcal{S}^2$ and take the values $a,b \ldots=3,4$. The covariant derivatives on $\mathcal{M}^4$, $\mathcal{M}^2$ and $\mathcal{S}^2$ are denoted by a semi-colon, a vertical bar and a colon respectively. The stress-energy can be split in a similar fashion,
\begin{equation*}
t_{\mu \nu}dx^{\mu} dx^{\nu}=t_{AB}dx^Adx^B + Q(x^C)R^2\gamma_{ab}dx^adx^b,
\end{equation*}
where $Q(x^C)=\frac{1}{2}t^a_{\, \, a}$ is the trace across the stress-energy on $\mathcal{S}^2$, which vanishes in the LTB case. Now if we define 
\begin{equation*}
v_{A}=\frac{R_{|A}}{R},
\end{equation*}
\begin{equation*}
V_{0}=-\frac{1}{R^2} + 2v^{A}_{\, \, |A} + 3 v^{A}v_{A},
\end{equation*}
then the Einstein equations for the background metric and stress-energy read 
\begin{equation*}
G_{AB}=-2(v_{A|B}+ v_{A}v_{B})+V_{0}g_{AB}=8 \pi t_{AB},
\end{equation*}
\begin{equation*}
\frac{1}{2}G^{a}_{\, \, a}=-\mathcal{R}+v^{A}v_{A}+v^{A}_{\, \, \,|A}=8 \pi Q(x^{C} ),
\end{equation*}
where $G^{a}_{\, \, a}=\gamma^{ab} G_{ab}$ and $\mathcal{R}$ is the Gaussian curvature of $\mathcal{M}^2$, $\mathcal{R}=\frac{1}{2} R_{A}^{(2)A}$ where $R^{(2)}_{AB}$ indicates the Ricci tensor on $\mathcal{M}^2$.  

We now wish to perturb the metric (\ref{gsmetric}), so that $g_{\mu \nu}(x^{\alpha}) \rightarrow g_{\mu \nu}(x^{\alpha})+ \delta g_{\mu \nu}(x^{\alpha})$. To do this, we decompose $\delta g_{\mu \nu}(x^{\alpha})$ and write explicitly the angular dependence using the spherical harmonics. We write the spherical harmonics as $Y^{m}_{l}\equiv Y$. $\{Y\}$  forms a basis for scalar harmonics, while $\{Y_{a}:=Y_{:a}, S_{a}:=\epsilon^{b}_{a}Y_{b}  \}$ form a basis for vector harmonics. Finally, $\{Y\gamma_{ab}, Z_{ab}:=Y_{a:b}+\frac{l(l+1)}{2}Y \gamma_{ab}, S_{a:b}+S_{b:a} \}$ form a basis for tensor harmonics.

We can classify these harmonics according to their behaviour under spatial inversion $\vec{x} \rightarrow -\vec{x}$: A harmonic with index $l$ is even if it transforms as $(-1)^{l}$ and odd if it transforms as $(-1)^{l+1}$. According to this classification, $Y$, $Y_{a}$ and $Z_{ab}$ are even, while $S_{a}$ and $S_{(a:b)}$ are odd. 

We now expand the metric perturbation in terms of the spherical harmonics. Each perturbation is labelled by $(l,m)$ and the full perturbation is given by a sum over all $l$ and $m$. However, since each individual perturbation decouples in what follows, we can neglect the labels and summation symbols. The metric perturbation is given by 
\begin{equation*}
\delta g_{AB}=h_{AB} Y, \qquad \qquad \delta g_{Ab}=h^{\scriptsize{\textbf{E}}}_{A}Y_{:b}+h^{\scriptsize{\textbf{O}}}_{A}S_{b},
\end{equation*}
\begin{equation*}
\delta g_{ab}=R^2KY \gamma_{ab}+R^2GZ_{ab}+h(S_{a:b}+S_{b:a}),
\end{equation*}
where $h_{AB}$ is a symmetric rank 2 tensor,  $h^{\scriptsize{\textbf{E}}}_{A}$ and $h^{\scriptsize{\textbf{O}}}_{A}$ are vectors and $K$, $G$ and $h$ are scalars, all on $\mathcal{M}^2$. We similarly perturb the stress-energy $t_{\mu \nu} \rightarrow t_{\mu \nu}+ \delta t_{\mu \nu}$ and expand the perturbation in terms of the spherical harmonics,
\begin{equation} \label{m2se}
\delta t_{AB}=\Delta t_{AB} Y, \qquad \qquad \delta t_{Ab}=\Delta t^{\scriptsize{\textbf{E}}}_{A} Y_{:b} + \Delta t^{\scriptsize{\textbf{O}}}_{A} S_{b},
\end{equation}
\begin{equation} \label{s2se}
\delta t_{ab}=r^2 \Delta t^3 \gamma_{ab} Y + r^2 \Delta t^2 Z_{ab} + 2\Delta t S_{(a:b)},
\end{equation}
where $\Delta t_{AB}$ is a symmetric rank 2 tensor,  $\Delta t^{\scriptsize{\textbf{E}}}_{A}$ and $\Delta t^{\scriptsize{\textbf{O}}}_{A}$ are vectors and $\Delta t^3$, $\Delta t^2$ and $\Delta t$ are scalars, all on $\mathcal{M}^2$. 

We wish to work with gauge invariant variables, which can be constructed as follows. Suppose the vector field $\vec{\xi}$ generates an infinitesimal coordinate transformation $\vec{x} \rightarrow \vec{x'}=\vec{x}+ \vec{\xi}$. We wish our variables to be invariant under such a transformation. We can decompose $\vec{\xi}$ into even and odd harmonics and write the one-form fields 
\begin{equation*}
\underline{\xi}^E=\xi_{A}(x^C)Ydx^A + \xi^{\scriptsize{\textbf{E}}}(x^C)Y_{:a} dx^a, \qquad \qquad \underline{\xi}^{\scriptsize{\textbf{O}}}=\xi^{\scriptsize{\textbf{O}}} S_{a} dx ^a. 
\end{equation*}
We then construct the transformed perturbations after this coordinate transformation and look for combinations of perturbations which are independent of $\vec{\xi}$, and therefore gauge invariant. We will list here only the even parity gauge independent perturbations and we set the odd parity perturbations to zero. In the even parity case, the metric perturbation is described by a gauge invariant 2-tensor $k_{AB}$ and a gauge invariant scalar $k$,
\begin{equation} 
\label{ktensor}
k_{AB}=h_{AB}-(p_{A|B}+p_{B|A}), \qquad \qquad k=K-2v^{A}p_{A},
\end{equation}
where $p_{A}=h_{A}^{\scriptsize{\textbf{E}}}-(1/2)R^2G_{|A}$. The even parity gauge invariant matter perturbation is given by
\begin{equation} \label{tabdef}
T_{AB}=\Delta t_{AB}-{t_{AB}}^{|C}p_{C} -2(t_{CA}{p^{C}}_{|B} + t_{CB}{p^C}_{|A}),
\end{equation} 
\begin{equation} \label{TAdef}
T_{A}=\Delta t_{A} -{t_{A}}^{C}p_{C} -R^2 \left( \frac{{t^a}_{a}}{4} \right) G_{|A},
\end{equation}
\begin{equation} \label{tscaldef}
T^3=\Delta t^3 - \frac{p^C}{R^2} \left( \frac{R^2 {t_{a}}^a}{2} \right)_{|C},   \qquad \qquad T^2=\Delta t^2 -  \left( \frac{R^2 {t_{a}}^a}{2} \right) G. 
\end{equation}
In the Regge-Wheeler gauge, $h^{E}_{A}=h=G=0$, which implies that $p_{A}=0$ and the gauge invariant matter perturbations coincide with the bare matter perturbations, which simplifies matters considerably. 

We now list the linearised Einstein equations, which can be written in terms of the gauge invariant quantities listed above. We find
\begin{eqnarray} 
(k_{CA|B}+k_{CB|A}-k_{AB|C})v^C-g_{AB}(2k_{CD}^{ \quad|D}-k_{D \, |C}^{\, \, D})v^{C}   \nonumber \\
\qquad-(k_{|A}v_{B}+k_{|B}v_{A}+k_{|AB}) + \left( V_{0} + \frac{l(l+1)}{2R^2} \right) k_{AB}  \nonumber \\
\qquad\qquad-g_{AB} \left( k_{F}^{\, \, F} \frac{l(l+1)}{2R^2} + 2 k_{DF}v^{D|F}   +3k_{DF}v^Dv^F  \right)  \nonumber \\
\label{linein1}
\qquad \qquad \qquad + g_{AB} \left( \frac{(l-1)(l+2)}{2R^2} k - k_{\, |F}^{F}-2k_{|F}v^F\right)&= 8 \pi T_{AB},    
\end{eqnarray}
\begin{eqnarray}
-k^{AB}_{\quad |AB} + (k_{A}^{\, \, A})_{|B}^{\, \, \, |B} -2{k^{AB}}_{|A} v_{B} + k^{A}_{\, \, A|B}v^{B} + R^{AB}(k_{AB}-kg_{AB}) \nonumber \\
 \label{linein2}
\qquad \qquad \qquad \qquad \qquad  -\frac{l(l+1)}{2R^2} k^A_{\, \, A} +  k_{|A}^{\, \, \, \, A} + 2k_{|A} v^A = 16 \pi T^3 &,    \\
\label{linein3}
k_{AB}^{\quad |B} -k_{\,\, B|A}^{B} + k^B_{\, \, B} v_{B} -k_{|A} = 16 \pi T_{A}, &    \\
\label{linein4}
k^A_{\, \, A}=16 \pi T^2. &  
\end{eqnarray}
Finally, we present the linearised stress-energy conservation equations, for completeness. 
\begin{eqnarray} 
\label{secons1}
\fl (t_{AB|D} + 2t_{AB}v_{D})k^{BD} +  Q(k_{|A}-2kv_{A}) - t_{AB}k^{|B} + \frac{1}{2}t^{BC}k_{BC|A} - \frac{1}{2}t_{AB}k_{F}^{\, \, F|B}  \nonumber \\
\fl \qquad \qquad \qquad \qquad +t_{AB}k^{BF}_{\quad |F},  -\frac{1}{R^2}(R^2 T_{AB})^{|B} + \frac{l(l+1)}{R^2} T_{A} + 2 v_{A} T^3 =0  \\ 
\label{secons2}
\fl \frac{1}{R^2} (R^2T_{B})^{|B}+T^3 -\frac{(l-1)(l+2)}{2} \frac{T^2}{R^2} = \frac{1}{2} k_{AB}t^{AB}+ Q(k-\frac{1}{2}k_{A}^{\, \, A}). 
\end{eqnarray}
We now consider the form of these equations for the background LTB spacetime. 
\subsection{The Matter Perturbation}
\label{sec:matter}
To proceed further, we must find a relation between the gauge invariant matter perturbation terms (\ref{tabdef} - \ref{tscaldef}) and the dust density and velocity discussed in section 2.1. This amounts to specifying the matter content of the perturbed spacetime. To do this, we write the stress-energy of the full spacetime as a sum of the background stress-energy and the perturbation stress-energy (where a bar indicates a background quantity),
\begin{equation*}
T_{\mu \nu}=\overline{T}_{\mu \nu}+\delta T_{\mu \nu}. 
\end{equation*}
We assume that the perturbed spacetime also contains dust and write the density as $\rho=\overline{\rho}+\delta \rho$ and the fluid velocity as $u_{\mu}=\overline{u}_{\mu}+ \delta u_{\mu}$. We can now find an expression for the perturbation stress-energy (keeping only first order terms),
\begin{equation} \label{dustsepert}
\delta T_{\mu \nu}=\overline{\rho}(\overline{u}_{\mu} \delta u_{\nu}+ \overline{u}_{\nu} \delta u_{\mu})+ \delta \rho \overline{u}_{\mu} \overline{u}_{\nu}. 
\end{equation}
The perturbation of the dust velocity can now be expanded in terms of spherical harmonics as $\delta u_{\mu}=(\delta u_A, \delta u_{a})=( \delta u_A Y,  \delta u_{E} Y_{:a})$. By imposing conservation of stress-energy and requiring that the perturbed velocity $ u_{\mu}=\overline{u}_{\mu} + \delta u_{\mu}$ obeys $u^{\mu} u_{\mu}=-1$, one can show that the perturbed velocity can be written in the form
\begin{equation} \label{delu}
\delta u_{\mu} = (\partial_{A} \Gamma(t,r) Y, \gamma(t,r) Y_{:a}),
\end{equation}
where the variable $\Gamma$ acts as a velocity potential and obeys an equation of motion arising from perturbed stress-energy conservation (specifically, from the acceleration equations arising from stress-energy conservation), 
\begin{equation}  \label{Gaeqn}
\frac{\partial \Gamma}{\partial z} = -\frac{1}{2} \alpha(z,r),
\end{equation}
where we have labelled the first component of $k_{AB}$ (see (\ref{ktensor})) as $k_{00}=\alpha(z,p)$. In addition, by using (\ref{secons2}) one can show that 
\begin{equation} \label{littlega}
\gamma(z,p)=\Gamma(z,p) + g(p),
\end{equation}
where $p=\ln r$ and $g(p)$ is an initial data function for the velocity perturbation. If we compare this form for the perturbed stress-energy to the Gerlach-Sengupta form, we can find the gauge invariant matter perturbations for the LTB spacetime, which we write in terms of $(z,p)$ coordinates. The gauge invariant tensor $T_{AB}$ is given in $(z,p)$ coordinates by 
\begin{eqnarray*} 
T_{00}=2 \rho \rme^{p} \frac{\partial \Gamma}{\partial z} + \rme^{2p} \delta \rho,  \\
T_{01}=T_{10}=\rme^{p} \rho \left(\frac{\partial \Gamma }{\partial p} + z \frac{\partial \Gamma }{\partial z} \right) + \rme^{2p} z \delta \rho, \\ 
T_{11}=2 \rho z \rme^{p} \frac{\partial \Gamma}{ \partial p} + \rme^{2p} z^2 \delta \rho. 
\end{eqnarray*}
The vector $T_{A}$ is given by $T_{A}=(\rme^{p} \rho \gamma(z,p), \rme^{p} \rho \gamma(z,p) z)$ and the gauge invariant scalars both vanish, $T^2=0$, $T^3=0$. 

\section{Reduction of the Perturbation Equations and the Main Existence Theorem}
\label{sec:reduction}
In what follows, we omit the exact form of various matrices and vectors if they appear in versions of the system of perturbation equations which we do not use; relevant terms are included in Appendix A as indicated. As we have imposed self-similarity on this spacetime, a natural set of coordinates to present the linearised Einstein equations in is $x^{\mu}=(z, p, \theta, \phi)$, where $z=-t/r$ is the similarity coordinate and $p=\ln(r)$. The metric in these coordinates takes the form 
\begin{equation*}
ds^2=\rme^{2p} (-dz^2+((S-z \dot{S})^2-z^2)dp^2-2zdzdp +S^2 d \Omega^2). 
\end{equation*}
Our initial full set of perturbation equations consists of both components of (\ref{linein2}), the $t-p$ and $p-p$ components of (\ref{linein1}) and the equation of motion (\ref{Gaeqn}) for $\Gamma(z,p)$. We now discuss the series of simplifications which allows us to make this choice, before stating the initial six dimensional system of mixed evolution and constraint equations \footnote[1]{To download a \textit{Mathematica} notebook containing these calculations, go to www.student.dcu.ie/$\sim$duffye27. }. 
\begin{enumerate}[(1)]  
\item We note that since $T^2=0$ in this spacetime, by (\ref{linein4}) the metric perturbation tensor $k_{AB}$ is trace-free. We use this to eliminate one component of $k_{AB}$. 
\item Since $T^2$ vanishes, (\ref{linein4}) implies that the metric perturbation $k_{AB}$ is trace-free. Additionally, given that the scalars $Q$ and and $T^3$ both vanish in the LTB case, one can show that (\ref{linein2}) is identically satisfied, assuming that the background Einstein equations, (\ref{linein3}) and (\ref{secons2}) all hold. We will therefore use (\ref{secons2}) in preference to (\ref{linein2}).
\item We note that the $t-t$ component of (\ref{linein1}) gives us a relation for the perturbation of the dust velocity $\delta \rho$ in terms of the velocity and metric perturbation. We use this equation to eliminate $\delta \rho$ from the system. 
\item The variable $k$ is the only variable which appears at second order in derivatives in the resulting system. We therefore introduce a first order reduction by letting $u(z,p)=k(z,p)$, $v(z,p)=\partial k / \partial p$ and $w(z,p)=\dot{k}$, where $\cdot = \partial / \partial z$. 
\item The resulting system of equations consists of both components of (\ref{linein2}), the $t-p$ and $p-p$ components of (\ref{linein1}) and the equation of motion (\ref{Gaeqn}) for $\Gamma(z,p)$. To this we append the auxiliary equations $\dot{k}=w$ and $\dot{v}=\partial w / \partial p$, which makes a total of seven equations. 
\item This results in a first order system of seven equations for six variables, which we combine into a vector 
\begin{equation*}
\vec{X}=(\alpha(z,p), \beta(z, p), u(z,p), v(z,p), w(z,p), \Gamma(z,p))^T \in \mathbb{R}^6. 
\end{equation*}
Here $\alpha(z,p)$ and $\beta(z,p)$ are the independent components  of the tensor $k_{AB}$ in $(z,p)$ coordinates
\[ k_{AB}= \left( \begin{array}{cc} 
\alpha(z,p) & \beta(z,p)  \\
\beta(z,p) & \delta(z,p) \\
\end{array} \right),\] 
and since $T^2=0$, (\ref{linein4}) implies that $\delta = 2z\beta(z,p) + (S(S-2z\dot{S})+z^2(-1+\dot{S}^2)) \alpha(z,p)$. Finally $\Gamma(z,p)$ is the velocity potential given in (\ref{delu}). 
\item We can write the system of equations in a more compact form as 
\begin{equation} \label{matform}
M(z) \frac{\partial \vec{X}}{\partial z} + N(z) \frac{\partial \vec{X}}{\partial p} + O(z) \vec{X} = \vec{S}_{7}(z,p), 
\end{equation}
for $7 \times 6$ matrices $M(z)$, $N(z)$ and $O(z)$ and a 7 dimensional source vector $\vec{S}_{7}$. The dimensions of this system suggest that it may be possible to rewrite it as a 6 dimensional system of evolution equations with a constraint. To identify the constraint, we look for linear combinations of the rows of $M(z)$ which add to give zero. This corresponds to a linear combination of the equations in (\ref{matform}) which has no time derivatives, that is, a constraint. We call this the Einstein constraint. 
\item Having identified this constraint, we construct a new system consisting of six of the equations of the original system, with the constraint added to them. This is fully equivalent to the original seven equation system. We can write this system in a similar manner to (\ref{matform}), 
\begin{equation} \label{matform2} 
P(z) \frac{\partial \vec{X}}{\partial z} + Q(z) \frac{\partial \vec{X}}{\partial p} + R(z) \vec{X} = \vec{\Sigma}_{6}(z,p), 
\end{equation}
for $6 \times 6$ matrices $P(z)$, $Q(z)$ and $R(z)$ and a source vector $\vec{\Sigma}_{6}$. We note that in addition to the Einstein constraint, we have the trivial constraint $\partial k / \partial p=v(z,p)$. 
\item Having identified the two constraints, we would like to use them to eliminate two variables from the system and thus reduce the number of variables from six to four. To do this, we need to diagonalise the system. We multiply through (\ref{matform2}) by $P^{-1}$ and find the Jordan canonical form $\tilde{T}$ of the matrix coefficient of $\partial \vec{X} / \partial p$, $T:=P^{-1}Q$. We also identify the similarity matrix $S$ such that $\tilde{T}=S^{-1}TS$. 
\item To diagonalise the system, we let $\vec{Y}=S \vec{X}$, where $\mbox{det}(S) \neq 0$. $\vec{Y}$ obeys the equation 
\begin{equation}
\label{Yeom}  
\frac{\partial \vec{Y}}{\partial z} + \tilde{T}(z) \frac{\partial \vec{Y}}{\partial p} + \tilde{R}(z) \vec{Y} = \vec{\sigma}_{6}(z,p), 
\end{equation}
where $\tilde{T}=S^{-1}TS$, $\tilde{R}=\dot{S}+S^{-1}P^{-1}RS$ and $\vec{\sigma}_{6}=S^{-1}P^{-1}\vec{\Sigma}_{6}$. The matrix $\tilde{T}$ is in Jordan form but is not diagonal. 
\item In terms of the components of $\vec{X}$, the components of $\vec{Y} \in \mathbb{R}^6$ are
\begin{eqnarray*} 
&Y_{1}(z,p)=-\alpha(z,p) - 8 \pi z q(z) \Gamma(z,p) + v(z,p) \\ 
& \qquad \qquad \qquad \qquad \qquad \qquad - z w(z,p) + \frac{\dot{S}(z)}{S(z)} \beta(z,p),  \\ 
&Y_{2}(z,p)=8 \pi q(z) \Gamma(z,p), \\ 
&Y_{3}(z,p)=u(z,p), \\
&Y_{4}(z,p)=\frac{-\beta(z,p) + S(8 \pi q(z) \Gamma(z,p) + w(z,p))(S-z \dot{S}) }{S(S-z\dot{S})} \\ 
& \qquad \qquad \qquad \qquad \qquad \qquad + \frac{(z+\dot{S}(S-z \dot{S}))\alpha(z,p)}{S(S-z\dot{S})}, \\ 
&Y_{5}(z,p)=\frac{(1+\dot{S})(\beta(z,p)+(-z-S+z \dot{S}) \alpha(z,p))}{2S(S-z \dot{S})}, \\
&Y_{6}(z,p)=\frac{(-1+\dot{S})(-\beta(z,p) + (z-S+z\dot{S})\alpha(z,p))}{2S(S-z \dot{S})},
\end{eqnarray*}
where $S(z)$ is the radial function and $q(z)$ is the density given by (\ref{qform}). 
\item The Einstein constraint can be given in terms of $\vec{Y}$ as   
\begin{eqnarray} 
\label{nontrivcon}
\fl c_{1}(z) Y_{1}(z,p) + c_{2}(z) Y_{2}(z,p) + c_{3}(z) Y_{3}(z,p) +c_{4}(z)Y_{4}(z,p) \\  \nonumber
\fl \qquad \qquad \qquad + c_{5}(z) Y_{5}(z,p) + c_{6}(z) Y_{6}(z,p) + c_{7}(z) \frac{\partial Y_{4}}{\partial p}(z,p)+c_{8}(z)g(p) = 0.
\end{eqnarray}
where the coefficients are not important. The $g(p)$ term comes from the source term $\vec{\Sigma}_{6}$ in (\ref{matform}), which could be written as $\vec{\Sigma}_{6}=\vec{b}(z)g(p)$ where the form of $\vec{b}(z)$ is not needed. 
\item The trivial constraint $\partial k / \partial p=v(z,p)$ can be stated in terms of $\vec{Y}$ as   
\begin{eqnarray} 
\label{trivcon}
\fl Y_{1}(z,p)+z Y_{4}(z,p)+\left(z -S +z \dot{S} \right)Y_{5}(z,p) \\ \nonumber 
\qquad \qquad \qquad \qquad +\left( z+S-z\dot{S} \right) Y_{6}(z,p)-\frac{\partial Y_{3}}{\partial p}(z,p)=0. 
\end{eqnarray}
\item 
\begin{lemma} \label{Lem1}
Suppose that (\ref{nontrivcon}) is satisfied on an initial surface $z=z_{0}$. Then assuming that (\ref{Yeom}) holds, (\ref{nontrivcon}) will be satisfied on all surfaces $z \in (z_{c}, z_{0}]$. 
\end{lemma}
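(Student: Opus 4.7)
The plan is to treat this as a standard constraint propagation argument. Define the constraint function
\begin{equation*}
C(z,p) := c_{1}(z)Y_{1} + c_{2}(z)Y_{2} + c_{3}(z)Y_{3} + c_{4}(z)Y_{4} + c_{5}(z)Y_{5} + c_{6}(z)Y_{6} + c_{7}(z)\partial_{p}Y_{4} + c_{8}(z)g(p),
\end{equation*}
so that the hypothesis reads $C(z_{0}, p) = 0$ for all $p$, and the goal is to show $C(z,p) \equiv 0$ on $(z_{c}, z_{0}] \times \mathbb{R}$. The strategy is to show that, as a consequence of the evolution system \eqref{Yeom}, the quantity $C$ itself satisfies a homogeneous linear first-order PDE in $(z,p)$, so that vanishing initial data forces $C$ to vanish throughout the domain.

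The key calculation is to differentiate $C$ with respect to $z$ and substitute the evolution equations
\begin{equation*}
\partial_{z} Y_{i} = -\tilde{T}_{ij}(z)\,\partial_{p}Y_{j} - \tilde{R}_{ij}(z)\,Y_{j} + \sigma_{i}(z,p)
\end{equation*}
wherever $\partial_{z}Y_{i}$ appears; in particular, the $c_{7}(z)\partial_{p}Y_{4}$ term produces, after commuting derivatives, a term $c_{7}(z)\,\partial_{p}(\partial_{z}Y_{4})$ that is also handled by \eqref{Yeom}. The claim to verify is that, after this substitution, all explicit $Y_{i}$, $\partial_{p}Y_{i}$ and $g(p)$ terms reorganise into a linear combination of $C$ and $\partial_{p}C$, yielding an equation of the form
\begin{equation*}
\partial_{z}C + A(z)\,\partial_{p}C + B(z)\,C = 0
\end{equation*}
for scalar coefficients $A(z), B(z)$. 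This closure is not a coincidence: it is the manifestation, in this decomposition, of the contracted Bianchi identities, which guarantee that the Einstein constraint is preserved by the evolution equations derived from the Einstein equations themselves.

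With such a closed homogeneous transport equation in hand, the conclusion is immediate by the method of characteristics. Along a characteristic curve $p = p(z)$ defined by $dp/dz = A(z)$, the constraint satisfies the linear ODE $dC/dz = -B(z) C$, which together with $C|_{z=z_{0}} = 0$ yields $C \equiv 0$ along every characteristic issuing from the initial surface. Since these characteristics foliate $(z_{c}, z_{0}] \times \mathbb{R}$, this proves the lemma.

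The main obstacle is the bookkeeping in step two: verifying the precise cancellations that produce the closed equation for $C$. In practice one expects to compute $\partial_{z} C$ symbolically using the explicit forms of $\tilde{T}(z)$, $\tilde{R}(z)$, $\vec{\sigma}_{6}(z,p)$ and the coefficients $c_{i}(z)$, and to check that the residual grouping into $\alpha(z) C + \beta(z) \partial_{p} C$ is exact. Because the system was derived directly from the linearised Einstein equations plus stress-energy conservation, this closure is guaranteed in principle; the work is purely algebraic. An alternative route, perhaps more conceptual, is to track the constraint at the level of the original seven-equation system \eqref{matform} before diagonalisation: there, constraint propagation follows transparently from the linearised Bianchi identity, and one then transports the conclusion through the similarity transformation $\vec{Y} = S\vec{X}$ to obtain the result in the form stated.
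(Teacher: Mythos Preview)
Your proposal is correct and matches the paper's approach: the paper's own proof is simply the statement that this is ``a straightforward but lengthy calculation which was carried out using computer algebra,'' which is precisely the algebraic verification you describe in step two. Your write-up in fact gives more structural insight than the paper does, by identifying the closed transport equation $\partial_z C + A(z)\partial_p C + B(z) C = 0$ and invoking characteristics; the paper leaves all of this implicit in the phrase ``computer algebra.''
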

\begin{proof} 
This is a straightforward but lengthy calculation which was carried out using computer algebra. 
\hfill$\square$
\end{proof}
This lemma indicates that the constraint (\ref{nontrivcon}) is propagated by the system. 
\item The trivial constraint (\ref{trivcon}) is also propagated in the sense of Lemma \ref{Lem1}. The existence of these two constraints suggests that the true number of free variables in this system is four. We therefore aim to reduce this system to a free evolution system of four variables by using the two constraints to eliminate two variables. This is carried out in such a way as to keep the system always first order in all variables. 
\item We carry out this reduction in two steps, by first passing to a five dimensional system (which is still a mixed evolution-constraint system) by solving the trivial constraint for one variable, and then passing to a four dimensional system by solving the non-trivial constraint for another variable. The advantage to carrying out the reduction in this manner is that the five dimensional system is symmetric hyperbolic, which will be useful in what follows. 
\item The five dimensional system is obtained by solving the trivial constraint (\ref{trivcon}) for the variable $Y_{1}(z,p)$. We eliminate this variable and reduce the system to five variables, with a state vector $\vec{w} \in \mathbb{R}^5$. Again we would like to put the system in a diagonalised form. We do this as before by calculating the Jordan canonical form of the matrix coefficient $H(z)$ of $\partial \vec{w} / \partial p$ and letting $\vec{u}=S \vec{w} \in \mathbb{R}^5$, where $S$ is the similarity matrix arising from the transformation of $H(z)$ into Jordan canonical form. In terms of $\vec{Y}$, the new variables are given by
\begin{eqnarray*} 
u_{1}(z,p)=f_{1}(z) Y_{3}(z,p), \qquad \qquad \qquad  &u_{2}(z,p)=Y_{2}(z,p), \\ 
u_{3}(z,p)= Y_{4}(z,p) + f_{2}(z) Y_{3}(z,p), & u_{4}(z,p)=Y_{5}(z,p) + f_{3}(z) Y_{3}(z,p), \\ 
u_{5}(z,p)=Y_{6}(z,p) + f_{4}(z) Y_{3}(z,p),
\end{eqnarray*}
where 
\begin{eqnarray*}
&f_{1}(z)=\frac{1-\dot{S}(z)}{2S(z)}, \qquad f_{2}(z)=\frac{-z\dot{S}(z)^2 + S(z)(\dot{S}(z)+z\ddot{S}(z))}{ S(z)(S(z)-z \dot{S}(z))}, \\
&f_{3}(z)=\frac{1+\dot{S}(z)}{2 S(z)}, \qquad \qquad \qquad \qquad  f_{4}(z)=\frac{-1+\dot{S}(z)}{2 S(z)}.
\end{eqnarray*}
\item The five dimensional system obeys an equation of motion of the form
\begin{equation} \label{5deom}
\frac{\partial \vec{u}}{\partial z} + \tilde{A}(z) \frac{\partial \vec{u}}{\partial p} + \tilde{C}(z) \vec{u} = \vec{\Sigma}_{5}(z,p). 
\end{equation}
We shall use this form of the system in the next two sections chiefly because it has the useful property that it is symmetric hyperbolic. This means that the matrix $\tilde{A}$,  
\begin{equation*}
\tilde{A}=\left(
\begin{array}{ccccc}
0 & 0 & 0 & 0 & 0 \\
0 & 0 & 0 & 0 & 0 \\
0 & 0 & -\frac{1}{z} & 0 & 0\\
0 & 0 & 0 & -(z-S+z\dot{S})^{-1} & 0 \\
0 & 0 & 0 & 0 & -(z+S-z \dot{S})^{-1}
\end{array}
\right), 
\end{equation*}
is symmetric. The matrix $\tilde{C}$ and the source term $\vec{\Sigma}_{5}$ are given in Appendix A. 
\item In terms of these new variables, the Einstein constraint becomes 
\begin{eqnarray} \label{mnontriv}
\fl g_{1}(z) u_{1}(z,p) + g_{2}(z) u_{2}(z,p) + g_{3}(z) u_{3}(z,p) + g_{4}(z) u_{4}(z,p) \\ \nonumber
 \qquad \qquad \qquad + g_{5}(z) u_{5}(z,p) + g_{6}(z) \frac{\partial u_{3}}{\partial p}(z,p) + g_{7}(z) g(p) =0, 
\end{eqnarray}
where the coefficients $g_{i}(z)$, $i=1,\ldots ,7$ are listed in Appendix A, and $g(p)$ is an initial data function. We note that $g_{5}(z)$ vanishes on the Cauchy horizon. 
\item In order to eliminate one more variable, we solve (\ref{mnontriv}) for $u_{2}(z,p)$. As before, we then put the new system in Jordan canonical form by writing it in terms of the vector $\vec{k} \in \mathbb{R}^4$, where in terms of $\vec{u}(z,p)$, the new variables are
\begin{eqnarray*}
k_{1}(z,p)=u_{1}(z,p), \qquad \qquad k_{2}(z,p)=\frac{u_{3}(z,p)}{f(z)}, \\ 
k_{3}(z,p)=u_{4}(z,p), \qquad \qquad k_{4}(z,p)=u_{5}(z,p), 
\end{eqnarray*}
where 
\begin{eqnarray*}
\fl f(z)=\frac{12 (3+2 a z) (2 a-3 h(z))^2 (2 a+3 h(z))}{(3+a z)(16 a^4+108 a^2 z-81 h(z)-48 a^3 h(z)-27 a (-4+3 z h(z)))},  
\end{eqnarray*}
and $h(z)=(1+az)^{1/3}$. 
\item $\vec{k}$ obeys the differential equation
\begin{equation} 
\label{4deqnofmotion}
\frac{\partial \vec{k}}{\partial z} + E(z) \frac{\partial \vec{k}}{\partial p} + B(z) \vec{k} = \vec{\Sigma}_{4}(z,p),  
\end{equation}
where $E(z)$ is given in Appendix A and we omit $B(z)$ and $\vec{\Sigma}_{4}$. This system is a free evolution system in the sense that there are no further constraints which must be obeyed by these variables. The system cannot be reduced to any simpler form than this. However, the matrix $E(z)$ is not symmetrizable, which implies that this system is not symmetric hyperbolic. This is why we choose to work with the five dimensional system (\ref{5deom}) and the Einstein constraint (\ref{mnontriv}). 
\end{enumerate}

We will slightly rewrite the five dimensional system (\ref{5deom}) as
\begin{equation} \label{5dsys}
t\frac{\partial \vec{u}}{\partial t} + A(t) \frac{\partial \vec{u}}{\partial p} + C(t) \vec{u} = \vec{\Sigma}(t,p)
\end{equation}
where now $t=z-z_{c}$, so that $t=0$ is the Cauchy horizon. In terms of the coefficient matrices and source in (\ref{5deom}), $A(t)=t \tilde{A}(z)$, $C(t)=t \tilde{C}(z)$ and $\vec{\Sigma}(t,p)=t \vec{\Sigma}_{5}(z,p)$. We briefly list here the most important properties of this system.
\begin{itemize}
\item $\vec{u}(t,p)$ is a five dimensional vector, whose components are linear combinations of the components of the gauge invariant metric and matter perturbations. 
\item $A(t)$ and $C(t)$ are five-by-five matrices. We note that $A(t)=t \tilde{A}(z)$ and the matrix $\tilde{A}(z)$ contains a factor of $h^{-1}(z)$ in the $(5,5)$ component. Here $h(z):=z+S-z \dot{S}=z-f(z)$, where $S(z)$ is the radial function (see (\ref{Sdef})) and $f(z)=-S+z\dot{S}$. $f(z_{c})=z_{c}$ so that $h(z)$ vanishes on the Cauchy horizon. If we Taylor expand $h(z)=z-z_{c}-\dot{f}(z_{c})(z-z_{c})+O((z-z_{c})^2) = t-\dot{f}(z_{c})t + O(t^2)$, then we can see that $t h^{-1}(z)$ is analytic at the Cauchy horizon where $t=0$. This in turn implies that $A(t)=t \tilde{A}(z)$ is analytic at the Cauchy horizon. Similar remarks apply to the matrix $C(t)=t \tilde{C}(z)$, since the fifth row of $\tilde{C}(z)$ contains $h^{-1}(z)$ factors. Similarly, the fifth component of $\vec{\Sigma}_{5}$ contains a factor of $h^{-1}(z)$. So overall, $A(t)$, $C(t)$ and $\vec{\Sigma}(t,p)$ are analytic for $t \geq 0$. 
\item $A(t)$ is diagonal, whereas $C(t)$ is not. The first four rows of $A(t)$, $C(t)$ and $\vec{\Sigma}(t,p)$ are $O(t)$ as $t \rightarrow 0$, while the last row of each is $O(1)$ as $t \rightarrow 0$. 
\item The source $\vec{\Sigma}$ is separable and we can write it as $\vec{\Sigma}(z,p)=\vec{h}(t) g(p)$, where $\vec{h}(t)$ is an analytic vector valued function of $t$ and $g(p)$ is an initial data term. 
\item $g(p)$ represents the perturbation of the dust velocity and is a free initial data function. The results below follow for $g \in C_{0}^{\infty}(\mathbb{R}, \mathbb{R})$ which we assume henceforth. 
\end{itemize}
In what follows, $t_{1}$  is the initial data surface and $0 \leq t \leq t_{1}$, where $t=0$ is the Cauchy horizon, so the Cauchy horizon is approached in the direction of decreasing $t$. 

\begin{theorem} 
\label{Thm1}
The IVP consisting of the system (\ref{5dsys}) along with the initial data 
\begin{eqnarray*}
\vec{u} \bigg|_{t_{1}}=\vec{f}(p)
\end{eqnarray*}
\end{theorem}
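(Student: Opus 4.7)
The plan is to exploit the symmetric hyperbolic structure of (\ref{5dsys}) on the open slab $(0, t_1] \times \mathbb{R}$. For any $t_{*} \in (0, t_1)$, dividing (\ref{5dsys}) by $t$ produces the equivalent system
$$\partial_t \vec{u} + \tilde{A}(z) \partial_p \vec{u} + \tilde{C}(z) \vec{u} = \vec{\Sigma}_5(z,p),$$
which has smooth coefficients on $[t_*, t_1] \times \mathbb{R}$ and whose principal-part matrix $\tilde{A}$ is diagonal, hence symmetric. I would therefore invoke a classical existence-and-uniqueness theorem for symmetric hyperbolic systems (e.g.\ the treatment in Taylor or Majda) to obtain, for each $t_{*} > 0$, a unique classical solution $\vec{u}_{t_*} \in C^\infty([t_*, t_1] \times \mathbb{R};\mathbb{R}^5)$ realising the initial data $\vec{u}(t_1, p) = \vec{f}(p)$.

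The first step is to establish local existence and uniqueness on $[t_*, t_1]$ for arbitrary $t_* > 0$ via the standard symmetric-hyperbolic framework; the compact support of $\vec{f}$, together with finite propagation speed, guarantees that $\vec{u}_{t_*}(t, \cdot\,)$ remains compactly supported in $p$ throughout the evolution. Next, uniqueness of classical solutions with given data lets me patch the family $\{\vec{u}_{t_*}\}_{t_{*} > 0}$ into a single smooth solution on the half-open slab $(0, t_1] \times \mathbb{R}$: for $0 < t_{**} < t_{*} < t_1$, the restriction of $\vec{u}_{t_{**}}$ to $[t_{*}, t_1] \times \mathbb{R}$ must coincide with $\vec{u}_{t_*}$. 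Finally, I would note that the reduction from the seven-equation system to the five-dimensional evolution is genuine because Lemma \ref{Lem1} and its trivial-constraint analogue transport the Einstein constraint (\ref{mnontriv}) from $t = t_1$ to every $t \in (0, t_1]$, so that a solution of (\ref{5dsys}) with constraint-satisfying data is automatically a solution of the constrained reduced system.

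The main difficulty is that, in order to extract any quantitative control of $\|\vec{u}(t,\cdot\,)\|$ as $t \to 0^+$, one must contend with the fact that $\tilde{C}(z)$ and $\vec{\Sigma}_5(z,p)$ each contain $h^{-1}(z) \sim 1/t$ factors coming from the Cauchy horizon. A standard energy identity, obtained by pairing the symmetric-hyperbolic form with $\vec{u}$ and integrating in $p$, produces
$$\frac{d}{dt}\|\vec{u}(t,\cdot\,)\|_{L^2}^2 = -2\langle \vec{u}, \tilde{C}\vec{u}\rangle + 2\langle \vec{u}, \vec{\Sigma}_5\rangle + \langle \vec{u}, (\partial_p \tilde{A})\vec{u}\rangle,$$
and the $1/t$ behaviour then only supplies power-type blow-up bounds of the form $\|\vec{u}(t,\cdot\,)\|_{L^2} \lesssim t^{-\kappa}$ after Gronwall, with $\kappa$ determined by the spectrum of the leading part of $\tilde{C}$. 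This is enough for existence on $(0, t_1]$ as asserted by the theorem, but leaves the question of the pointwise behaviour of $\vec{u}$ at $t = 0$ untouched — precisely the issue taken up in Sections \ref{sec:lqblowup}--\ref{sec:chdivbehaviour}, where the divergence exponent $-c$ alluded to in the introduction is identified.
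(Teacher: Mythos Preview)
Your proposal is correct and takes essentially the same approach as the paper, which simply cites standard symmetric hyperbolic theory (Chapter~12 of \cite{McOwen}); you have merely spelled out in more detail the division by $t$ and the patching over $[t_*, t_1]$ that underlie such a citation. One small slip: the term $\langle \vec{u}, (\partial_p \tilde{A})\vec{u}\rangle$ in your energy identity vanishes identically since $\tilde{A}$ depends only on $z$, not on $p$.
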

where $\vec{f} \in C_{0}^{\infty}(\mathbb{R}, \mathbb{R}^5)$, possesses a unique solution $\vec{u}(t, p)$, $\vec{u} \in \textbf{C}^{\infty}(\mathbb{R}\times(0, t_{1}], \mathbb{R}^5)$. For all $t \in (0, t_{1}], $ $\vec{u}(t, \cdot): \mathbb{R} \rightarrow \mathbb{R}^5$ has compact support. 

\begin{proof}  This is a standard result from the theory of symmetric hyperbolic systems, see Chapter 12 of \cite{McOwen}.
\hfill$\square$
\end{proof}
We note that since the constraint is propagated by the five dimensional system (see Lemma \ref{Lem1}), a choice of smooth and compactly supported initial data for the components $u_{1}(z,p)$,  $u_{3}(z,p)$,  $u_{4}(z,p)$ and  $u_{5}(z,p)$ is sufficient to ensure that $u_{2}(z,p)$ as given by the constraint (\ref{mnontriv}) is also smooth and compactly supported. Therefore, this theorem also provides sufficient conditions for the existence of unique solutions to the four dimensional free evolution system. 

\begin{remark} {\em In Section \ref{sec:chdivbehaviour} we will require solutions $\vec{u}(t,p)$ in $L^1(\mathbb{R}, \mathbb{R}^5)$ for a choice of initial data $\vec{f} \in L^1(\mathbb{R}, \mathbb{R}^5)$. It follows immediately from Theorem \ref{Thm1} by the density of $C_{0}^{\infty}$ in $L^1$ that for $0 < t \leq t_{1}$, $\vec{u}(\cdot,p) \in L^1(\mathbb{R}, \mathbb{R}^5)$. To show that we can extend our choice of initial data to $L^1$, we require a bound on $\vec{u}$, which is established in the following lemma. For this lemma, we will need Gr\"{o}nwall's inequality \cite{Codding}, which states that for continuous functions $\phi(t)$, $\psi(t)$  and $\chi(t)$, if 
\begin{equation*}
\phi(t) \leq \psi(t)+\int_{a}^{t} \chi(s) \phi(s) \, ds, 
\end{equation*}
then 
\begin{equation*}
\phi(t) \leq \psi(t) + \int_{a}^{t} \chi(s) \psi(s) \exp\left( \int_{s}^{t} \chi(u) \, du \right) \, ds. 
\end{equation*}
 }
\end{remark}

\begin{lemma}
\label{lemubound}
The $L^1$-norm of $\vec{u}(t,p)$ obeys the bound
\begin{equation}
\label{ubound}
|| \vec{u}(t) ||_{1} \leq c_{1}(t) || \vec{u}(t_{1}) ||_{1} + c_{2}(t) ||g||_{1}, 
\end{equation}
for $0 < t \leq t_{1}$, where $c_{1}(t)$ and $c_{2}(t)$ are continuous on $(0, t_{1}]$. 
\end{lemma}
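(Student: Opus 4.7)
The plan is a standard $L^1$ energy estimate, combined with a logarithmic rescaling of $t$ that renders the coefficients of (\ref{5dsys}) bounded on the whole interval of integration, followed by a single application of Gr\"{o}nwall's inequality.

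First, introduce the rescaled time $\sigma = \ln(t_{1}/t)$, so that $\sigma$ ranges over $[0,\infty)$ as $t$ ranges over $(0,t_{1}]$, with $t\partial_{t} = -\partial_{\sigma}$. The system (\ref{5dsys}) becomes
\begin{equation*}
\partial_{\sigma}\vec{u} = A(t)\,\partial_{p}\vec{u} + C(t)\,\vec{u} - \vec{h}(t)\,g(p),
\qquad t = t_{1}e^{-\sigma}.
\end{equation*}
The essential point is that $A$, $C$ and $\vec{h}$ are analytic at $t=0$ by the bulleted list following Theorem \ref{Thm1}; hence they remain bounded as $\sigma\to\infty$, with continuous moduli on all of $[0,\infty)$.

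Next, derive an $L^{1}$ energy inequality componentwise. For solutions of the regularity guaranteed by Theorem \ref{Thm1}, multiply the $i$th equation by $\mathrm{sgn}(u_{i})$. Because $A(t)$ is diagonal, $\mathrm{sgn}(u_{i})A_{ii}\partial_{p}u_{i} = A_{ii}\partial_{p}|u_{i}|$, and the identity
\begin{equation*}
\partial_{\sigma}|u_{i}| = A_{ii}(t)\partial_{p}|u_{i}| + \mathrm{sgn}(u_{i})\sum_{j}C_{ij}(t)\,u_{j} - \mathrm{sgn}(u_{i})\,h_{i}(t)\,g(p)
\end{equation*}
holds almost everywhere. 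Integrating in $p\in\mathbb{R}$ eliminates the transport term thanks to the compact support in $p$ asserted in Theorem \ref{Thm1}, and summing over $i$ yields, after elementary bounds,
\begin{equation*}
\frac{d}{d\sigma}\|\vec{u}(\sigma)\|_{1} \leq K(\sigma)\,\|\vec{u}(\sigma)\|_{1} + M(\sigma)\,\|g\|_{1},
\end{equation*}
where $K(\sigma):=\max_{j}\sum_{i}|C_{ij}(t(\sigma))|$ and $M(\sigma):=\sum_{i}|h_{i}(t(\sigma))|$ are continuous and bounded on $[0,\infty)$.

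Finally, the version of Gr\"{o}nwall's inequality stated in the preceding remark, applied to the integrated form of this differential inequality on $[0,\sigma]$ with $\phi=\|\vec{u}(\sigma)\|_{1}$, yields
\begin{equation*}
\|\vec{u}(\sigma)\|_{1} \leq \|\vec{u}(t_{1})\|_{1}\exp\!\left(\int_{0}^{\sigma}K(s)\,ds\right) + \|g\|_{1}\int_{0}^{\sigma}M(s)\exp\!\left(\int_{s}^{\sigma}K(r)\,dr\right)ds.
\end{equation*}
Unwrapping $\sigma=\ln(t_{1}/t)$ identifies the coefficients
\begin{equation*}
c_{1}(t)=\exp\!\left(\int_{0}^{\ln(t_{1}/t)}K(s)\,ds\right), \qquad
c_{2}(t)=\int_{0}^{\ln(t_{1}/t)}M(s)\exp\!\left(\int_{s}^{\ln(t_{1}/t)}K(r)\,dr\right)ds,
\end{equation*}
each continuous on $(0,t_{1}]$, and establishes (\ref{ubound}).

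The only conceptual obstacle is the singular factor of $t$ multiplying $\partial_{t}$ in (\ref{5dsys}), which could \emph{a priori} prevent any uniform control of $\|\vec{u}(t)\|_{1}$ as $t\to 0^{+}$. The logarithmic rescaling is precisely what absorbs this singularity, at the price of permitting $c_{1}$ and $c_{2}$ to blow up as $t\to 0^{+}$ (they are merely continuous on $(0,t_{1}]$, not bounded). Everything else is a routine $L^{1}$ estimate made possible by the diagonal form of $A(t)$ and by the compact support of $\vec{u}(t,\cdot)$ supplied by Theorem \ref{Thm1}.
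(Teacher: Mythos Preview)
Your proof is correct and reaches the same conclusion as the paper, but by a genuinely different and somewhat cleaner route. The paper writes each component equation, solves it explicitly along its characteristic curve, takes absolute values and integrates in $p$, invokes Fubini's theorem to interchange the $p$- and $\tau$-integrals (using the observation that the characteristic shift $p\mapsto p+\pi_{i}(\tau)-\pi_{i}(t)$ preserves $L^{1}$-norms), and only then applies Gr\"{o}nwall. Your logarithmic rescaling $\sigma=\ln(t_{1}/t)$ accomplishes the same thing more directly: it converts the singular operator $t\partial_{t}$ into $-\partial_{\sigma}$ with bounded coefficients, after which a single componentwise multiplication by $\mathrm{sgn}(u_{i})$ and integration in $p$ yields the differential inequality immediately, with no need for the explicit characteristic solution formula or Fubini. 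The price you pay is the mild technical point about differentiating $|u_{i}|$ (which you correctly flag as holding almost everywhere, and which is routinely justified by a $\sqrt{u_{i}^{2}+\epsilon^{2}}$ regularisation). The paper's approach, while longer, has the side benefit that the characteristic solution formulae it derives are reused verbatim in the later pointwise estimates of Section~\ref{sec:chdivbehaviour}.
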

\begin{proof}
$\vec{u}$ obeys the equation
\begin{equation} \label{5dsysagain}
t\frac{\partial \vec{u}}{\partial t} + A(t) \frac{\partial \vec{u}}{\partial p} + C(t) \vec{u} = \vec{\Sigma}(t,p)
\end{equation}
where $\vec{\Sigma} = \vec{h}(t) g(p)$ and $g(p)$ in an initial data term. Each row of (\ref{5dsysagain}) can be written as 
\begin{equation}
\label{uieqn}
t\frac{\partial u_{i}}{\partial t} + a_{i}(t) \frac{\partial u_{i}}{\partial p} + c_{i}(t) u_{i} = \Sigma_{i}(t,p) - \sum_{j=1, j \neq i}^{5} c_{ij}(t)u_{j}(t,p) = S_{i}(t,p), 
\end{equation}
for $i=1, \ldots, 5$, where $a_{i}(t)$ and $c_{i}(t)$ are the diagonal components of the matrices $A(t)$ and $C(t)$ respectively, and because $C(t)$ is not diagonal, the off-diagonal components $c_{ij}(t)$ are put into the source term $S_{i}(t,p)$. We can solve (\ref{uieqn}) using the method of characteristics. The characteristics are given by 
\begin{equation}
\label{thechars}
\frac{d p_{i}}{dt} = \frac{a_{i}(t)}{t}  \qquad  \Rightarrow \qquad  p_{i}(t) = \eta_{i}+\pi_{i}(t), 
\end{equation}
where $\pi_{i}(t)=-\int_{t}^{t_{1}} \frac{a_{i}(\tau)}{\tau} d \tau$ and $\eta_{i}=p_{i}(t_{1})$. On characteristics, (\ref{uieqn}) becomes 
\begin{equation}
\label{uionchars}
t\frac{d u_{i}}{d t}(t, p_{i}(t))  + c_{i}(t) u_{i}(t, p_{i}(t)) = S_{i}(t, p_{i}(t)). 
\end{equation}
The integrating factor for (\ref{uionchars}) is $\rme^{\xi_{i}}(t)$ where $\xi_{i}(t)=-\int_{t}^{t_{1}} \frac{c_{i}(\tau)}{\tau} d\tau $, and the solution to (\ref{uionchars}) is 
\begin{equation}
\label{usoln}
\fl u_{i}(t,p_{i}) = \rme^{-\xi_{i}}(t) u_{i}^{(0)}(p_{i} - \pi_{i}(t)) - \rme^{\xi_{i}}(t) \int_{t}^{t_{1}} \frac{\rme^{\xi_{i}}(\tau)}{\tau} S_{i}(\tau, p_{i}+\pi_{i}(\tau) - \pi_{i}(t)) d \tau. 
\end{equation}
We take the $L^1$-norm by taking an absolute value and integrating with respect to $p$; this produces 
\begin{eqnarray*}
 ||u_{i}(t)||_{1} = \rme^{-\xi_{i}}(t) ||u_{i}^{(0)}||_{1} \\ 
 \qquad \qquad + \rme^{\xi_{i}}(t) \int_{\mathbb{R}} \left( \int_{t}^{t_{1}} \frac{\rme^{\xi_{i}}(\tau)}{\tau} |S_{i}(\tau, p_{i}+\pi_{i}(\tau) - \pi_{i}(t))| d \tau \right) \, dp. 
\end{eqnarray*}
Recall Theorem \ref{Thm1} which tells us that at each $t \in (0, t_{1}]$, $u_{j} \in C_{0}^{\infty}(\mathbb{R}, \mathbb{R})$. This allows us to apply Fubini's theorem; that is, to interchange the order of the integrals above. We note that the structure of the characteristics (\ref{thechars}) indicates that evaluation of the $L^1$-norm of a function $f(t,p)$ at fixed time $t$ yields the same result as the evaluation of the $L^1$-norm of $f$ evaluated on characteristics, that is 
\begin{equation*}
\int_{\mathbb{R}} |f(\tau, p| \, dp = \int_{\mathbb{R}} |f(\tau, p_{i} + \pi_{i}(\tau) - \pi_{i}(t))| \, dp. 
\end{equation*}
So, applying Fubini's theorem and using the form of $S_{i}$ produces 
\begin{eqnarray}
\label{uil1}
\fl ||u_{i}(t)||_{1} = \rme^{-\xi_{i}}(t) ||u_{i}^{(0)}||_{1} + \\ \nonumber 
\fl \qquad \qquad  \rme^{\xi_{i}}(t)  \int_{t}^{t_{1}} \frac{\rme^{\xi_{i}}(\tau)}{\tau} \left( |h_{i}(\tau)| ||g ||_{1} + \sum_{j=1, j \neq i}^{5} |c_{ij}(\tau)| \int_{\mathbb{R}} |u_{j}| \, dp \right) d \tau. 
\end{eqnarray}
Then if we note that $||g||_{1}$ does not depend on $t$, we can write 
\begin{equation}
\fl || \vec{u}(t) ||_{1} = \sup_{i} || u_{i}(t) ||_{1} \leq d_{1}(t) || \vec{u}^{(0)} ||_{1} + d_{2}(t)||g||_{1} + d_{3}(t) \int_{t}^{t_{1}} d_{4}(\tau) ||\vec{u}(\tau)||_{1} d \tau, 
\end{equation}
where the $d_{i}(t)$ functions are the suprema of the various $t$-dependent functions which appear in (\ref{uil1}) (and their precise value is not important). Now applying Gr\"{o}nwall's inequality (with $\psi(t) = d_{1}(t) || \vec{u}^{(0)} ||_{1} + d_{2}(t)||g||_{1}$) produces 
\begin{eqnarray*}
\fl || \vec{u}(t) ||_{1} \leq d_{1}(t) || \vec{u}^{(0)} ||_{1} + d_{2}(t)||g||_{1} \\ 
\qquad \qquad  + \int_{t}^{t_{1}} d_{5}(\tau) \left( d_{1}(\tau) || \vec{u}^{(0)} ||_{1} + d_{2}(\tau)||g||_{1} \right)\exp\left( \int_{\tau}^{t_{1}} d_{5}(\tau') d \tau' \right) d \tau, 
\end{eqnarray*}
where the exact value of $d_{5}(t)$ is unimportant. Again, since $||g||_{1}$ and $||\vec{u}^{(0)}||_{1}$ do not depend on $t$, we can summarize this as 
\begin{equation*}
|| \vec{u}(t) ||_{1} \leq c_{1}(t) || \vec{u}(t_{1}) ||_{1} + c_{2}(t) ||g||_{1}, 
\end{equation*}
for some continuous functions $c_{1}(t)$ and $c_{2}(t)$ defined on $(0, t_{1}]$, whose exact value is not important. 

\hfill$\square$
\end{proof}

\begin{figure} 
\begin{center}
\includegraphics[scale=1]{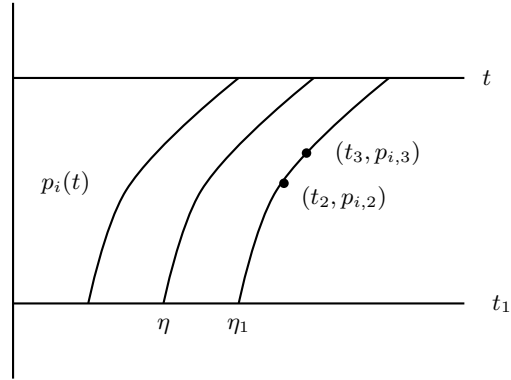}
\end{center}
\caption{Characteristics: We show here typical characteristic curves $p_{i}=p_{i}(t)$, along which the solution (\ref{usoln}) is evaluated.  
}
\label{Fig3}
\end{figure}

\begin{remark} {\em The characteristics $p_{i}(t)$ provide a $C^{1}$ foliation of the region $\Omega= \{ (t,p_{i}) : t \in (0, t_{1}], p_{i} \in \mathbb{R} \}$. A typical such foliation is shown in Figure \ref{Fig3}. For every point $q=(t_{2}, p_{i, 2}) \in \Omega$, there is a unique characteristic $C_{1}$ such that $q \in C_{1}$ which we label by $\eta_{i,2}=p_{i}(t_{2}) \bigg|_{C_{1}}$. Define the set $\Omega_{t}:=\{(t', p_{i}') \in \Omega : t'=t \}$. Then the characteristics provide a natural diffeomorphism of $\Omega_{t}$,
\begin{equation*}
p_{i, 2} \in \Omega_{t_{2}} \rightarrow p_{i, 3} \in \Omega_{t_{3}}, 
\end{equation*}
with 
\begin{equation*}
p_{i, 2} = p_{i, 3} - \int_{t_{2}}^{t_{3}} a_{i}(\tau) d \tau. 
\end{equation*}
}
\end{remark}
The fact that $\vec{u}$ obeys a bound of the form (\ref{ubound}) allows us to extend our initial data to $\vec{f}=\vec{u}(t_{1}, p) \in L^1(\mathbb{R}, \mathbb{R}^5)$ . 

\begin{theorem} 
\label{Thml1exist}
The IVP consisting of the system (\ref{5dsys}) along with the initial data 
\begin{eqnarray*}
\vec{u} \bigg|_{t_{1}}=\vec{f}(p)
\end{eqnarray*}
\end{theorem}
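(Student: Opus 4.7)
The plan is to extend Theorem \ref{Thm1} from smooth compactly supported data to $L^{1}$ initial data by a standard density and approximation argument, where Lemma \ref{lemubound} provides the crucial uniform control that makes the approximation scheme converge. The linearity of (\ref{5dsys}) in $\vec{u}$ is central: if $\vec{u}_{n}$ and $\vec{u}_{m}$ are solutions arising from data $\vec{f}_{n}$ and $\vec{f}_{m}$ (with the same source $g$), then $\vec{u}_{n} - \vec{u}_{m}$ solves the \emph{homogeneous} version of (\ref{5dsys}) with $g \equiv 0$ and initial data $\vec{f}_{n} - \vec{f}_{m}$.

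First, I would pick a sequence $\{\vec{f}_{n}\} \subset C_{0}^{\infty}(\mathbb{R}, \mathbb{R}^{5})$ with $\vec{f}_{n} \to \vec{f}$ in $L^{1}(\mathbb{R}, \mathbb{R}^{5})$, which exists by density of $C_{0}^{\infty}$ in $L^{1}$. Theorem \ref{Thm1} then yields, for each $n$, a unique smooth solution $\vec{u}_{n} \in \mathbf{C}^{\infty}(\mathbb{R} \times (0, t_{1}], \mathbb{R}^{5})$ with compactly supported spatial slices. Applying Lemma \ref{lemubound} to $\vec{u}_{n} - \vec{u}_{m}$ (with the $g$-dependent term dropped, since the sources cancel) gives
\begin{equation*}
\| \vec{u}_{n}(t) - \vec{u}_{m}(t) \|_{1} \leq c_{1}(t) \| \vec{f}_{n} - \vec{f}_{m} \|_{1}, \qquad t \in (0, t_{1}].
\end{equation*}
By continuity of $c_{1}$ on $(0, t_{1}]$, this estimate is uniform on any compact subinterval, so $\{\vec{u}_{n}(t, \cdot)\}$ is Cauchy in $L^{1}(\mathbb{R}, \mathbb{R}^{5})$ for each $t \in (0, t_{1}]$, and one defines $\vec{u}(t, \cdot) := \lim_{n \to \infty} \vec{u}_{n}(t, \cdot)$ in the $L^{1}$ topology.

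Next I would verify that this limit is a genuine solution in an appropriate sense, and that it satisfies the initial condition. Since pointwise differentiation is no longer available for $L^{1}$ data, the correct notion is a \emph{mild} solution defined through the characteristic representation (\ref{usoln}); this formula only requires integration of the data $u_{i}^{(0)} \in L^{1}$ along characteristics and makes sense verbatim for $L^{1}$ data. Passing to the limit $n \to \infty$ inside (\ref{usoln}) (justified by dominated convergence and the $L^{1}$ convergence $\vec{f}_{n} \to \vec{f}$) shows that $\vec{u}$ satisfies this integral form and hence solves (\ref{5dsys}) in the mild sense. The initial condition $\vec{u}|_{t_{1}} = \vec{f}$ in $L^{1}$ is immediate from $\vec{u}_{n}(t_{1}, \cdot) = \vec{f}_{n} \to \vec{f}$. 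Uniqueness follows by the same mechanism: any two $L^{1}$ solutions with identical data have a difference that solves the homogeneous IVP with zero data, and (\ref{ubound}) forces this difference to vanish in $L^{1}$ for all $t \in (0, t_{1}]$.

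The main (and essentially the only) obstacle is the pedantic but essential task of specifying in exactly what sense $\vec{u}$ is a solution, since it need not be classically differentiable. The cleanest route is to adopt the mild-solution viewpoint based on (\ref{usoln}) and observe that it coincides with the classical notion on the dense subclass $C_{0}^{\infty}$; the $L^{1}$ bound of Lemma \ref{lemubound} then does all the work, since it exhibits the solution map $\vec{f} \mapsto \vec{u}(t, \cdot)$ as a bounded linear operator on $C_{0}^{\infty}$ which extends uniquely by continuity to all of $L^{1}$.
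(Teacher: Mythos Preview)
Your proposal is correct and follows essentially the same approach as the paper, which simply states that the result follows by a standard density argument exploiting Lemma \ref{lemubound} and refers to \cite{scalar} and \cite{McOwen} for details. You have spelled out that standard argument explicitly, including the appropriate caveat about the mild-solution interpretation for $L^{1}$ data.
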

where $\vec{f} \in L^1(\mathbb{R}, \mathbb{R}^5)$, possesses a unique solution $\vec{u}(t, p)$, $\vec{u} \in C^{\infty}((0, t_{1}], L^1(\mathbb{R}, \mathbb{R}^5))$.  

\begin{proof}  The proof of this result relies on the bound (\ref{ubound}) and follows by a standard argument exploiting the density of $C_{0}^{\infty}(\mathbb{R}, \mathbb{R}^5)$ in $L^1(\mathbb{R}, \mathbb{R}^5)$. See Theorems 5 and 7 of \cite{scalar} for examples of such techniques and Chapter 12 of \cite{McOwen} for background details. 
\hfill$\square$
\end{proof}
Although these theorems provide for the existence of smooth or $L^1$ solutions prior to the Cauchy horizon, it gives us no information about their behaviour as they reach the horizon itself. We must therefore consider this behaviour separately. We note that this problem is rendered nontrivial by the fact that the Cauchy horizon is a singular hypersurface of (\ref{5dsys}).


\section{Behaviour of the $L^q$-Norm}
\label{sec:lqblowup}

In this section, we will use Theorem \ref{Thml1exist} to provide for the existence and uniqueness of $L^1$ solutions to (\ref{5dsys}) with a choice of $L^1$ initial data. This theorem applies for $t \in (0, t_{1}]$ only and we must consider the behaviour of $\vec{u}$ on the Cauchy horizon separately.  Our strategy in tackling this problem is as follows (see \cite{Brienunp}). We expect that any divergence which might arise in the perturbation would be in some sense (to be defined) independent of the radial coordinate, since the Cauchy horizon is a hypersurface of constant $t$, and since the coefficients of (\ref{5dsys}) are independent of $p$. Motivated by this observation, we introduce the integral of the perturbation vector with respect to the radial coordinate, which acts as a kind of ``average'' of the perturbation. This variable obeys a relatively simple system of ODEs, the solutions to which can be determined. 

Let $\vec{u}$ be the solution of (\ref{5dsys}) with $\vec{u}(t_{1}, p)=\vec{u}^{(0)}$. Then define
\begin{equation} \label{ubar}
\bar{u}(t):=\int_{\mathbb{R}} \vec{u}(t,p) \, dp,
\end{equation}
which is a kind of ``average'' of $\vec{u}(t,p)$ (note that the existence of $\bar{u}$ is guaranteed since $| \bar{u}| \leq || \vec{u}||_{1} < \infty$ since $\vec{u} \in L^1$). If we integrate with respect to $p$ through the system (\ref{5dsys}), we find that $\bar{u}$ obeys the ODE
\begin{equation} \label{ubareqn}
t \frac{d \bar{u}}{d t}= -C(t) \bar{u} + \bar{\Sigma},
\end{equation}
where $\bar{\Sigma}(t):=\int_{\mathbb{R}} \vec{\Sigma}(t,p) \, dp$. This ODE displays a regular singular point at $t=0$ (see Chapter 2 of \cite{Wasow} and Chapter 4 of \cite{Codding} for the theory of such points). We now state a theorem which gives the fundamental matrix for this system. Recall that the fundamental matrix for an ODE system is a matrix whose rows are linearly independent solutions to the ODE in question. 
\begin{theorem}
\label{Thm2}
The fundamental matrix corresponding to (\ref{ubareqn}) is 
\begin{eqnarray} \label{ubarsoln}
H(t)&=J(t) + K(t) \\ \nonumber 
    &= P(t) \, t^{-\bar{C}_{0}} + P(t) \, t^{-\bar{C}_{0}} \, \int^{t_{1}}_{t} \, P^{-1}(\tau) \, \tau^{\bar{C}_{0}-\mathbb{I}}  \bar{\Sigma} \, d \tau,
\end{eqnarray}
where $P(t)=\mathbb{I} + t P_{1} + \ldots$ is a matrix series whose coefficients can be found by a recursion relation from the Taylor expansion of $C(t)$. $\bar{C}_{0}$ is the Jordan canonical form of the zero order term in the Taylor expansion of $C(t)$. $\bar{C}_{0}$ takes the form  $\bar{C}_{0}=\mbox{diag}(0,0,0,0,c)$ where $c$ is a constant given by 
\begin{equation*}
c= \lim_{t \rightarrow 0} \, t \left(\frac{3+(S(z_{c})-z_{c})\ddot{S}(z_{c})+ \dot{S}(z_{c})(-3+z_{c} \ddot{S}(z_{c}))}{(1-\dot{S}(z_{c}))(z_{c}+S(z_{c})-z_{c}\dot{S}(z_{c}))}\right).  
\end{equation*}
$c$ depends only on $a$, and satisfies $c \in (3, + \infty)$, with 
\begin{equation*}
\lim_{a \rightarrow 0^{+}} c = 3,  \qquad \qquad \qquad \lim_{a \rightarrow a^*} c = + \infty,
\end{equation*}
where $a^*$ is the maximum value of $a$ for which a naked singularity forms. 
\end{theorem}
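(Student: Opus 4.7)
The plan has two parts: first, derive the Frobenius-type structure of the fundamental matrix by applying the standard theory of regular singular points to (\ref{ubareqn}); second, compute $\bar{C}_{0}$ explicitly and determine the range of its nonzero eigenvalue $c$. Since $C(t)$ is analytic at $t=0$ by the bulleted properties of (\ref{5dsys}), I can Taylor-expand $C(t)=C_{0}+tC_{1}+t^{2}C_{2}+\cdots$, so (\ref{ubareqn}) has a regular singular point at $t=0$. The theory of such systems (Wasow, Chapter~2; Coddington--Levinson, Chapter~4) then guarantees a fundamental matrix for the homogeneous equation $t\bar{u}'=-C(t)\bar{u}$ of the form $J(t)=P(t)\,t^{-\bar{C}_{0}}$, where $\bar{C}_{0}$ is the Jordan canonical form of $C_{0}$ and $P(t)=\mathbb{I}+tP_{1}+t^{2}P_{2}+\cdots$ is a convergent matrix series whose coefficients $P_{n}$ are determined by matching powers of $t$ after substitution, yielding a recursion that relates each $P_{n}$ to $\{C_{k}\}_{k\le n}$ and $\bar{C}_{0}$. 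The particular solution $K(t)$ is then produced by variation of parameters: writing $\bar{u}(t)=J(t)v(t)$ in (\ref{ubareqn}) gives $v'(t)=t^{\bar{C}_{0}-\mathbb{I}}P^{-1}(t)\bar{\Sigma}(t)$, and integrating from $t$ up to $t_{1}$ recovers the expression for $K(t)$ displayed in (\ref{ubarsoln}).

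To identify $\bar{C}_{0}$ itself, I would invoke the structural observation recorded after (\ref{5dsys}): the first four rows of $C(t)$ are $O(t)$ as $t\to 0$, while the fifth row is $O(1)$. Hence $C_{0}=\lim_{t\to 0}C(t)$ has nonzero entries only in its fifth row; it is therefore rank one with eigenvalue $0$ of algebraic multiplicity four and a single nonzero eigenvalue $c=(C_{0})_{55}$, confirming $\bar{C}_{0}=\mbox{diag}(0,0,0,0,c)$. The explicit value of $c$ is then the limit of $t\cdot(\tilde{C}(z))_{55}$ as $t\to 0$: using the form of the fifth row of $\tilde{C}(z)$ recorded in Appendix~A (which contains a factor $h^{-1}(z)$) together with the Taylor expansion $h(z)=(1-z_{c}\ddot{S}(z_{c}))t+O(t^{2})$ noted in the bulleted properties of (\ref{5dsys}), the product $t\cdot h^{-1}(z)$ tends to a finite nonzero value, and after simplification one recovers the expression for $c$ displayed in the statement.

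The main obstacle is the last claim: that $c=c(a)$ lies in $(3,+\infty)$ with the stated endpoint behaviour. Using the Cauchy horizon condition $z_{c}+S(z_{c})-z_{c}\dot{S}(z_{c})=0$ together with $S(z)=(1+az)^{2/3}$, every occurrence of $S$, $\dot{S}$ and $\ddot{S}$ in the formula for $c$ can be rewritten in terms of $z_{c}$ and $a$, while $z_{c}$ is determined implicitly as a root of the quartic (\ref{algz}); hence $c$ depends only on $a$. For $a\to 0^{+}$ the quartic degenerates to $z^{3}+1=0$, giving $z_{c}\to -1$, and since $\dot{S}(z_{c})\to 0$ and $\ddot{S}(z_{c})\to 0$ in this limit, the numerator in the formula for $c$ tends to $3$ while both factors in the denominator tend to $1$, yielding $c\to 3$. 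For $a\to a^{*}$ the discriminant of (\ref{algz}) vanishes and two real roots coalesce; at a double root of $F(z):=z+S(z)-z\dot{S}(z)$ one computes $F'(z_{c})=1-z_{c}\ddot{S}(z_{c})=0$, and this is precisely the factor arising (after cancellation of the leading $t$) in the denominator of $c$, so $c\to+\infty$. Continuity of $c(a)$ on $(0,a^{*})$, combined with monotonicity verified through implicit differentiation of (\ref{algz}), then establishes $c\in(3,+\infty)$.
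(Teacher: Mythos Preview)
Your proposal is correct and follows essentially the same route as the paper: the Frobenius structure is deferred to Wasow and Coddington--Levinson, the identification of $\bar{C}_{0}$ via the $O(t)$ structure of the first four rows of $C(t)$ matches, and your endpoint computations $c\to 3$ as $a\to 0^{+}$ and $c\to+\infty$ as $a\to a^{*}$ (the latter from the double-root condition $1-z_{c}\ddot{S}(z_{c})=0$) are exactly the paper's argument in Appendix~B.

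The one place where the paper is sharper is the monotonicity step. Rather than working with $c(a)$ directly, Appendix~B first uses $h(z_{c})=0$ to simplify the numerator of $\tilde{c}_{55}$ and obtains the closed form $c=2+1/u(a)$, where $u(a)=\dot{h}(z_{c}(a))=1-z_{c}\ddot{S}(z_{c})$; the task then reduces to showing $u$ decreases monotonically from $1$ to $0$, which is done by computing $du/da$ explicitly (feeding in the implicit-differentiation formula for $dz_{c}/da$) and checking signs term by term. Your phrase ``monotonicity verified through implicit differentiation of (\ref{algz})'' is a little underspecified: differentiating the quartic gives you $dz_{c}/da$, but that alone does not give monotonicity of $c$, since $c$ depends on both $z_{c}$ and $a$. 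The closed form $c=2+1/u(a)$ is what makes that final step clean.
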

\begin{proof} The proof is a standard result for systems of the form (\ref{ubareqn}). See \cite{Wasow} or \cite{Codding}  for details. We have relegated the proof of the results about the behaviour of $c$ to Appendix B. 
\footnote[1]{We note that there exists a set $A$ of values of $a$ such that $c(a)$ is a natural number, that is $A=\{a \in (0, a^*) : c(a) \in \mathbb{N} \cap (3, \infty) \}$. When $c \in \mathbb{N} \cap (3, \infty)$, the fundamental matrix (\ref{ubarsoln}) will contain extra log terms. However, since this set has zero measure in the set $a \in (0, a^*)$ we will not consider it further. See \cite{Wasow} and \cite{Codding} for further details. }
\hfill$\square$
\end{proof} 
Our next task is to analyse in more detail the behaviour of this fundamental matrix. 
\subsection{Behaviour of the Fundamental Matrix}
\label{fundbeh}
We now expand about the Cauchy horizon so that $P(t) = \mathbb{I} + O(t)$ and write the homogeneous part and the particular part of $H(t)$ separately. Given the form of the matrix $C_0$ listed in Theorem \ref{Thm2}, the homogeneous part takes the form
\begin{equation*} 
J(t)=\mbox{diag}(1+O(t),1+O(t),1+O(t),1+O(t),t^{-c}+O(t^{-c+1})),
\end{equation*}
and the particular part can be written
\begin{equation*} 
K(t)=\mbox{diag}({\kappa_{1}, \kappa_{2}, \kappa_{3}, \kappa_{4}, \kappa_{5}}),
\end{equation*}
where 
\begin{equation*}
\kappa_{i}=\int^{t_{1}}_{t}  \, \tau^{-1} \bar{\Sigma}_{i}(\tau)(1+O(\tau)) \, d\tau,
\end{equation*}
for $i=1,2,3,4$ and 
\begin{equation*}
\kappa_{5} = t^{-c} \int^{t_{1}}_{t} \tau^{c-1} \, \bar{\Sigma}_{5}(\tau)(1+O(t)) \, d \tau. 
\end{equation*}
We will now examine the particular part. We recall that $\bar{\Sigma}_{i}$ is separable, so that $\bar{\Sigma}=\vec{h}(t) \, G$, where $\vec{h}(t)=(0, 0,tk_{3}(t), tk_{4}(t), k_{5}(t))$. The $k_{i}(t)$ functions are all analytic at $t=0$ and $G:=\int_{\mathbb{R}} g(p) \, dp \in \mathbb{R}$. Now the $\kappa$ terms become
\begin{eqnarray*}
\kappa_{1} = \kappa_{2} = 0, \\ 
\kappa_{j}=G k_{j}(t^*)(t_{1}-t)+O((t_{1}-t)^2),  
\end{eqnarray*}
for $j=3,4$ and 
\begin{equation*}
\kappa_{5} = t^{-c}k_{5}(t^*)G \left( \frac{t_{1}^c - t^c}{c} \right) + O(t_{1}-t)
\end{equation*}
We can see that these integrals have the same order behaviour as the corresponding homogeneous terms, that is, the $\kappa_{i}$ are $O(1)$ as $t \rightarrow 0$, for $i=1, \ldots ,4$ and $\kappa_{5}$ is $O(t^{-c})$ as $t \rightarrow 0$. 

Now since $c>0$, (\ref{ubarsoln}) shows that solutions to (\ref{ubareqn}) blow up as $t \rightarrow 0$. We now examine this divergence. 


\subsection{Blow-up of the $L^{q}$-norm}
\label{lqblowup}
We begin this analysis by determining a way to distinguish between those initial data which lead to diverging solutions to (\ref{ubareqn}) and those which do not. If we include both the homogeneous and the inhomogeneous parts, then we can label the five solutions to (\ref{ubareqn}) arising from Theorem \ref{Thm2} as 
\begin{eqnarray*}
&\bar{\phi}_{1}(z)=(1+\kappa_{1}, 0,0,0,0)^T, \\ 
&\bar{\phi}_{2}(z)=(0, 1+\kappa_{2},0,0,0)^T,\\ 
&\bar{\phi}_{3}(z)=(0, 0,1+\kappa_{3},0,0)^T, \\ 
&\bar{\phi}_{4}(z)=(0, 0,0,1+\kappa_{4},0)^T, \\
&\bar{\phi}_{5}(z)=(0,0,0,0,t^{-c}+\kappa_{5})^T,
\end{eqnarray*}
where the $\bar{\phi}_{1,2,3,4}$ are finite as $t \rightarrow 0$ and $\bar{\phi}_{5}$ is divergent. Given that (\ref{ubareqn}) has coefficients which are analytic on $(0,t_1]$, it follows that $\bar{\phi}_{1-5}$ are analytic on $(0,t_1]$. Thus these solutions provide a basis for solutions of (\ref{ubareqn}) on $(0,t_1]$. Hence given any solution $\bar{u}(t)$ of (\ref{ubareqn}), there exist constants $d_i, i=1, \ldots, 5$ such that
\[ \bar{u}(t)=\sum_{i=1}^5 d_i\bar{\phi}_i(t).\] 
Let $\mathcal{S}=L^1(\mathbb{R},\mathbb{R}^5)$. We consider solutions of (\ref{5dsys}) with initial data in $\mathcal{S}$. Given $\vec{u}^{(0)} \in \mathcal{S}$, define $\bar{u}_0=\int_{\mathbb{R}}\vec{u}^{(0)} \, dp$. We can define $d_i(\vec{u}^{(0)})$ via the existence of unique constants $d_i, i=1, \ldots, 5$ for which
\[ \bar{u}_0=\sum_{i=1}^5 d_i\bar{\phi}(t_1).\]
Define
\[ \mathcal{S}^\prime = \{\vec{u}^{(0)} \in \mathcal{S}: d_5(\vec{u}^{(0)})=0\}.\]
This set corresponds one-to-one with initial data for (\ref{5dsys}) which give rise to solutions for which the corresponding solutions of (\ref{ubareqn}) are finite as $t \to 0$. 

We define the set complement of $\mathcal{S}^\prime$ in $\mathcal{S}$ as $\mathcal{S}^{\prime\prime}=\mathcal{S}-\mathcal{S}^\prime$.

\begin{lemma}
\label{Lem2}
Given a choice of initial data $\vec{u}^{(0)} \in \mathcal{S}''$, the solution $\vec{u}$ corresponding to this data displays a blow-up of its $L^1$-norm, that is, 
\begin{equation*} 
\lim_{t \rightarrow 0} || \vec{u} ||_{1} = \infty. 
\end{equation*}
\end{lemma}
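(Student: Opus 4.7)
The plan is to reduce the blow-up of the vector $L^1$-norm $\|\vec{u}(t)\|_1$ to the divergence of the scalar-valued average $\bar{u}(t)$, whose behaviour on $(0,t_1]$ is already pinned down by the fundamental matrix of Theorem \ref{Thm2}. The key elementary ingredient is the componentwise inequality
\[
|\bar{u}_i(t)| = \left| \int_{\mathbb{R}} u_i(t,p) \, dp \right| \leq \int_{\mathbb{R}} |u_i(t,p)| \, dp = \|u_i(t)\|_1 \leq \|\vec{u}(t)\|_1,
\]
valid for each component $i=1,\ldots,5$, with the vector $L^1$-norm interpreted as in Lemma \ref{lemubound}. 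This reduces the task to exhibiting a single component of $\bar{u}(t)$ that blows up as $t\to 0^+$.

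First I would verify that, for $\vec{u}^{(0)} \in L^1(\mathbb{R},\mathbb{R}^5)$, the average $\bar{u}(t)$ really does satisfy the ODE (\ref{ubareqn}). For smooth compactly supported data this is a straightforward Fubini argument applied to (\ref{5dsys}): the contribution $A(t)\int_{\mathbb{R}}\partial_p\vec{u}\, dp$ vanishes because $\vec{u}(t,\cdot)$ has compact support by Theorem \ref{Thm1}, and differentiation under the integral in $t$ is routine. The extension to arbitrary $\vec{u}^{(0)}\in\mathcal{S}$ would then proceed by approximation, using density of $C_0^\infty$ in $L^1$ together with the continuous dependence supplied by the bound (\ref{ubound}) of Lemma \ref{lemubound}.

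Once (\ref{ubareqn}) is in force for $\bar{u}$, existence and uniqueness for a first-order linear ODE with analytic coefficients on $(0,t_1]$ give the unique representation $\bar{u}(t)=\sum_{i=1}^5 d_i\bar{\phi}_i(t)$, with coefficients $d_i=d_i(\vec{u}^{(0)})$ fixed by evaluation at $t=t_1$. Reading off the basis vectors from Theorem \ref{Thm2} and the explicit form recorded in Section \ref{fundbeh}, each of $\bar{\phi}_1,\ldots,\bar{\phi}_4$ has vanishing fifth component, while the fifth component of $\bar{\phi}_5$ behaves as $t^{-c}$ with $c>3$. Hence $\bar{u}_5(t)=d_5\bar{\phi}_{5,5}(t)$, and since $\vec{u}^{(0)}\in\mathcal{S}''$ by hypothesis, $d_5\neq 0$. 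This gives $|\bar{u}_5(t)| \sim |d_5|\, t^{-c} \to \infty$ as $t\to 0^+$, which combined with the componentwise inequality yields $\|\vec{u}(t)\|_1 \geq |\bar{u}_5(t)| \to \infty$, completing the argument.

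The main obstacle will be the justification that $\bar{u}(t)$ satisfies (\ref{ubareqn}) for genuinely $L^1$ initial data, since the naive passage requires an integrability statement for $\partial_t\vec{u}$ and $\partial_p\vec{u}$ that is not obvious from $\vec{u}\in L^1$ alone. This is handled by the density/approximation scheme sketched above, treating (\ref{ubareqn}) as an identity that passes to the $L^1$-limit through Lemma \ref{lemubound} and a dominated convergence argument; aside from this technical step, the remainder is an immediate consequence of the asymptotics already established in Theorem \ref{Thm2}.
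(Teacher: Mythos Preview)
Your proposal is correct and takes essentially the same approach as the paper: both deduce $\|\vec{u}\|_1\to\infty$ from the elementary inequality $|\bar{u}|\le\|\vec{u}\|_1$ together with the divergence of $\bar{u}$ furnished by Theorem~\ref{Thm2}. You focus on the fifth component and add a careful discussion of why $\bar{u}$ actually satisfies~(\ref{ubareqn}) for $L^1$ data, which the paper simply assumes; otherwise the arguments coincide.
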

\begin{proof}
We define $\bar{u}=\int_{\mathbb{R}} \vec{u} \, dp$ where $\vec{u}$ is the solution of (\ref{5dsys}) corresponding to $\vec{u}^{(0)} \in \mathcal{S}''$. It is immediately clear from (\ref{ubarsoln}) that 
\begin{equation*} 
\lim_{z \rightarrow z_{c}} | \bar{u} | = \lim_{t \rightarrow 0} | \bar{u} | = \infty.  
\end{equation*}
Then using the definition of the $L^1$-norm of $\vec{u}$, $|| \vec{u} ||_{1}$, it follows that 
\begin{equation} \label{L1blowup} 
|| \vec{u} ||_{1} = \int_{\mathbb{R}} | \vec{u} | \, dp \geq \left| \int_{\mathbb{R}} \vec{u} \, dp \right| = | \bar{u} |, 
\end{equation}
which implies that $|| \vec{u} ||_{1} \rightarrow \infty$ as $t \rightarrow 0$. 
\hfill$\square$
\end{proof}
This divergent behaviour in the $L^1$-norm of $\vec{u}$ could be attributed to the divergence of the support of $\vec{u}$ as it spreads from the initial surface $t_{1}$, rather than to divergent behaviour in $\vec{u}$ itself (see Figure \ref{Fig2} for an illustration of this). We must therefore consider the behaviour of the spread of the support of $\vec{u}$ from the initial surface to the Cauchy horizon. In analysing the growth of the support of $\vec{u}$, it is convenient to briefly return to the self-similar coordinate $z$.  
\begin{figure} 
\begin{center}
\includegraphics[scale=1.3]{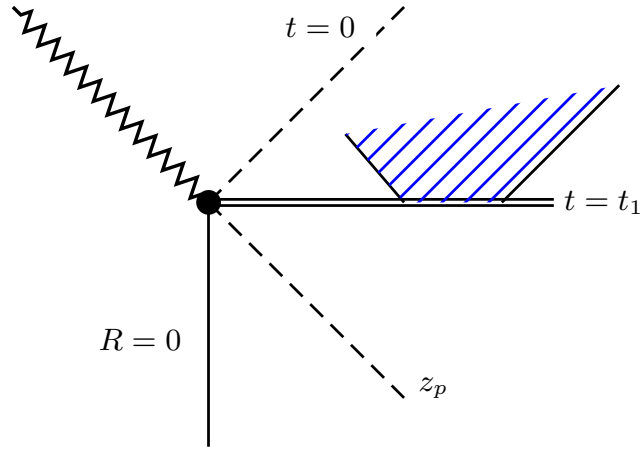}
\end{center}
\caption{The spread of the support of $\vec{u}$. We illustrate the spread of the compact support of $\vec{u}$ from the initial data surface $t_{1}$ to the Cauchy horizon. The growth of the support is bounded by in- and outgoing null rays starting from the initial data surface. 
}
\label{Fig2}
\end{figure}

\begin{lemma}
\label{lemspread}
Let the support of $\vec{u}$ be defined as $\mbox{vol}[\vec{u}](z)=p^{+}(z) - p^{-}(z)$ where 
\begin{equation*} 
p^{+}(z)=\sup_{p \in \mathbb{R}} \{p: \vec{u}(z,p) \neq 0 \},
\end{equation*}
\begin{equation*} 
p^{-}(z)=\inf_{p \in \mathbb{R}} \{p: \vec{u}(z,p) \neq 0 \}. 
\end{equation*}
Then neglecting terms which remain finite on the Cauchy horizon, the support of $\vec{u}$ grows as 
\begin{equation*} 
\mbox{vol}[\vec{u}](z) \sim -\ln|t|, \qquad \qquad t \rightarrow 0.  
\end{equation*}
\end{lemma}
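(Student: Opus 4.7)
The plan is to exploit the method of characteristics already developed for Lemma \ref{lemubound} and to show that of the five characteristic families associated with the diagonal symbol $A(t)$, only the fifth one spreads at an unbounded rate as $t\to 0$, and that this rate is precisely logarithmic. Since compact support of the initial data propagates along characteristics (and is spread by the source/coupling terms via integration against these same characteristics), the worst-case characteristic will control $\mathrm{vol}[\vec{u}](z)$.

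First I would revisit the characteristic shifts $\pi_{i}(t)=-\int_{t}^{t_{1}}\tfrac{a_{i}(\tau)}{\tau}\,d\tau$. From the explicit form of $\tilde{A}(z)$ and $A(t)=t\tilde{A}(z)$, one reads off $a_{1}(t)/t=a_{2}(t)/t=0$, so $\pi_{1}\equiv\pi_{2}\equiv 0$; and $a_{3}(t)/t=-1/z$, $a_{4}(t)/t=-(z-S+z\dot{S})^{-1}$, both of which are bounded and continuous on a neighbourhood of $z=z_{c}$ (for the fourth entry one uses the Cauchy horizon condition (\ref{CH}) to see that $z-S+z\dot{S}\to 2z_{c}\neq 0$ at $z=z_{c}$). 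Hence $\pi_{3}(t)$ and $\pi_{4}(t)$ extend continuously to $t=0$ and remain finite. For the fifth row, however, $a_{5}(t)/t=-1/h(z)$ where by the Taylor expansion already exhibited in Section \ref{sec:reduction} one has $h(z)=(1-\dot{f}(z_{c}))t+O(t^{2})$. Consequently
\begin{equation*}
\pi_{5}(t) \;=\; -\int_{t}^{t_{1}}\frac{d\tau}{h(z(\tau))} \;=\; \frac{-\ln|t|}{1-\dot{f}(z_{c})}+O(1), \qquad t\to 0.
\end{equation*}

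Next I would bound the support of $\vec{u}(t,\cdot)$. From the integral representation (\ref{usoln}) along the $i$-th characteristic,
\begin{equation*}
u_{i}(t,p) \;=\; \rme^{-\xi_{i}(t)}u_{i}^{(0)}(p-\pi_{i}(t)) - \rme^{\xi_{i}(t)}\int_{t}^{t_{1}}\frac{\rme^{\xi_{i}(\tau)}}{\tau}S_{i}(\tau,p+\pi_{i}(\tau)-\pi_{i}(t))\,d\tau,
\end{equation*}
the support of the homogeneous term is $\mathrm{supp}(u_{i}^{(0)})+\pi_{i}(t)$, while the source contribution at $p$ is nonzero only if there exists $\tau\in[t,t_{1}]$ with $p+\pi_{i}(\tau)-\pi_{i}(t)\in\mathrm{supp}(g)\cup\bigcup_{j\neq i}\mathrm{supp}(u_{j}(\tau,\cdot))$. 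An inductive argument on the Picard iterates (or equivalently, on the length of chains of $c_{ij}$-couplings) then shows that $\mathrm{supp}(u_{i}(t,\cdot))$ is contained in a fixed compact set translated by at most $\max_{j}\sup_{\tau\in[t,t_{1}]}|\pi_{j}(\tau)-\pi_{j}(t)|$. By the computation of the $\pi_{i}$ above, this maximum is $|\pi_{5}(t)|+O(1)$, giving the upper bound $p^{+}(t)-p^{-}(t)\le -\frac{2}{1-\dot{f}(z_{c})}\ln|t|+O(1)$.

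Finally, for the matching lower bound I would pick initial data with $d_{5}(\vec{u}^{(0)})\neq 0$ (or alternatively nontrivial $g$) and observe that the homogeneous term in $u_{5}$ transports the compact support of $u_{5}^{(0)}$ rigidly by $\pi_{5}(t)$, while the exponential factor $\rme^{-\xi_{5}(t)}$ is nonvanishing on $(0,t_{1}]$; hence the extremal points of $\mathrm{supp}(u_{5}(t,\cdot))$ really do sit at distance $|\pi_{5}(t)|+O(1)$ from those of $\mathrm{supp}(u_{5}^{(0)})$, giving $\mathrm{vol}[\vec{u}](z)\gtrsim -\ln|t|$. Combining the two bounds yields the stated asymptotic. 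The main obstacle I anticipate is the iterative support bound in the presence of the off-diagonal coupling matrix $c_{ij}(t)$, where one must check that repeated feedback through the source cannot inflate supports faster than the worst single-step shift; this is handled by noting that all shifts $\pi_{i}(\tau)-\pi_{i}(t)$ are dominated by $|\pi_{5}(t)|+O(1)$ uniformly in $\tau\in[t,t_{1}]$, so the bound is stable under iteration.
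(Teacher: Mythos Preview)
Your proposal is correct and identifies the same mechanism as the paper --- the outgoing null characteristic (your $\pi_5$, the paper's $p^+$) is the one whose shift diverges logarithmically at the Cauchy horizon, since $a_5(t)/t = -1/h(z)$ and $h(z)\sim (1-\dot f(z_c))t$. Indeed your $\pi_5$ is literally the paper's integral $\int_z^{z_0}\frac{dz'}{z'+e^{\nu/2}}$.

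The routes differ in two respects. First, to pass from characteristic shifts to a bound on $\mathrm{supp}\,\vec u$, the paper simply invokes finite speed of propagation for the symmetric hyperbolic system: the support at time $z$ is contained in the causal domain of influence of the initial support, whose boundary is given by the in- and outgoing radial null rays. You instead reconstruct this via the integral representation (\ref{usoln}) and an iterative support-tracking argument through the off-diagonal couplings $c_{ij}$; this is effectively a re-derivation of the domain-of-dependence property and is more laborious, though your observation that $\sup_{\tau\in[t,t_1]}|\pi_j(\tau)-\pi_j(t)|\le |\pi_5(t)|+O(1)$ is exactly what makes the iteration close. Second, to extract the $-\ln|t|$ behaviour the paper computes the null integral in closed form via the substitution $y=S^{1/2}(z)$, obtaining a sum of logarithms in the roots of a quartic and then identifying the term that blows up at $z=z_c$; your Taylor expansion of $h(z)$ at the horizon is a cleaner way to get the leading asymptotic.

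One remark: the paper does not prove (and does not need) a matching lower bound on $\mathrm{vol}[\vec u]$ --- only the upper bound $\mathrm{vol}[\vec u]\lesssim -\ln|t|$ is used in the subsequent $L^q$-embedding step (Theorem \ref{Thm3}), so your final paragraph, while plausible for generic data, is additional to what the paper establishes.
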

\begin{proof}
Define $\mbox{vol}[\vec{u}](z)=p^{+}(z) - p^{-}(z)$ where 
\begin{equation*} 
p^{+}(z)=\sup_{p \in \mathbb{R}} \{p: \vec{u}(z,p) \neq 0 \},
\end{equation*}
\begin{equation*} 
p^{-}(z)=\inf_{p \in \mathbb{R}} \{p: \vec{u}(z,p) \neq 0 \}. 
\end{equation*}
The support of $\vec{u}$ at some time $z$, $\mbox{supp}[\vec{u}](z)$, will obey 
\begin{equation*} 
\mbox{supp}[\vec{u}](z) \subseteq [p^{-}(z), p^{+}(z)]. 
\end{equation*}
We define the $L^{q}$-norm as usual,
\begin{equation*}
|| \vec{u} ||_{q} = \left[ \int_{\mathbb{R}} |\vec{u}(z,p)|^q \, dp \right]^{1/q} = \left[ \int_{p^{-}(z)}^{p^{+}(z)} |\vec{u}(z,p)|^q \, dp \right]^{1/q},
\end{equation*}
for $1 \leq q < \infty$. $\vec{u}$ has initially compact support which implies that $\mbox{vol}[\vec{u}](z_{0}) = p^{+}(z_{0}) - p^{-}(z_{0}) < \infty$, where $z_{0}$ is the initial data surface. This initial support must grow in a causal manner; that is, the growth of $p^{\pm}(z)$ must be bounded by the in- and outgoing null directions. From the metric (\ref{metric_tr}), the in- and outgoing null directions are described by the relation $d t / d r = \pm \rme^{\nu/2} $, which in $(z,p)$ coordinates becomes
\begin{equation*} 
\frac{d z}{d p} = -(z \pm \rme^{\nu/2}),
\end{equation*}
which results in 
\begin{equation} \label{nastyint}
p^{\pm}(z)=p^{\pm}(z_{0}) + \int_{z}^{z_{0}} dz (z \pm \rme^{\nu/2})^{-1}. 
\end{equation}
In handling this integral, we first substitute $\rme^{\nu/2}=S(z) -z \dot{S}(z)$ (see (\ref{expform})), and then make the coordinate change $y=S^{1/2}(z)$. The resulting integral can be performed and results in
\begin{equation*} 
p^{+}(z)=p^{+}(z_{0}) + 3 \sum_{i=1}^4 f_{+}(y^{+}_{i}) \ln|(1+a z)^{1/3} - y^{+}_{i}|,
\end{equation*}
\begin{equation*} 
p^{-}(z)=p^{-}(z_{0}) + 3 \sum_{i=1}^4 f_{-}(y^{-}_{i}) \ln|(1+a z)^{1/3} - y^{-}_{i}|,
\end{equation*}
where 
\begin{equation*} 
f_{\pm}(k)=\frac{k^3}{-1 \pm k^2 + 4 k^3},
\end{equation*}
and $y^{\pm}_{i}$ is the $i^{th}$ root of $\pm 2 a -3 k \pm a k^3 + 3 k^4 =0 $. Therefore, the volume of the support of $\vec{u}$ grows as 
\begin{eqnarray} \label{volXgrowth}
\mbox{vol}[\vec{u}](z)=p^{+}(z_{0}) - p^{-}(z_{0}) + \\ \nonumber
\qquad 3 \sum_{i=1}^4 \left( f_{+}(y^{+}_{i}) \ln|(1+a z)^{1/3} - y^{+}_{i}| - f_{-}(y^{-}_{i}) \ln|(1+a z)^{1/3} - y^{-}_{i}| \right). 
\end{eqnarray}
Now, since in (\ref{nastyint}) there is a divergence at the Cauchy horizon when $z_{c}=-\rme^{\nu/2}(z_{c})$, the above result must contain a Cauchy horizon divergence. Using the coordinate transformation $y=S(z)^{1/2}$ it is possible to show that in terms of $y$, the Cauchy horizon is at that $y$ for which $(y^3-1)(3y-2a) + 3ay^3=0$, which is precisely where $(1+az)^{1/3} = -z(1+\frac{1}{3} az)$. When this holds, the first log term given above diverges. So in (\ref{volXgrowth}) we have one finite term describing the initially finite volume, a second term which diverges on the Cauchy horizon and a third term which is finite everywhere. So overall, if we ignore terms which remain finite as $t \rightarrow 0$, then we can describe the behaviour of the volume as 
\begin{equation*} 
\mbox{vol}[\vec{u}](z) \sim -\ln|t|, \qquad \qquad t \rightarrow 0. 
\end{equation*}
\hfill$\square$
\end{proof}
We will next need the $L^q$-embedding theorem \cite{Adams}, which we state as follows: 
\begin{theorem} \textbf{$L^q$ Embedding Theorem}: Suppose $\Omega \subseteq \mathbb{R}^n$ satisfies $vol(\Omega) = \int_{\Omega} 1 \, dx < \infty$. For $1\leq p \leq q \leq \infty$, if $u \in L^q(\Omega)$, then $u \in L^p(\Omega)$ and 
\begin{equation} 
\label{embedd}
|| u ||_{p} \leq [ \mbox{vol}(\Omega) ]^{\frac{1}{p} - \frac{1}{q}} ||u ||_{q}. 
\end{equation}
\end{theorem}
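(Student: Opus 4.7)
The plan is to reduce the embedding inequality to a direct application of H\"older's inequality, with separate handling for the boundary case $q=\infty$. The case $p=q$ is trivial since then $1/p - 1/q = 0$ and the inequality reduces to $\|u\|_p \leq \|u\|_p$. For $q=\infty$ and $p<\infty$, I would note that $|u(x)| \leq \|u\|_\infty$ almost everywhere on $\Omega$, so
\begin{equation*}
\|u\|_p^p = \int_\Omega |u|^p \, dx \leq \|u\|_\infty^p \cdot \mbox{vol}(\Omega),
\end{equation*}
and extracting a $p$-th root yields $\|u\|_p \leq [\mbox{vol}(\Omega)]^{1/p} \|u\|_\infty$, which is the stated bound with $1/q=0$.

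The main case is $1 \leq p < q < \infty$. First I would verify that $u \in L^q(\Omega)$ combined with $\mbox{vol}(\Omega)<\infty$ gives $u \in L^p(\Omega)$; this will in fact follow from the quantitative estimate. The key step is to write $|u|^p = |u|^p \cdot 1$ and apply H\"older's inequality with the conjugate exponents $r = q/p$ and $r' = q/(q-p)$, which satisfy $1/r + 1/r' = 1$ since $p < q$. This yields
\begin{equation*}
\int_\Omega |u|^p \, dx \leq \left( \int_\Omega |u|^{pr} \, dx \right)^{1/r} \left( \int_\Omega 1^{r'} \, dx \right)^{1/r'} = \|u\|_q^p \cdot [\mbox{vol}(\Omega)]^{(q-p)/q}.
\end{equation*}
Taking a $p$-th root of both sides and noting that $(q-p)/(pq) = 1/p - 1/q$ produces precisely the inequality (\ref{embedd}). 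Because the right-hand side is finite by hypothesis, this also confirms the membership $u \in L^p(\Omega)$.

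There is no genuine obstacle here, as the result is classical; the only thing requiring care is the bookkeeping of exponents in H\"older's inequality and the correct interpretation when $q=\infty$ (where $1/q$ is read as zero). In the application to Section \ref{sec:lqblowup}, the relevant domain $\Omega$ will be the support of $\vec{u}(t,\cdot)$, which by Lemma \ref{lemspread} has finite volume growing only logarithmically as $t \to 0$, so the hypothesis $\mbox{vol}(\Omega) < \infty$ is met along the approach to the Cauchy horizon.
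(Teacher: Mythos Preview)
Your proof is correct and follows the standard route via H\"older's inequality with exponents $q/p$ and $q/(q-p)$, together with the appropriate treatment of the endpoint $q=\infty$. The paper does not supply its own proof of this statement; it is quoted as a classical result with a citation to Adams and Fournier \cite{Adams}, so there is nothing to compare against beyond noting that your argument is exactly the textbook one.
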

We are now in a position to show that the $L^q$-norm of $\vec{u}$ diverges. 
\begin{theorem}
\label{Thm3}
Given a choice of initial data $\vec{u}^{(0)} \in \mathcal{S}''$, the solution $\vec{u}$ of (\ref{5dsys}) corresponding to this data displays a blow-up of its $L^q$-norm for $1 \leq q \leq \infty$, that is, 
\begin{equation*} 
\lim_{t \rightarrow 0} || \vec{u} ||_{q} = \infty. 
\end{equation*}
\end{theorem}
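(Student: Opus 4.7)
The plan is to leverage the $L^1$ blow-up established in Lemma \ref{Lem2} together with the logarithmic growth of the support of $\vec{u}$ from Lemma \ref{lemspread}, and transfer this information to arbitrary $q$ via the $L^q$ embedding theorem. The essential point is that the $L^1$ divergence is polynomial in $t$ (at rate $t^{-c}$ with $c > 3$), whereas the volume of the support grows only like $-\ln|t|$, so the polynomial divergence cannot be accounted for by the spreading of the support.

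First I would fix $\vec{u}^{(0)} \in \mathcal{S}''$ and decompose the corresponding averaged solution as $\bar{u}(t) = \sum_{i=1}^5 d_i \bar{\phi}_i(t)$, noting that by definition of $\mathcal{S}''$ we have $d_5 \neq 0$. From the explicit form of $\bar{\phi}_5$ and the asymptotic analysis of Section \ref{fundbeh}, it follows that $|\bar{u}(t)| = |d_5| t^{-c} (1 + o(1))$ as $t \to 0$. Combined with the pointwise estimate $\|\vec{u}(t)\|_1 \geq |\bar{u}(t)|$ used in the proof of Lemma \ref{Lem2}, this yields a concrete lower bound of the form $\|\vec{u}(t)\|_1 \geq K t^{-c}$ for some $K > 0$ and all $t$ sufficiently close to $0$.

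Next I would apply the $L^q$ embedding theorem on the domain $\Omega_t := \mathrm{supp}[\vec{u}](t) \subseteq \mathbb{R}$, which has finite measure equal to $\mathrm{vol}[\vec{u}](t)$. For $1 \leq q \leq \infty$, the inequality (\ref{embedd}) with $p = 1$ gives
\begin{equation*}
\|\vec{u}(t)\|_1 \leq \bigl(\mathrm{vol}[\vec{u}](t)\bigr)^{1 - 1/q} \|\vec{u}(t)\|_q,
\end{equation*}
so that
\begin{equation*}
\|\vec{u}(t)\|_q \geq \frac{\|\vec{u}(t)\|_1}{\bigl(\mathrm{vol}[\vec{u}](t)\bigr)^{1 - 1/q}}.
\end{equation*}
Inserting the lower bound $\|\vec{u}(t)\|_1 \geq K t^{-c}$ and the asymptotic $\mathrm{vol}[\vec{u}](t) \sim -\ln|t|$ from Lemma \ref{lemspread}, we obtain
\begin{equation*}
\|\vec{u}(t)\|_q \geq \frac{K t^{-c}}{\bigl(K' |\ln t|\bigr)^{1 - 1/q}}
\end{equation*}
for some constants $K, K' > 0$ and all $t$ sufficiently close to $0$. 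Since $c > 3 > 0$, the numerator diverges polynomially while the denominator diverges only as a power of a logarithm, so the right-hand side tends to $+\infty$ as $t \to 0^+$. This establishes the claim uniformly in $q \in [1, \infty]$.

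The only step that requires any care is ensuring the embedding theorem is applicable, which reduces to verifying that $\mathrm{vol}[\vec{u}](t)$ is finite for each $t \in (0, t_1]$; this follows from Lemma \ref{lemspread} since the support of $\vec{u}(t, \cdot)$ remains confined to a bounded $p$-interval growing at most like $-\ln|t|$, a consequence of the causal propagation of the compactly supported initial data together with the finite characteristic speeds of the symmetric hyperbolic system (\ref{5dsys}). No further obstacle arises, and the argument is essentially a clean comparison of rates.
\hfill$\square$
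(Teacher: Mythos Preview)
Your proof is correct and follows essentially the same approach as the paper: apply the $L^q$ embedding theorem with $p=1$, combine the lower bound $\|\vec{u}\|_1 \geq |\bar{u}| \sim t^{-c}$ from Lemma~\ref{Lem2} with the logarithmic volume growth from Lemma~\ref{lemspread}, and compare rates. The paper's version is slightly terser, but the argument is identical.
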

\begin{proof} 
We set $p=1$ in (\ref{embedd}) . This produces 
\begin{equation*} || \vec{u} ||_{1} \leq [ \mbox{vol}[\vec{u}](z) ]^{1 - \frac{1}{q}} ||\vec{u} ||_{q}. 
\end{equation*}
We know from (\ref{L1blowup}) that $|| \vec{u} ||_{1} \geq |\bar{u}(z)| \sim t^{-c} $ so 
\begin{equation} \label{tcdependence}
\frac{t^{-c}}{(\mbox{vol}[\vec{u}](z))^{1-\frac{1}{q}}} \leq || \vec{u} ||_{q},
\end{equation}
Now $\lim_{t \rightarrow 0} t^{c} (\ln(t))^{1-\frac{1}{q}}$ = 0, since $c > 0$. Therefore, we can conclude that 
\begin{equation*} 
\lim_{t \rightarrow 0} || \vec{u} ||_{q} = \infty. 
\end{equation*}
So the $L^q$-norm of the solutions with initial data in $\mathcal{S}''$ blows up as $t \rightarrow 0$. 
\hfill$\square$
\end{proof}
We note that this behaviour will also hold for the four dimensional free evolution system, as the constraint is propagated (see Lemma \ref{Lem1}). We next prove two theorems which together show that this divergent behaviour is generic with respect to the initial data. In particular, we can show that the set of initial data which corresponds to solutions with divergent behaviour, $\mathcal{S}''$ , is open and dense in the set of all initial data, and that it has codimension 1 in $\mathcal{S}$. 
\begin{theorem} 
\label{Lem3}
The quotient space of $\mathcal{S}'$ in $\mathcal{S}$, $\hat{\mathcal{S}}=\mathcal{S} / \mathcal{S}'$, has codimension one in $\mathcal{S}$. 
\end{theorem}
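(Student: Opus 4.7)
The plan is to identify $\mathcal{S}'$ as the kernel of an explicit non-trivial linear functional $d_5 : \mathcal{S} \to \mathbb{R}$, from which codimension one will follow immediately from the first isomorphism theorem for vector spaces.

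First I would verify that $d_5$ is well defined. Given $\vec{u}^{(0)} \in \mathcal{S} = L^1(\mathbb{R},\mathbb{R}^5)$, the integral $\bar{u}_0 = \int_{\mathbb{R}} \vec{u}^{(0)} \, dp$ exists in $\mathbb{R}^5$. The vectors $\{\bar{\phi}_i(t_1)\}_{i=1}^{5}$ form a basis of $\mathbb{R}^5$: indeed, the $\bar{\phi}_i$ were constructed in Theorem \ref{Thm2} as five linearly independent solutions of the linear ODE (\ref{ubareqn}) on $(0,t_1]$, so that the $5\times 5$ matrix with columns $\bar{\phi}_i(t_1)$ is a non-singular fundamental matrix at $t_1 > 0$. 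Hence the expansion $\bar{u}_0 = \sum_{i=1}^5 d_i \bar{\phi}_i(t_1)$ uniquely determines the coefficients $d_i$, and in particular $d_5(\vec{u}^{(0)})$ is unambiguous.

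Next I would check linearity, which is immediate: the integration map $I : L^1(\mathbb{R},\mathbb{R}^5) \to \mathbb{R}^5$, $\vec{u}^{(0)} \mapsto \int \vec{u}^{(0)}\, dp$, is linear, and extracting the fifth coordinate in a fixed basis of $\mathbb{R}^5$ is linear on $\mathbb{R}^5$; so $d_5$ is a composition of linear maps. Then I would exhibit a specific initial datum with $d_5 \neq 0$ to show non-triviality: choose any $\phi \in C_0^\infty(\mathbb{R})$ with $\int_{\mathbb{R}} \phi \, dp = 1$ and set $\vec{u}^{(0)}(p) := \bar{\phi}_5(t_1)\,\phi(p) \in \mathcal{S}$. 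Then $\bar{u}_0 = \bar{\phi}_5(t_1)$, and uniqueness of the expansion forces $d_5(\vec{u}^{(0)}) = 1$. Hence $d_5 : \mathcal{S} \to \mathbb{R}$ is a surjective linear functional.

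Finally, by the definition of $\mathcal{S}'$, we have $\mathcal{S}' = \ker(d_5)$. The first isomorphism theorem then yields
\begin{equation*}
\hat{\mathcal{S}} = \mathcal{S}/\mathcal{S}' \cong \mathrm{Im}(d_5) = \mathbb{R},
\end{equation*}
which is one-dimensional, establishing the claimed codimension. There is no real obstacle in this argument; the only point that needs a moment of care is the claim that the $\bar{\phi}_i(t_1)$ are linearly independent in $\mathbb{R}^5$, and this is guaranteed by their construction as a fundamental system of solutions for (\ref{ubareqn}) on $(0,t_1]$.
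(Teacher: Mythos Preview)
Your proof is correct and takes essentially the same approach as the paper: both arguments hinge on the observation that $\mathcal{S}'$ is exactly the kernel of the linear functional $d_5$ determined by expanding $\bar{u}_0$ in the basis $\{\bar{\phi}_i(t_1)\}$. The paper verifies codimension one by showing directly that every $\vec{X}\in\mathcal{S}$ decomposes uniquely as $\alpha\vec{X}_{(1)}+\vec{X}'$ with $\vec{X}'\in\mathcal{S}'$, whereas you package the same content via the first isomorphism theorem; your version is a bit cleaner in that it explicitly checks surjectivity of $d_5$ (the paper takes the existence of some $\vec{X}_{(1)}$ with $d_5\neq 0$ for granted).
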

\begin{proof} 
A quotient space $\hat{\mathcal{S}}=\mathcal{S}/\mathcal{S}^\prime$ has dimension $n$ if and only if there exist $n$ vectors $\vec{X}_{(1)},\dots,\vec{X}_{(n)}$ linearly independent relative to $\mathcal{S}^\prime$ such that for every $\vec{X}\in\mathcal{S}$, there exist unique numbers $c_1,\dots,c_n$ and a unique $\vec{X}^\prime\in\mathcal{S}^\prime$ such that $\vec{X}=\sum_{i=1}^nc_i\vec{X}_i+\vec{X}^\prime$ \cite{Janich}. So let $\vec{X}\in\mathcal{S}$ and let $\vec{X}_{(1)}$ be any element of $\hat{\mathcal{S}}$. To prove the result, we show that there is a unique value of $\alpha$ for which $\vec{X}^{(\alpha)}=\vec{X}-\alpha\vec{X}_{(1)}\in\mathcal{S}^\prime$. Integrating over the real line and exploiting earlier notation, we have
\begin{eqnarray*}
\bar{x}^{(\alpha)}&=&\bar{x}-\alpha\bar{x}_{(1)}\\
&=&\sum_{i=1}^5 (d_i(\vec{X})-\alpha d_i(\vec{X}_{(1)}))\bar{\phi}(t_1).
\end{eqnarray*}
We have $\vec{X}^{(\alpha)}\in\mathcal{S}^\prime$ if and only if $d_5(\vec{X})-\alpha d_5(\vec{X}_{(1)})=0$. Since $d_5(\vec{X}_{(1)})\neq0$ - as $\vec{X}_{(1)} \in \hat{\mathcal{S}}$ - this occurs for a unique value of $\alpha$.

\hfill$\square$
\end{proof}
\begin{theorem} 
\label{Lem4}
$\mathcal{S}''$ is dense and open in $\mathcal{S}$ in the topology induced by the $L^1$-norm. 
\end{theorem}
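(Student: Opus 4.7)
The plan is to recognise $d_5$ as a bounded linear functional on $\mathcal{S}=L^1(\mathbb{R},\mathbb{R}^5)$, so that $\mathcal{S}'=\ker d_5$ is a closed hyperplane and its complement $\mathcal{S}''$ is automatically open; density will then follow from a one-line perturbation argument using any fixed element on which $d_5$ does not vanish.

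First I would observe that the averaging map $\vec{u}^{(0)}\mapsto \bar{u}_0:=\int_{\mathbb{R}}\vec{u}^{(0)}(p)\,dp$ is a bounded linear operator from $\mathcal{S}$ to $\mathbb{R}^5$, with $|\bar{u}_0|\le \|\vec{u}^{(0)}\|_1$. Because the five vectors $\bar{\phi}_i(t_1)$ are linearly independent columns of a fundamental matrix evaluated at the regular point $t_1$, they form a basis of $\mathbb{R}^5$. Expanding $\bar{u}_0$ in this basis produces five scalar coefficients $d_i(\vec{u}^{(0)})$ which are linear functionals of $\bar{u}_0$. In particular there is a fixed vector $\vec{\ell}\in\mathbb{R}^5$ for which
\begin{equation*}
d_5(\vec{u}^{(0)})=\int_{\mathbb{R}}\vec{\ell}\cdot\vec{u}^{(0)}(p)\,dp,\qquad |d_5(\vec{u}^{(0)})|\le\|\vec{\ell}\|_\infty\|\vec{u}^{(0)}\|_1,
\end{equation*}
so $d_5$ is continuous on $\mathcal{S}$. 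Hence $\mathcal{S}'=d_5^{-1}(\{0\})$ is closed and $\mathcal{S}''=\mathcal{S}\setminus\mathcal{S}'$ is open.

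For density, I would fix any $\vec{v}\in\mathcal{S}$ with $d_5(\vec{v})\neq 0$; such a $\vec{v}$ exists because Theorem \ref{Lem3} shows $\mathcal{S}'$ is a proper subspace (alternatively, any smooth compactly supported $\vec{v}$ whose integral aligns with the appropriate dual basis vector to $\bar{\phi}_5(t_1)$ works). Given $\vec{u}^{(0)}\in\mathcal{S}$ and $\epsilon>0$, set
\begin{equation*}
\vec{u}^{(0)}_\delta:=\vec{u}^{(0)}+\delta\,\vec{v},\qquad d_5(\vec{u}^{(0)}_\delta)=d_5(\vec{u}^{(0)})+\delta\,d_5(\vec{v}).
\end{equation*}
If $\vec{u}^{(0)}\in\mathcal{S}''$ there is nothing to prove. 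If $\vec{u}^{(0)}\in\mathcal{S}'$, then $d_5(\vec{u}^{(0)}_\delta)=\delta\,d_5(\vec{v})\neq 0$ for every $\delta\neq 0$, so $\vec{u}^{(0)}_\delta\in\mathcal{S}''$; choosing $0<\delta<\epsilon/\|\vec{v}\|_1$ makes $\|\vec{u}^{(0)}_\delta-\vec{u}^{(0)}\|_1<\epsilon$, establishing density.

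There is no real obstacle here: once one recognises that $d_5$ factors through the continuous averaging map and the linear coordinate functional on $\mathbb{R}^5$, everything reduces to the standard fact that the complement of a closed hyperplane in a normed space is open and dense. The only minor point to check carefully is the continuity estimate for $d_5$, which follows immediately from the $L^1$-bound on the integral.
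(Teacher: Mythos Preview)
Your proof is correct and follows essentially the same idea as the paper's, but packaged more cleanly. The paper proves density by an explicit perturbation in the direction $\bar{\phi}_5(t_1)$ (using a bump function of unit $L^1$-mass), and proves openness by a direct $\epsilon$-argument showing that the coefficients $d_i$ vary continuously with the $L^1$-norm of the initial data. You instead identify $d_5$ up front as a bounded linear functional on $\mathcal{S}$---the composition of the averaging map $\vec{u}^{(0)}\mapsto\int\vec{u}^{(0)}\,dp$ with a coordinate projection in the basis $\{\bar{\phi}_i(t_1)\}$---so that $\mathcal{S}'=\ker d_5$ is automatically a closed hyperplane, and openness and density of its complement follow from standard normed-space facts. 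This buys you a shorter, more conceptual argument; the paper's version is more explicit and self-contained for a reader unfamiliar with the ``complement of a closed hyperplane'' idiom. Both are valid.
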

\begin{proof} To show that $\mathcal{S}''$ is dense in $\mathcal{S}$, we must show that for any $\vec{X} \in \mathcal{S}$ and any $\epsilon >0$, there exists some $\vec{X}'' \in \mathcal{S}''$ such that the $L^1$ distance between $\vec{X}$ and $\vec{X}''$ is less than $\epsilon$, that is,
\begin{equation}  \label{densecond}
|| \vec{X}-\vec{X}'' ||_{1} < \epsilon. 
\end{equation}
First, suppose that $\vec{X} \in \mathcal{S}''$. Then (\ref{densecond}) is trivially satisfied by taking $\vec{X}'' = \vec{X} \in \mathcal{S}''$. We therefore assume that $\vec{X} \in \mathcal{S}'$. Now consider some $\psi(p) \in C_{0}^{\infty}(\mathbb{R}, \mathbb{R})$, such that $\psi(p) \geq 0$ and $\int \psi(p) \, dp =1$. We then set
\begin{equation*} 
\vec{X}''=\vec{X}+\frac{\epsilon}{2} \psi(p) \frac{\bar{\phi}_{5}(t_{1})}{|\bar{\phi}_{5}(t_{1})|},
\end{equation*}
where $|\cdot|$ indicates the maximum vector norm in $\mathbb{R}^5$, that is $|\bar{\phi}_{5}(t_{1})|=\mbox{max}_{i} |(\bar{\phi}_{5}(t_{1}))_{i}|$ \footnote{We note that if $\bar{\phi}_{5}(t_{1}) = 0$, we can replace $\bar{\phi}_{5}$ with $\hat{\bar{\phi}}_{5} = \bar{\phi}_{5}+\sum_{i=1}^{4} c_{i} \bar{\phi}_{i}$, for some constants $c_{i}$, chosen to guarantee that $\hat{\bar{\phi}}_{5}(t_{1}) \neq 0$. This does not affect the definition of $\mathcal{S}'$ or $\mathcal{S}''$. }. Then
\begin{equation*} 
|| \vec{X}-\vec{X}'' ||_{1} =\int \frac{\epsilon}{2} \, \vec{\psi}(p) \, \left| \frac{\bar{\phi}_{5}(t_{1})}{|\bar{\phi}_{5}(t_{1})|}  \right| \, dp = \frac{\epsilon}{2} < \epsilon. 
\end{equation*}
So we can explicitly construct the $\vec{X}''$ required to satisfy (\ref{densecond}), and thus $\mathcal{S}''$ is dense in $\mathcal{S}$. 

To show that $\mathcal{S}''$ is open in $\mathcal{S}$, we must show that for all $\vec{X}'' \in \mathcal{S}''$, there exists an $\epsilon > 0$ such that $B_{\epsilon}^{1}(\vec{X}'') \subset \mathcal{S}''$, where $B_{\epsilon}^{1}(\vec{X}'')$ indicates a ball of radius $\epsilon$ in the $L^{1}$-norm centred at $\vec{X}''$. We fix $\vec{X}'' \in \mathcal{S}''$ and let $\vec{X} \in B_{\epsilon}^{1}(\vec{X}'') \subseteq \mathcal{S}$. Then 
\begin{equation}
\label{inball}
|| \vec{X}-\vec{X}''||_{1} = \int_{\mathbb{R}} |\vec{X}-\vec{X}''| \, dp < \epsilon. 
\end{equation}
There exists unique constants $c_{i}$ and $d_{i}$ (for $i=1, \ldots, 5$) such that 
\begin{eqnarray*}
\bar{x}''=d_{1} \bar{\phi}_{1} + \ldots + d_{5} \bar{\phi}_{5}, \\ 
\bar{x} = c_{1} \bar{\phi}_{1} + \ldots + c_{5} \bar{\phi}_{5}. 
\end{eqnarray*}
It follows from (\ref{inball}) that $|c_{i} - d_{i}| < \alpha_{i}\epsilon$, for some $\alpha_{i}$ depending on $\bar{\phi}_{1-5}(t_{1})$. Then by making $\epsilon$ arbitrarily small, we can make the $d_{i}$ arbitrarily close to the $c_{i}$. We know that $d_{5} \neq 0$ since $\vec{X}'' \in \mathcal{S}''$; therefore $c_{5} \neq 0$ which implies that $\vec{X} \in \mathcal{S}''$. 

\hfill$\square$
\end{proof}
These two theorems, coupled with Theorem \ref{Thm3}, suffice to show that the averaged form of the solution (\ref{ubar}) displays a generic divergence of its $L^q$-norm, where the term generic refers to the open, dense subset of the initial data which lead to this divergence. 

\begin{remark} 
\label{remark1}
{\em We note that if we define $\vec{x}:=t^c \vec{u}$ and let $\bar{x} := \int_{\mathbb{R}} \vec{x} \, dp$, then by multiplying (\ref{ubarsoln}) by a factor of $t^c$ and taking the limit $t \rightarrow 0$, we find 
\begin{equation}
\label{xbarnonzero}
\lim_{t \rightarrow 0} \bar{x} \neq 0. 
\end{equation}
This will be used in the proof of Theorem \ref{Thm10}. The results of Theorems \ref{Lem3} and \ref{Lem4} tell us that the set of initial data which gives rise to (\ref{xbarnonzero}) is open and dense in $L^1(\mathbb{R}, \mathbb{R}^5)$. }
\end{remark}

We note that from Theorem \ref{Thm3} alone we cannot conclude that the perturbation itself diverges on the Cauchy horizon. The reason for this is that one can easily imagine a function which has finite pointwise behaviour, but a diverging $L^q$-norm arising from the spatial integration in (\ref{ubar}). For example, a constant function is clearly pointwise finite, but has a diverging $L^q$-norm. In the next section, we will determine the pointwise behaviour of the perturbation as the Cauchy horizon is approached and show that it diverges with a characteristic power of $t^{-c}$. 
\section{Pointwise Behaviour at the Cauchy horizon}
\label{sec:chdivbehaviour}
So far we have determined the behaviour of the averaged perturbation $\bar{u}$. In this section, we aim to show that the vector $\vec{u}$ has behaviour similar to that of $\bar{u}$, that is, $O(1)$ behaviour in the first four components, and $O(t^{-c})$ behaviour in the last component. In this section, we will use Theorem \ref{Thm1} to provide us with smooth, compactly supported solutions $\vec{u}$ to (\ref{5dsys}), with a choice of initial data $\vec{u}(t_{1}, p) \in C_{0}^{\infty}(\mathbb{R}, \mathbb{R}^5)$. 

We begin by returning to the five dimensional symmetric hyperbolic system
\begin{equation} 
\label{5dsys2}
t\frac{\partial \vec{u}}{\partial t} + A(t) \frac{\partial \vec{u}}{\partial p} + C(t) \vec{u} = \vec{\Sigma}(t,p). 
\end{equation}
Our strategy is to work with a scaled form of $\vec{u}$, namely $\vec{x}:=t^{c} \vec{u}$. We can write an equation for $\vec{x}$ by using (\ref{5dsys2}). We find that 
\begin{equation} \label{xsys}
t\frac{\partial \vec{x}}{\partial t} + A(t) \frac{\partial \vec{x}}{\partial p} + (C(t)-c \mathbb{I}) \vec{x} = t^{c}\vec{\Sigma}(t,p). 
\end{equation}
Before presenting the results which determine the behaviour of $\vec{x}$ at the Cauchy horizon, we present a summary of various steps involved. 

\begin{itemize}
\item We begin by showing that $\vec{x}$ has a bounded energy throughout its evolution, including on the Cauchy horizon. Initially, we introduce the first energy norm, $E_{1}[\vec{x}](t)$, which is simply the $L^2$-norm of $\vec{x}$. In Theorem \ref{Thm4}, we show that this norm is bounded by a term which diverges as the Cauchy horizon is approached. 
\item We introduce a second energy norm, $E_{2}[\vec{x}](t)$ and in Theorem \ref{Thm5} show that it is bounded for $t \in [0, t^*]$, for some $t^*$ sufficiently close to the Cauchy horizon. By combining theorems \ref{Thm4} and \ref{Thm5}, we can show that $\vec{x}$ has a bounded energy up to the Cauchy horizon; see Theorem \ref{Thm6}.
\item We use this to show that $\vec{x}$ itself is bounded in Corollary \ref{Cor2}. However, there is no guarantee that $\vec{x}$ does not vanish on the Cauchy horizon. If this were to occur, then we would not be able to deduce any information about the behaviour of $\vec{u}$ from that of $\vec{x}$. 
\item We want to show that $\vec{x}$ is generically non-zero on the Cauchy horizon, for a set of non-zero measure. We can easily show that $\bar{x}:=\int_{\mathbb{R}} \vec{x} \, dp$ is non-zero at the Cauchy horizon (see Remark \ref{remark1}). If we could commute the limit $t \rightarrow 0$ with the integral, then we could show that $\int_{\mathbb{R}} \vec{x}(0, p) \, dp \neq 0$, which would be sufficient, since then $\vec{x} \neq 0$ over at least some interval on the Cauchy horizon. 
\item In order to show that we can commute the limit with the integral, we turn to the Lebesgue dominated convergence theorem, which provides conditions under which one may do this. In order to meet these conditions, we must strengthen the bound on $\vec{x}$ (Lemma \ref{Lem5}), construct a Cauchy sequence of $\vec{x}$ values in $L^1$ (Lemmas \ref{Lemderivbound} and \ref{Lem6}) and finally apply the dominated convergence theorem. 
\item So overall, we can show that $\vec{x}(0, p) \neq 0$ over at least some interval $p \in (a,b)$ on the Cauchy horizon. 
\end{itemize}
We begin with our first energy norm for $\vec{x}$. 

\subsection{Energy Bounds for $\vec{x}$}
\label{xbehaviour}
Since we expect $\vec{u}$ to diverge as $t^{-c}$, if we define $\vec{x}=t^{c} \vec{u}$, then we expect that $\vec{x}$ should have a bounded energy in the approach to the Cauchy horizon. 

\begin{theorem}
\label{Thm4}
Let $\vec{u}(t,p)$ be a solution of (\ref{5dsys}) subject to the hypotheses of Theorem \ref{Thm1}. Then there exists $t^*>0$ such that for all $t \in (0, t^{*}]$, the energy norm 
\begin{equation} \label{Mdef}
E_{1}(t):=\int_{\mathbb{R}} t^{2c} \, \vec{u} \cdot \vec{u} \, dp = \int_{\mathbb{R}} \vec{x} \cdot \vec{x} \, dp,
\end{equation}
obeys the bound
\begin{equation} \label{Mbound}
E_{1}(t) \leq \frac{\mu}{t}, 
\end{equation}
for a positive constant $\mu$ which depends only on the initial data and the background geometry. 
\end{theorem}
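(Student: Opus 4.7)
The plan is a standard energy estimate for the symmetric hyperbolic system (\ref{xsys}). I take the $L^{2}$ inner product of (\ref{xsys}) with $\vec{x}$ and integrate over $p\in\mathbb{R}$, obtaining
$$\frac{t}{2}\frac{dE_{1}}{dt} + \int_{\mathbb{R}}\vec{x}^{\,T}A(t)\partial_{p}\vec{x}\,dp + \int_{\mathbb{R}}\vec{x}^{\,T}(C(t)-c\mathbb{I})\vec{x}\,dp = t^{c}\int_{\mathbb{R}}\vec{x}\cdot\vec{\Sigma}\,dp.$$
Because $A(t)$ is symmetric (in fact diagonal) and independent of $p$, the first integral on the left equals $\tfrac{1}{2}\int_{\mathbb{R}}\partial_{p}(\vec{x}^{\,T}A\vec{x})\,dp$ and therefore vanishes by the compact support in $p$ guaranteed by Theorem \ref{Thm1}.

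For the remaining pieces I would set $\alpha := \sup_{t\in[0,t_{1}]}\|C(t)-c\mathbb{I}\|_{op}$, which is finite by the analyticity of $C$ on $[0,t_{1}]$, and use the factored form $\vec{\Sigma}(t,p)=\vec{h}(t)g(p)$ together with Cauchy-Schwarz to bound the source term by $Kt^{c}\sqrt{E_{1}}$, where $K$ depends only on $\|g\|_{L^{2}}$ and $\sup_{[0,t_{1}]}|\vec{h}|$. Putting these together produces the differential inequality
$$\left|\frac{dE_{1}}{dt}\right| \leq \frac{2\alpha}{t}E_{1} + 2K\,t^{c-1}\sqrt{E_{1}}.$$
Converting to a linear inequality for $F(t) := \sqrt{E_{1}(t)}$ gives $|F'| \leq (\alpha/t)F + Kt^{c-1}$, and integrating backwards from $t_{1}$ with Gr\"onwall's inequality yields $F(t) \leq C_{1}(t_{1}/t)^{\alpha} + C_{2}$; the source contribution is finite because $c>3$ ensures $\int_{0}^{t_{1}}\tau^{c-1}\,d\tau < \infty$. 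Squaring then yields (\ref{Mbound}) after absorbing constants into $\mu$ and choosing $t^{*}$ sufficiently small.

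The main obstacle is extracting precisely the $1/t$ power rather than a worse $t^{-k}$ with $k>1$: the na\"ive operator-norm bound on $C(t)-c\mathbb{I}$ yields only $E_{1}\leq C(t_{1}/t)^{2\alpha}$, which matches the stated bound $\mu/t$ only if the effective exponent $2\alpha$ does not exceed unity. To secure this I would exploit the specific analytic structure of $C(t)-c\mathbb{I}$ near $t=0$ described in the bulleted properties of (\ref{5dsys}): since the first four rows of $C(t)$ are $O(t)$ as $t\to 0$ and only the fifth row is $O(1)$, the leading contribution to $\int \vec{x}\cdot(C-c\mathbb{I})\vec{x}\,dp$ comes from a single row of the matrix. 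A more careful decomposition of this inner product, combined with a suitable shrinkage of $t^{*}$, is what permits the Gr\"onwall exponent to be sharpened to the value required by (\ref{Mbound}).
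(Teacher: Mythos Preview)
Your energy identity is set up correctly and the disposal of the $A$-term by integration by parts matches the paper. The genuine gap is the one you flag yourself in the final paragraph, and your proposed resolution does not close it.

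The operator-norm bound $\alpha=\sup_{[0,t_1]}\|C(t)-c\mathbb{I}\|_{op}$ is far too crude: at $t=0$ the first four diagonal entries of $C(0)-c\mathbb{I}$ equal $-c$, so $\alpha\geq c>3$ and your Gr\"onwall step yields only $E_1(t)\leq C\,t^{-2\alpha}$ with $2\alpha>6$. Your row-structure refinement does not rescue this. After separating the favourable diagonal piece $c\sum_{i=1}^{4}x_i^{2}$ (which does have a sign and may be dropped from a lower bound), one is still left with the indefinite cross term
\[
-\frac{2}{t}\int_{\mathbb{R}} x_5\sum_{j=1}^{4}c_{5j}(0)\,x_j\,dp,
\]
which carries a $1/t$ prefactor. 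Any Cauchy--Schwarz or Young splitting of this term reintroduces a contribution of the form $-(K/t)\int x_5^{2}\,dp$ with $K$ determined by the $c_{5j}(0)$, and nothing in your argument forces $K$ to be small enough to yield exponent~$1$.

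The paper proceeds differently: rather than bounding the $C$-contribution, it argues that the full quadratic form $c\,\vec{u}\cdot\vec{u}-\vec{u}\cdot C(t)\vec{u}$ is \emph{non-negative} on some interval $(0,t^{*}]$. This is done by passing to $\vec{v}=S^{-1}\vec{u}$, where $S$ is the constant similarity matrix that puts $C(0)$ into Jordan form $\bar{C}_{0}=\mathrm{diag}(0,0,0,0,c)$, and then analysing the sign of $\vec{v}^{\,T}S^{T}S\,(c\mathbb{I}-\bar{C}(t))\,\vec{v}$ using the explicit form of $S$ given in Appendix~A. Once this sign is secured, the entire $C$-term is discarded from the lower bound, leaving
\[
\frac{dE_{1}}{dt}\;\geq\;2t^{2c-1}\int_{\mathbb{R}}\vec{u}\cdot\vec{\Sigma}\,dp\;\geq\;-\frac{1}{t}\bigl(E_{1}+J\bigr),\qquad J(t)=\int_{\mathbb{R}}t^{2c}\,\vec{\Sigma}\cdot\vec{\Sigma}\,dp,
\]
by Cauchy--Schwarz. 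This is equivalent to $\frac{d}{dt}(tE_{1})\geq -J$, which integrates directly to $tE_{1}(t)\leq t_{1}E_{1}(t_{1})+\int_{t}^{t_{1}}J(\tau)\,d\tau$ and hence to (\ref{Mbound}). The essential point is that the factor $1/t$ arises solely from the Cauchy--Schwarz treatment of the source; no contribution from $C$ survives to the integration step. Your proposal never establishes the positivity of the $(c\mathbb{I}-C)$ form and so cannot reach the sharp exponent.
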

\begin{proof} We define $E_{1}(t)$ as in (\ref{Mdef}) and take its derivative. We substitute (\ref{xsys}) to find
\begin{equation} \label{Mderiv}
\fl \frac{d E_{1}}{d t} = \int_{\mathbb{R}} 2 t^{2c-1} (c \vec{u} \cdot \vec{u} - \vec{u} \cdot C(t) \vec{u}) - 2t^{2c-1} \vec{u} \cdot \frac{A(t)}{t} \frac{\partial \vec{u}}{ \partial p} + 2 t^{2c} \vec{u} \cdot \vec{\Sigma} \, dp. 
\end{equation}
Integrating by parts shows that the term containing $\partial \vec{u} / \partial p$ vanishes due to the compact support of $\vec{u}$ and the fact that $A(t)$ is symmetric. This leaves
\begin{equation}
\label{enrhs}
\int_{\mathbb{R}} 2t^{2c-1} (c \vec{u} \cdot \vec{u} - \vec{u} \cdot C(t) \vec{u} ) + 2t^{2c-1} \vec{u}.\vec{\Sigma} \, dp,
\end{equation}
on the right hand side of (\ref{Mderiv}). We focus on the first term in (\ref{enrhs}), and introduce the constant matrix $S$, which transforms $C_{0}$ into its Jordan canonical form and which is listed in Appendix A. We use this matrix to show that 
\begin{equation*}
c \vec{u} \cdot \vec{u} - \vec{u} \cdot C(t) \vec{u} = S^{T} \vec{u}\cdot (c \mathbb{I} - \bar{C}(t)) \vec{v},
\end{equation*}
where $\vec{v}=S^{-1} \vec{u}$ and $\bar{C}(t)=S^{-1}C(t)S$, so that $\bar{C}_0$ is the Jordan canonical form of $C_{0}$. Recall that $\bar{C}_{0}=\mbox{diag}(0,0,0,0,c)$. Now since $S^{T}=S^{T}\, S \, S^{-1}$, we find that 
\begin{equation*}
c \vec{u} \cdot \vec{u} - \vec{u} \cdot C(t) \vec{u} =\vec{v}^{T} S^{T} S \cdot (c \mathbb{I} - \bar{C}(t)) \vec{v}. 
\end{equation*}
Now for any matrix $A$, $A^{T}A$ is positive definite. We now wish to show that $\langle \vec{v}, S^{T}S(c \mathbb{I} - \bar{C}(t) \vec{v}) \rangle \geq 0$, and that equality holds iff $\vec{v}=0$. By using the form of $S$ and the matrix $\bar{C}(t=0)$ we find that the matrix $S^{T}S(c \mathbb{I} - \bar{C}(t)$ has four positive eigenvalues and one zero eigenvalue at $t=0$. This indicates that it is positive semi-definite, but could still vanish along the direction of one of the eigenvectors. Specifically, it is possible for only the fifth component of $\vec{v}$, $v_{5}$ to be non-zero and for this dot product to still vanish. However, by the continuity of the matrices here, it follows that $S^{T}S(c \mathbb{I} - \bar{C}(t)$ has four positive eignevalues for $t \in (0, t^{*})$ for some $t^{*} > 0$. Therefore, if $\langle \vec{v}, S^{T}S(c \mathbb{I} - \bar{C}(t) \vec{v}) \rangle = 0$  in this range, it follows that we must have $v_{1}=v_{2}=v_{3}=v_{4}=0$. But if these components vanish, then the only way to have $\langle \vec{v}, S^{T}S(c \mathbb{I} - \bar{C}(t) \vec{v}) \rangle = 0$  is to have $v_{5}=0$ too. We therefore conclude that if $v_{1}=v_{2}=v_{3}=v_{4}=0$, we must also have $v_{5}=0$. Therefore, $\langle \vec{v}, S^{T}S(c \mathbb{I} - \bar{C}(t) \vec{v}) \rangle = 0$ iff $\vec{v}=0$. 

So overall, we conclude that 
\begin{equation} \label{posdef}
\langle \vec{v}, S^{T}S(c \mathbb{I} - \bar{C}(t) \vec{v}) \rangle \geq 0. 
\end{equation}
for $t \in (0, t^{*})$ for some positive $t^{*}$, with equality holding iff $\vec{v}=0$. We note that if $\vec{v}=0$, then $\vec{u}=S \vec{v}=0$ too. 

We now assume that $t \in [0, t^{*})$ and using this information about the first two terms of (\ref{Mderiv}), we can conclude that 
\begin{equation*}
\frac{d E_{1}}{d t} \geq \int_{\mathbb{R}} 2t^{2c-1} \vec{u} \cdot \vec{\Sigma} \, dp. 
\end{equation*}
We now apply the Cauchy-Schwarz inequality to show that 
\begin{eqnarray*}
\frac{d E_{1}}{d t} \geq& -\int_{\mathbb{R}} t^{2c-1} \vec{u} \cdot \vec{u} \, dp - \int_{\mathbb{R}} t^{2c-1} \vec{\Sigma} \cdot \vec{\Sigma} \, dp
&=- \frac{1}{t}(E_{1}(t) -J(t)), 
\end{eqnarray*}
where $J(t):=\int_{\mathbb{R}} t^{2c} \vec{\Sigma} \cdot \vec{\Sigma}$. Integrating this from some time $t$ up to an initial time $t_{1}$ (where $0 < t \leq t_{1} \leq t^*$) will produce
\begin{equation} \label{Mbound1}
t E_{1}(t) \leq t_{1} E_{1}(t_{1}) +  \int^{t_{1}}_{t} J(\tau) \, d\tau. 
\end{equation}
We now examine the term $J(t)$. We first recall the form of the source vector $\vec{\Sigma}(t,p)=\vec{h}(t) g(p)$, where $\vec{h}(t)$ is an analytic function of $t$. $J(t)$ can therefore be written as $J(t)=t^{2c} \vec{h} \cdot \vec{h} G$, where $G:=\int_{\mathbb{R}} g(p)^2 \, dp \in \mathbb{R}$. Then the integral appearing in (\ref{Mbound1}) is 
\begin{equation*}
\int^{t_{1}}_{t} J(\tau) \, d\tau = G t_{*}^{2c} \vec{h} \cdot \vec{h} \bigg|_{t_{*}} (t_{1}-t), 
\end{equation*}
where we use the mean value theorem \cite{MVT} to put the $t$-dependent terms outside the integral and $t_{*} \in [t, t_{1}]$. The right hand side above is clearly bounded as $t \rightarrow 0$, and we can therefore conclude that 
\begin{equation*}
E_{1}(t) \leq \frac{\mu}{t}, 
\end{equation*}
for a positive constant $\mu$ which depends only on the initial data and the background geometry.  
\hfill$\square$
\end{proof}
We note that (\ref{Mdef}) is in fact the $L^2$-norm of $\vec{x}$. Since $E_{1}(t)$ is bounded, it follows that $\vec{x}(t,p) \in L^2(\mathbb{R}, \mathbb{R}^5)$ for $t \in (0, t^{*})$. We also emphasise that this bound holds only for $t>0$ and does not hold on the Cauchy horizon. We now introduce a second energy norm $E_{2}[\vec{x}](t)$ whose bound will extend to the Cauchy horizon. 

\begin{theorem}
\label{Thm5}
Let $\vec{u}$ be a solution of (\ref{5dsys}), subject to Theorem \ref{Thm1}. Define 
\begin{equation}
\label{defe2}
E_{2}[\vec{x}](t):=\int_{\mathbb{R}} \vec{x} \cdot \vec{x} + (t-1)x_{5}^{2} \, dp. 
\end{equation}
Then there exists some $t_{2}>0$ and $\mu > 0$ such that 
\begin{equation}
\label{e2bound}
E_{2}[\vec{x}](t) \leq E_{2}[\vec{x}](t_{2}) \rme^{\mu(t_{2}-t)}. 
\end{equation}
Here $\mu$ is a positive constant that depends only on the components of the background metric tensor. 
\end{theorem}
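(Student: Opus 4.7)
The plan is to derive a differential inequality of the form $dE_2/dt \geq -\mu E_2 - D(t)$ on $(0,t_2]$ with bounded $D$, and then apply Gr\"onwall's inequality integrated backward from $t_2$ to obtain the stated exponential bound (absorbing the $D$-contribution into the initial-data term at $t_2$).

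First, I would rewrite $E_2(t) = \int [\sum_{i=1}^{4} x_i^2 + t\, x_5^2]\,dp$, which makes transparent the purpose of the $(t-1)x_5^2$ correction. Differentiating produces $2 t\, x_5 \partial_t x_5 + x_5^2$ as the fifth-component contribution, and the factor of $t$ on $x_5\partial_t x_5$ matches the $t$ multiplying $\partial_t$ in the equation of motion (\ref{xsys}), so that substitution introduces no $1/t$ factor in the fifth row (this is essential, since the fifth rows of $A(t)$ and $C(t)$ are $O(1)$ rather than $O(t)$). For $i \leq 4$ I would instead divide the equation by $t$; since the first four rows of $A(t)$ and $C(t)$ are $O(t)$, the only $1/t$-singular contribution comes from the constant $-c$ in $(C - c\mathbb{I})_{ii}$. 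The spatial-derivative terms $A_{ii}\partial_p x_i$ integrate by parts to zero because $A$ is diagonal and $p$-independent and $\vec{x}(t,\cdot)$ has compact support by Theorem \ref{Thm1}.

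What remains is a zeroth-order expression which I would decompose as
\[ \frac{dE_2}{dt} = \frac{2c}{t}\sum_{i=1}^{4}\int x_i^2\,dp + \int x_5^2\,dp + R(t), \]
where the two explicit terms are manifestly nonnegative and $R(t)$ collects off-diagonal cross-terms and source. The crucial structural point is that the singular $2c/t$ piece is \emph{positive} and supplies a favorable budget. For the cross-terms in $R(t)$ I would apply weighted Cauchy--Schwarz with carefully selected weights. The off-diagonal first-four-row entries $C_{ij}/t$ are bounded (since $C_{ij} = O(t)$ for $j\neq i$, including $j=5$), so the resulting $\int x_i^2$ and $\int x_5^2$ coefficients can be made as small as desired. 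The fifth-row cross-terms $-2 C_{5,j}\int x_5 x_j\,dp$ with $j\leq 4$ have $O(1)$ coefficients, and I would estimate them by $\delta\int x_5^2 + \int x_j^2/(4\delta)$ with $\delta$ chosen small enough that the total negative $\int x_5^2$ contribution from all such terms stays strictly less than the explicit positive $\int x_5^2$, leaving a residual nonnegative $\int x_5^2$ and an $O(1)$ multiple of $\int x_j^2 \leq E_2$. The diagonal fifth-row term $-2(C_{55}-c)\int x_5^2$ has coefficient $O(t)$, giving $O(1)\cdot (t\int x_5^2)\leq O(1)\, E_2$. Source terms use $\vec{\Sigma} = \vec{h}(t)g(p)$ with $\vec{h}$ analytic and the prefactors $t^{c-1}, t^c$ bounded on $[0,t_2]$ (since $c > 3$ by Theorem \ref{Thm2}).

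Combining these estimates yields $dE_2/dt \geq -\mu E_2 - D$ with continuous $\mu$ and bounded $D$ depending only on the initial data function $g$ and the background geometry. Gr\"onwall's inequality applied in reverse from $t_2$ then gives $E_2(t) \leq (E_2(t_2) + D/\mu) e^{\mu(t_2-t)}$, which agrees with the stated bound after absorbing $D$ into the constants. The main obstacle is the interlocking balance of weights in the cross-term estimates: because $\int x_5^2$ is not directly controlled by $E_2$ as $t\to 0$ (the weight $t$ on $x_5^2$ inside $E_2$ degenerates), the sole mechanism for controlling negative $\int x_5^2$ contributions is the single positive $\int x_5^2$ term produced by differentiating $(t-1)x_5^2$, so the weights must be chosen simultaneously for all cross-terms to keep the total negative coefficient of $\int x_5^2$ majorised by $1$.
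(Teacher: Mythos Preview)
Your proposal is correct and follows the same overall architecture as the paper --- rewrite $E_2$ as $\int[\sum_{i\leq 4}x_i^2 + t\,x_5^2]\,dp$, differentiate, substitute from (\ref{xsys}), integrate the transport terms by parts, discard the manifestly positive $2c\,t^{-1}\sum_{i\leq 4}\int x_i^2$, and show that what remains is bounded below by $-\mu E_2$. The difference is in how the residual quadratic form is handled. The paper collects all zeroth-order terms at $t=0$ into a single quadratic form $\vec{x}\cdot H\vec{x}$ with $H = \mu\,\mathrm{diag}(1,1,1,1,0) + K$, and then chooses $\mu$ large enough to make $H$ positive definite by checking the principal minors (Sylvester's criterion); continuity of the coefficients then extends positivity to $[0,t^*]$. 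You instead work term-by-term with weighted Young inequalities, choosing the weights $\delta$ on each $x_i x_5$ cross-term small enough that the accumulated negative $\int x_5^2$ coefficient is strictly dominated by the single $+\int x_5^2$ arising from differentiating $t\,x_5^2$. Your approach is more hands-on and makes the mechanism transparent (in particular it shows exactly why the $(t-1)x_5^2$ correction is needed), while the paper's matrix argument is more compact. You are also more explicit about the source producing an additive constant $D$ in the Gr\"onwall bound; the paper's continuity argument tacitly sweeps this under the rug, since the source contributions $-t^{c-1}\vec{\Sigma}\cdot\vec{\Sigma}$ are not ``coefficients of $\vec{x}$'' and do not vanish for $t>0$. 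Either way the conclusion that $E_2$ remains bounded as $t\to 0$ is unaffected.
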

\begin{proof}
We begin by noting that the definition (\ref{defe2}) is equivalent to 
\begin{equation*}
E_{2}[\vec{x}](t):=\int_{\mathbb{R}} x_{1}^2+x_{2}^2+x_{3}^2+x_{4}^2+ tx_{5}^{2} \, dp. 
\end{equation*}
The factor of $t$ is intended to control the behaviour of $x_{5}$ as the Cauchy horizon is approached. In what follows, we will denote the $(5, 5)$ component of the matrix $A(t)$ as $a_{5}(t)$, the $(5, 5)$ component of the matrix $C(t)$ as $c_{5}(t)$, and the remaining components of the fifth row of the matrix $C(t)$ will be denoted $c_{5j}(t)$, where $j=1, \ldots ,4$. We also recall that the first four rows of $C(t)$ have $O(t)$ behaviour as $t \rightarrow 0$, and the fifth row has $O(1)$ behaviour as $t \rightarrow 0$. We begin by taking a $t$-derivative of (\ref{defe2}) and substituting from (\ref{xsys}) to find 
\begin{eqnarray}
\label{e2deriv}
\fl \frac{d E_{2}}{d t} =  \int_{\mathbb{R}} 2 \vec{x} \cdot \left(  -\frac{A(t)}{t} \frac{\partial \vec{x}}{\partial p} -\frac{C(t)-c \mathbf{I}}{t} \vec{x} + t^{c-1} \vec{\Sigma}(t,p) \right) + x_{5}^{2} \\ \nonumber 
\fl \hspace{10mm} +2(t-1)\frac{x_{5}}{t} \left( -a_{5}(t)\frac{\partial x_{5}}{\partial p} - (c_{5}(t)-c) x_{5} + t^c \Sigma_{5} - \sum_{j=1}^{4} c_{5 j}(t) x_{j} \right) \, dp, 
\end{eqnarray}
where $\sum_{j=1}^{4} c_{5 j}(t) x_{j}$ appears in the source term of the $x_{5}$ equation due to the fact that the $C(t)$ matrix is not diagonal. Now consider the terms 
\begin{equation*}
-2 \vec{x} \cdot \frac{A(t)}{t} \frac{\partial \vec{x}}{\partial p} - 2(t-1)\frac{x_{5}}{t}a_{5}(t)\frac{\partial x_{5}}{\partial p}. 
\end{equation*}
This can be simplified to 
\begin{equation}
\label{vanishterms}
\fl -2\left( x_{1} \tilde{a}_{1}(t) \frac{\partial x_{1}}{\partial p}+ x_{2} \tilde{a}_{2}(t) \frac{\partial x_{2}}{\partial p}+ x_{3} \tilde{a}_{3}(t) \frac{\partial x_{4}}{\partial p}+ x_{4} \tilde{a}_{4}(t) \frac{\partial x_{4}}{\partial p} \right) - 2x_{5}a_{5}(t)\frac{\partial x_{5}}{\partial p}, 
\end{equation}
where we used the fact that $a_{i}(t)=t\tilde{a}_{i}(t)$ where $\tilde{a}_{i}(t)$ is $O(1)$ for $i=1, \ldots ,4$. After we insert (\ref{vanishterms}) into the integral in (\ref{e2deriv}), it will vanish after an integration by parts, due to the compact support of $\vec{x}$. Returning to (\ref{e2deriv}), we are left with 
\begin{eqnarray}
\label{e2deriv2}
\frac{d E_{2}}{d t} = & \int_{\mathbb{R}} 2 \vec{x} \cdot \left( -\frac{C(t)-c \mathbf{I}}{t} \vec{x} + t^{c-1} \vec{\Sigma}(t,p) \right) + x_{5}^{2}+ \\ \nonumber 
 &2(t-1)\frac{x_{5}}{t} \left(- (c_{5}(t)-c) x_{5} + t^c \Sigma_{5} - \sum_{j=1}^{4} c_{5 j}(t) x_{j} \right) \, dp. 
\end{eqnarray}
If we now consider the terms 
\begin{equation*}
-2 \vec{x} \cdot \frac{(C(t)-c \mathbf{I})}{t} \vec{x}-2(t-1)\frac{x_{5}^2}{t}  (c_{5}(t)-c) -2(t-1)\sum_{j=1}^{4} \frac{c_{5 j}(t)}{t} x_{j}x_{5},
\end{equation*}
we notice that they can be rewritten as 
\begin{equation*}
-2 \vec{x} \cdot \frac{(\tilde{C}(t)-c \mathbf{I})}{t} \vec{x}- 2(c_{5}(t)-c) x_{5}^2 -2\sum_{j=1}^{4} c_{5 j}(t) x_{j}x_{5},
\end{equation*}
where $\tilde{C}(t)$ is a matrix got by replacing the final row of $C(t)-c \mathbb{I}$ with a row of zeroes. In other words, we can write $\tilde{C}(t)$ as 
\begin{equation*}
\tilde{C}(t)=\left(
\begin{array}{ccccc}
\hspace{1mm} & \hspace{1mm} & D & \hspace{1mm} & \hspace{1mm} \\ 
0 & 0 & 0 & 0 & 0
\end{array}
\right),
\end{equation*}
where $D$ is a $4 \times 5$ $O(t)$ matrix. Equally, we could write $\tilde{C}(t)=t \bar{C}(t)$, where $\bar{C}(t)$ is an $O(1)$ matrix. The fifth row of $\bar{C}(t)$ contains only zeroes, that is, $\bar{C}_{5i}=0$ for $i=1, \ldots ,5$. 

We insert this into (\ref{e2deriv2}) and note that $2c \vec{x} \cdot \vec{x} / t$ is explicitly positive definite. We also use the Cauchy-Schwarz inequality, in the form 
\begin{equation*}
\int_{\mathbb{R}} 2 \vec{x} \cdot \vec{\Sigma} \, dp \geq - \int_{\mathbb{R}} \vec{x} \cdot \vec{x} + \vec{\Sigma} \cdot \vec{\Sigma} \, dp, 
\end{equation*}
to find that 
\begin{eqnarray*}
\fl \frac{d E_{2}}{d t} \geq \int_{\mathbb{R}} -2 \vec{x} \cdot \bar{C}(t) \vec{x} - t^{c-1} ( \vec{\Sigma} \cdot \vec{\Sigma} + \vec{x} \cdot \vec{x})   + x_{5}^{2}+ \\ \nonumber 
\qquad -x_{5}^2(c_{5}(t)-c) + t^{c-1}(t-1) (-\Sigma_{5}^2-x_{5}^2) - 2\sum_{j=1}^{4} c_{5 j}(t) x_{j}x_{5} \, dp.  
\end{eqnarray*}
Now we let $I$ equal the integrand on the right hand side above. We introduce $I_{R}=I+\mu I_{E_{2}}$, where $\mu >0$ is a constant and $I_{E_{2}}$ indicates the integrand of (\ref{defe2}). We wish to show that $I_{R}\geq 0$. 

We can write $I_{R}$ as 
\begin{eqnarray*}
\fl I_{R}= \vec{x} \cdot (-2 \bar{C}(t) + \mu \mathbb{I} - t^{c-1} \mathbb{I}) \vec{x} + (-2(c_{5}(t) - c) + 1 -(t-1)t^{c-1}  \\ \nonumber  
\qquad \qquad  + (t-1)\mu) x_{5}^2 -2\sum_{j=1}^{4} c_{5j}(t) x_{j} x_{5} -t^{c-1} \vec{\Sigma} \cdot \vec{\Sigma} - (t-1)t^{c-1}\Sigma_{5}^2. 
\end{eqnarray*}
Let $t=0$ and note that $c>1$ so that at $t=0$, $t^{c-1}=0$. Then $I_{R}$ simplifies to 
\begin{equation*}
I_{R} \bigg|_{t=0}=\vec{x} \cdot (-2 \bar{C}(t=0) + \mu \mathbb{I}) \vec{x} + ( 1 -\mu) x_{5}^2 -2\sum_{j=1}^{4} c_{5j}(t=0) x_{j} x_{5} . 
\end{equation*}
We can simplify matters by writing $I_{R} \bigg|_{t=0}=\vec{x} \cdot H \vec{x}$, where $H$ can be written as 
\begin{equation*}
H=\left(
\begin{array}{ccccc}
\mu & 0 & 0 & 0 & 0 \\ 
0 & \mu & 0 & 0 & 0 \\ 
0 & 0 & \mu & 0 & 0 \\ 
0 & 0 & 0 & \mu & 0 \\ 
0 & 0 & 0 & 0 & 0
\end{array}
\right) + K,
\end{equation*}
where $K$ is a constant matrix, independent of $\mu$, which depends on the components of the matrix $\bar{C}(t)$ evaluated at $t=0$ and $K_{55}=1$. For $I_{R}(0) > 0$, we need $H$ to be positive definite. This implies that all of the principal subdeterminants of $H$ must be non-negative. So we require 
\begin{equation*}
\mu + K_{11} >0   
\end{equation*}

\begin{equation*}
\left|
\begin{array}{cc}
\mu+K_{11} & K_{12}  \\ 
K_{21} & \mu+K_{22}   
\end{array}
\right| >0,
\end{equation*}

\begin{equation*}
\left|
\begin{array}{ccc}
\mu+K_{11} & K_{12} & K_{13}  \\ 
K_{21} & \mu+K_{22} & K_{23}   \\ 
K_{31} & K_{32} & \mu + K_{33}
\end{array}
\right| >0,
\end{equation*}

\begin{equation*}
\left|
\begin{array}{cccc}
\mu+K_{11} & K_{12} & K_{13} & K_{14}  \\ 
K_{21} & \mu+K_{22} & K_{23} & K_{24}   \\ 
K_{31} & K_{32} & \mu + K_{33} & K_{34} \\ 
K_{41} & K_{42} & K_{43} & \mu + K_{44} 
\end{array}
\right| >0,
\end{equation*}

\begin{equation*}
|H|=\left|
\begin{array}{ccccc}
\mu+K_{11} & K_{12} & K_{13} & K_{14} & K_{15}  \\ 
K_{21} & \mu+K_{22} & K_{23} & K_{24} & K_{25}   \\ 
K_{31} & K_{32} & \mu + K_{33} & K_{34} & K_{35} \\ 
K_{41} & K_{42} & K_{43} & \mu + K_{44} & K_{45} \\ 
K_{51} & K_{52} & K_{53} & K_{54} & 1
\end{array}
\right| >0. 
\end{equation*}
These conditions produce a linear equation, a quadratic with leading $\mu^2$, a cubic with leading $\mu^3$ and two quartics with leading $\mu^4$, all of which must be positive. We therefore pick a $\mu$ which is large enough to satisfy each of these conditions. 

We therefore conclude that at $t=0$, $I_{R} > 0$. Now by continuity of the coefficients of $\vec{x}$ in $I_{R}$, it follows that there exists some $t^*$ such that $I_{R} \geq 0$ in the range $t \in [0, t^*]$. We may therefore state that 
\begin{equation}
\label{derive2bound}
\frac{d E_{2}}{d t} \geq -\mu E_{2},  
\end{equation}
in this range. We now integrate (\ref{derive2bound}) starting from some initial data surface $t_{2} \in (0, t^*)$. This results in 
\begin{equation}
E_{2}[\vec{x}](t) \leq E_{2}[\vec{x}](t_{2}) \rme^{\mu (t_{2}-t)}, 
\label{bounde2}
\end{equation}
which provides the desired bound for $E_{2}[\vec{x}](t)$. 
\hfill$\square$
\end{proof}
We note that the definition of $E_{2}[\vec{x}](t)$, (\ref{defe2}), is a sum of the $L^2$-norms of $x_{i}$ for $i=1, \ldots ,4$ and the $L^2$-norm of $t^{1/2}x_{5}$. Since we can bound $E_{2}[\vec{x}](t)$ in $t \in [0, t^*]$, it follows that in this range, $x_{i} \in L^2(\mathbb{R}, \mathbb{R})$ for $i=1, \ldots ,4$ and $t^{1/2}x_{5} \in L^2(\mathbb{R}, \mathbb{R})$. 

We next combine this result with Theorem \ref{Thm4} to provide a bound on $\vec{x}$ which holds for the entire range of $t$. 


\begin{theorem} 
\label{Thm6}
Let $\vec{u}$ be a solution of (\ref{5dsys}), subject to Theorem \ref{Thm1}. Then $E_{2}[\vec{x}](t)$ is bounded by an \textit{a priori} bound for $t \in [0, t_{1}]$, that is, 
\begin{equation}
E_{2}[\vec{x}](t) \leq  \nu E_{1}[\vec{x}](t_{1}), 
\label{fullEbound}
\end{equation}
for a positive constant $\nu$. 
\end{theorem}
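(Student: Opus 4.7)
The strategy is to stitch together the bound from Theorem \ref{Thm5}, which is valid only on a small interval adjacent to the Cauchy horizon, with the bound from Theorem \ref{Thm4}, which controls $E_1[\vec{x}]$ away from that horizon. The two are linked at an intermediate time $t_2$ that is chosen to be small enough for Theorem \ref{Thm5} to apply (so $t_2 \in (0, t^*]$) and also to satisfy $t_2 < 1$, which ensures that $(t-1)x_5^2 \leq 0$ throughout $[0,t_2]$ so that the elementary comparison
\begin{equation*}
E_2[\vec{x}](s) \leq E_1[\vec{x}](s), \qquad s \in [0,t_2],
\end{equation*}
holds directly from the definitions.

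On the near-horizon interval $t \in [0,t_2]$, Theorem \ref{Thm5} gives
\begin{equation*}
E_2[\vec{x}](t) \leq E_2[\vec{x}](t_2) \rme^{\mu(t_2-t)} \leq \rme^{\mu t_2} E_2[\vec{x}](t_2) \leq \rme^{\mu t_2} E_1[\vec{x}](t_2),
\end{equation*}
and then Theorem \ref{Thm4} (applied at the fixed positive time $t_2$) bounds $E_1[\vec{x}](t_2)$ by a constant multiple of $E_1[\vec{x}](t_1)$, modulo a source contribution that is absorbed into the overall constant $\nu$. On the intermediate interval $t \in [t_2, t_1]$, the coefficients of (\ref{xsys}) are smooth and uniformly bounded (the singular behaviour at $t=0$ is excluded), so a routine energy estimate for this symmetric hyperbolic system (mimicking the manipulations carried out in the proof of Theorem \ref{Thm5}, but without needing to exploit the positivity structure at $t=0$) yields a Gr\"onwall-type bound $E_2[\vec{x}](t) \leq \tilde{\nu} E_2[\vec{x}](t_1) \leq \tilde{\nu} E_1[\vec{x}](t_1)$. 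Taking $\nu$ to be the largest of the constants produced on the two subintervals gives (\ref{fullEbound}).

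The main technical point is verifying that the constants from Theorems \ref{Thm4} and \ref{Thm5} are compatible at $t_2$, i.e.\ that the source contribution to $E_1[\vec{x}](t_2)$ that enters the Theorem \ref{Thm4} estimate can be cleanly absorbed. Since the source in (\ref{xsys}) is $t^c\vec{\Sigma}(t,p)$ with $\vec{\Sigma} = \vec{h}(t)g(p)$ and $g$ is fixed initial data, its contribution is already a quantity determined by the initial data and the background, and may be rolled into $\nu$. Once this bookkeeping is settled, the combination of the two interval estimates is immediate.
\hfill$\square$
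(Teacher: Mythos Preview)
Your proof is correct and follows essentially the same route as the paper: use the elementary inequality $E_2[\vec{x}](s)\le E_1[\vec{x}](s)$ for $s<1$, bound $E_1[\vec{x}](t_2)$ in terms of $E_1[\vec{x}](t_1)$ via Theorem~\ref{Thm4}, and then push down to $[0,t_2]$ using Theorem~\ref{Thm5}. Your version is in fact slightly more careful than the paper's, which only writes out the $[0,t_2]$ portion explicitly; your separate treatment of $[t_2,t_1]$ (which could equally be handled directly by $E_2\le E_1\le \mu/t\le \mu/t_2$ from Theorem~\ref{Thm4}, without invoking a fresh Gr\"onwall argument) fills that in.
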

\begin{proof}
To prove this, we note that by definition, $E_{2}[\vec{x}](t)=E_{1}[\vec{x}](t)+(t-1)E_{1}[x_{5}](t)$. Therefore, using the bound on $E_{1}[\vec{x}]$ from Theorem \ref{Thm2} produces $E_{2}[\vec{x}](t_{2}) \leq E_{1}[\vec{x}](t_{1})$. Inserting this into (\ref{e2bound}) produces (\ref{fullEbound}), where $\nu = \rme^{\mu}(t_{2}-t)$. 
\hfill$\square$
\end{proof}

\begin{corollary} Let $\vec{u}$ be a solution of (\ref{5dsys}), subject to Theorem \ref{Thm1}. Then $\vec{x}=t^{c} \vec{u}$ is uniformly bounded in the range $t \in (0, t^{*})$. That is
\label{Cor2}
\begin{equation*}
|x_{i}| \leq \beta_{i}  
\end{equation*}
for $i=1, \ldots ,4$ and 
\begin{equation*}
|t^{1/2}x_{5}| \leq \beta_{5} 
\end{equation*}
where the $\beta_{j}$, $j=1, \ldots,5$, are constants depending on the background geometry and on the initial data. 
\end{corollary}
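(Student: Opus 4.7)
The plan is to upgrade the $L^{2}$-type energy control of Theorem \ref{Thm6} to pointwise control via the one-dimensional Gagliardo--Nirenberg inequality
\begin{equation*}
\|f\|_{L^{\infty}(\mathbb{R})}^{2}\leq 2\,\|f\|_{L^{2}(\mathbb{R})}\,\|\partial_{p}f\|_{L^{2}(\mathbb{R})},
\end{equation*}
which is valid for any $f\in H^{1}(\mathbb{R})$ and applies to each component of $\vec{x}(t,\cdot)$ because Theorem \ref{Thm1} guarantees compact $p$-support on every $t$-slice. What is missing for this to succeed is $L^{2}$ control of $\partial_{p}\vec{x}$, which I will obtain by repeating the energy arguments of this section on the $p$-differentiated system.

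First I would differentiate \eqref{xsys} with respect to $p$: since $A(t)$ and $C(t)$ depend only on $t$ and the source factorises as $\vec{\Sigma}(t,p)=\vec{h}(t)g(p)$, the vector $\vec{y}:=\partial_{p}\vec{x}$ satisfies
\begin{equation*}
t\,\partial_{t}\vec{y}+A(t)\,\partial_{p}\vec{y}+(C(t)-c\mathbb{I})\,\vec{y}=t^{c}\,\vec{h}(t)\,g'(p),
\end{equation*}
which is structurally identical to \eqref{xsys}, has a separable smooth source (with $g'\in C_{0}^{\infty}$), and whose solutions inherit compact $p$-support from those of the original system. I would then run Theorems \ref{Thm4}, \ref{Thm5} and \ref{Thm6} verbatim for $\vec{y}$, producing an \textit{a priori} bound
\begin{equation*}
E_{2}[\vec{y}](t)\leq \nu'\,E_{1}[\vec{y}](t_{1}),\qquad t\in[0,t_{1}],
\end{equation*}
that depends only on the $H^{1}$-norm of the initial data and on the background geometry.

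Finally I would combine the two energy bounds componentwise with the Gagliardo--Nirenberg inequality. For $i=1,\dots,4$, the definition of $E_{2}$ immediately gives $\|x_{i}(t,\cdot)\|_{L^{2}}^{2}\leq E_{2}[\vec{x}](t)$ and $\|\partial_{p}x_{i}(t,\cdot)\|_{L^{2}}^{2}\leq E_{2}[\vec{y}](t)$, yielding
\begin{equation*}
|x_{i}(t,p)|^{2}\leq 2\,\sqrt{E_{2}[\vec{x}](t)\,E_{2}[\vec{y}](t)}\leq \beta_{i}^{2}
\end{equation*}
uniformly in $(t,p)$. For the fifth component, the weight $t$ appearing in $E_{2}$ furnishes control of both $\|t^{1/2}x_{5}(t,\cdot)\|_{L^{2}}$ and $\|t^{1/2}\partial_{p}x_{5}(t,\cdot)\|_{L^{2}}$, so applying the Gagliardo--Nirenberg inequality to $f=t^{1/2}x_{5}(t,\cdot)$ delivers the required bound on $|t^{1/2}x_{5}|$. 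The point that deserves the most care, and which I see as the main obstacle, is the verification that the Gagliardo--Nirenberg step uses a \emph{spatial} ($p$) derivative rather than a time derivative: differentiating the PDE in $p$ preserves the symmetric hyperbolic structure that underlies the positive-definiteness argument in Theorem \ref{Thm5}, whereas differentiating in $t$ would introduce singular $1/t$ factors that would destroy the energy estimate at the Cauchy horizon.
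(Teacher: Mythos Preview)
Your proposal is correct and follows essentially the same route as the paper: differentiate the system in $p$ to obtain an identical PDE for $\partial_{p}\vec{x}$ (with source $\vec{h}(t)g'(p)$), rerun the $E_{1}$/$E_{2}$ energy estimates on the derivative, and then use a one-dimensional Sobolev-type embedding to pass from $L^{2}$ control of $\vec{x}$ and $\partial_{p}\vec{x}$ to pointwise bounds. The only cosmetic difference is that the paper invokes the additive Sobolev inequality $|\vec{v}|^{2}\leq\tfrac{1}{2}\int(|\vec{v}|^{2}+|\vec{v},_{p}|^{2})\,dp$ rather than the multiplicative Gagliardo--Nirenberg form you use, but either version yields the stated bounds.
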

\begin{proof} 
We first note that one of the effects of self-similarity has been to produce a differential operator on the left hand side of (\ref{xsys}) which has only $t$-dependent coefficients. This means that the spatial derivative, $\vec{x},_{p}$ obeys the same differential equation as $\vec{x}$, but with a modified source term. It follows that if we define $E_{1}[\vec{x},_{p}](t):=\int t^{2c} \vec{u},_{p} \cdot \vec{u},_{p} \, dp$, we can bound this energy in an exactly similar manner to Theorem \ref{Thm2}. Similarly, we can bound the energy $E_{2}[\vec{x,_{p}}](t)$ using the same argument as that of Theorem \ref{Thm5}. 

Now recall Sobolev's inequality which states that 
\begin{equation*} 
|\vec{v}|^2 \leq \frac{1}{2} \int_{\mathbb{R}} |\vec{v}|^2 + |\vec{v},_{p}|^2 \, dp, 
\end{equation*}
for $\vec{v} \in C_{0}^{\infty}(\mathbb{R}, \mathbb{R}^5)$. Applying this to $\vec{x}$ and using the bound (\ref{fullEbound}) (with the corresponding bound for $\vec{x},_{p}$) produces
\begin{equation*} 
|x_{i}| \leq \beta_{i}
\end{equation*}
for $i=1, \ldots ,4$ and 
\begin{equation*} 
|t^{1/2}x_{5}| \leq \beta_{5}
\end{equation*}
where the $\beta_{j}$, $j=1, \ldots ,5$, are constants depending on the background geometry and on the initial data. 
\hfill$\square$
\end{proof}


\subsection{Behaviour of $\vec{x}$ at the Cauchy Horizon}
\label{CHxbehaviour}
Having established a bound on $\vec{x}$ through the use of energy norms, we now wish to determine the behaviour of $\vec{x}$ as $t \rightarrow 0$, that is, the behaviour on the Cauchy horizon. In particular, we must establish that $\vec{x} \neq 0$ there. To do so, we must first strengthen the bound on $\vec{x}$ by using the method of characteristics. 

\begin{lemma}
\label{Lem5}
Let $\vec{u}$ be a solution of (\ref{5dsys}), subject to Theorem \ref{Thm1}. Then $\vec{x}$ obeys the bounds 
\begin{equation}
\label{strongerbound}
|x_{i}(t,p)| \leq \gamma_{i} \, t^{1/2}, \qquad \qquad |x_{5}(t,p)| \leq \gamma_{5},  
\end{equation}
for $i=1, \ldots, 4$ and constants $\gamma_{j}$, $j=1, \ldots, 5$ which depend only on the initial data and the background geometry of the spacetime. 
\end{lemma}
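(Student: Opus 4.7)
My plan is to extend the method-of-characteristics argument of Lemma \ref{lemubound} to the scaled vector $\vec x = t^c \vec u$ governed by (\ref{xsys}), using the a priori bounds of Corollary \ref{Cor2} — namely $|x_i| \leq \beta_i$ for $i \leq 4$ and $|x_5| \leq \beta_5 t^{-1/2}$ — as input to a two-stage bootstrap. Since $A(t)$ is diagonal, each component $x_i$ satisfies, along its characteristic $p_i(t)$ with $dp_i/dt = a_i(t)/t$, the scalar linear ODE
\begin{equation*}
t\,\frac{dx_i}{dt} + (c_{ii}(t)-c)\,x_i \;=\; t^c\,\Sigma_i \;-\; \sum_{j \neq i} c_{ij}(t)\,x_j,
\end{equation*}
which I integrate backwards from $t_1$ using the integrating factor $\chi_i(t) = \exp(-\int_t^{t_1}(c_{ii}(\tau)-c)/\tau\,d\tau)$. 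A preliminary observation is needed: because the first four rows of $C(t)$ are $O(t)$, $C_0 := C(0)$ has vanishing upper $4 \times 5$ block, so its trace reduces to $c_{55}(0)$; on the other hand Theorem \ref{Thm2} identifies the spectrum of $C_0$ as $\{0,0,0,0,c\}$. Consequently $c_{55}(0) = c$ in the working coordinates, with no basis change required.

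In the first stage I bound $x_i$ for $i = 1,\ldots,4$. Here $c_{ii}(0) = 0$ forces $\chi_i(t)$ to behave as $(t/t_1)^{-c}$ up to an $O(1)$ factor bounded away from zero, so the homogeneous piece contributes $O(t^c) = o(t^{1/2})$ since $c > 3$. For the inhomogeneity, all $c_{ij}(\tau)$ and $\Sigma_i$ in the $i$-th row are $O(\tau)$, and inserting the a priori bounds yields an $O(\tau^{1/2})$ estimate on the source, the dominant contribution coming from the $c_{i5}(\tau)\,x_5$ coupling. A routine evaluation of the resulting integral, exploiting $c > 1/2$ to compute $\chi_i(t)^{-1}\int_t^{t_1}\chi_i(\tau)\,\tau^{-1}\,O(\tau^{1/2})\,d\tau = t^c \cdot O(t^{-c+1/2})$, then produces the desired $|x_i(t,p)| \leq \gamma_i\,t^{1/2}$ uniformly in $p$.

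In the second stage I bound $x_5$ using the newly-improved $O(\tau^{1/2})$ control on $x_1,\ldots,x_4$. Since $c_{55}(0) = c$, the quotient $(c_{55}(\tau)-c)/\tau$ is bounded near $\tau = 0$, so $\chi_5$ and $\chi_5^{-1}$ are bounded on $[0,t_1]$ and the homogeneous contribution is $O(1)$. With $c_{5j}(\tau)$ and $\Sigma_5$ only $O(1)$, the source $\tau^c\Sigma_5 - \sum_{j\neq 5}c_{5j}(\tau)x_j$ becomes $O(\tau^{1/2})$, so the remainder integral reduces to $\int_t^{t_1} O(\tau^{-1/2})\,d\tau$, which converges as $t \to 0$ and yields $|x_5(t,p)| \leq \gamma_5$.

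The main obstacle is the bootstrap ordering itself: without first upgrading $x_j$ ($j \leq 4$) from $O(1)$ to $O(\tau^{1/2})$, the source in the $x_5$ equation would only be $O(1)$ and the remainder integral would acquire a spurious $|\ln t|$ divergence, preventing the argument from closing. The secondary subtlety, addressed by the preliminary remark, is that the eigenvalue $c$ of $C_0$ genuinely occupies the $(5,5)$ slot in the original coordinates, which is forced by the peculiar row structure of $C$ and allows the characteristic estimates to be run for $\vec x$ directly rather than for a transformed vector.
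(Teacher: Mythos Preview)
Your proposal is correct and follows essentially the same two-stage bootstrap via method of characteristics as the paper's proof: first upgrade $x_1,\ldots,x_4$ from $O(1)$ to $O(t^{1/2})$ using the $|x_5|\le\beta_5 t^{-1/2}$ bound from Corollary~\ref{Cor2} and the $O(t)$ structure of the top four rows, then feed this back to obtain $|x_5|\le\gamma_5$. Your trace argument for $c_{55}(0)=c$ is a clean justification of a fact the paper simply asserts from the explicit form of $C(t)$.
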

\begin{proof}
We begin by considering the first four rows of (\ref{xsys}), which we write as 
\begin{equation}
t \frac{\partial x_{i}}{\partial t} + a_{i}(t) \frac{\partial x_{i}}{\partial p} + (c_{i}-c) x_{i} = S_{i}(t,p), 
\label{xieqn}
\end{equation}
where $S_{i}(t,p)=t^{c} \Sigma_{i} - \sum_{{j=1, j \ne i}}^{5} c_{ij}(t) x_{j}$. Here $a_{i}(t)$ and $c_{i}(t)$ represent each entry on the main diagonal of the matrices $A(t)$ and $C(t)$ respectively (and we note that $a_{1}(t)=a_{2}(t)=0$). Since the matrix $C(t)$ is not diagonal, the term $\sum_{{j=1, j \ne i}}^{5} c_{ij}(t) x_{j}$ represents all the off-diagonal terms, which we put into the source. 

We solve equations (\ref{xieqn}) along characteristics. The characteristics are given by $p=p_{i}(t)$ where 
\begin{equation}
\label{chars}
\frac{d p_{i}}{d t} = \frac{a_{i}(t)}{t} = \tilde{a}_{i}(t) \Rightarrow p_{i}(t) = \pi_{i}(t) + \eta_{i}, 
\end{equation}
where we use the fact that $a_{i}(t)$ is $O(t)$ as $t \rightarrow 0$, so that $a_{i}(t)=t\tilde{a}_{i}(t)$, where $\tilde{a}_{i}(t)$ is $O(1)$ as $t \rightarrow 0$. $\pi_{i}(t)=-\int_{t}^{t_{1}} \tilde{a}_{i}(\tau) d\tau$ and $\eta_{i}=p_{i}(t_{1})$. On characteristics, (\ref{xieqn}) becomes 
\begin{equation}
t \frac{d x_{i}}{d t}(t, p_{i}(t)) +(c_{i}-c) x_{i}(t, p_{i}(t)) = S_{i}(t, p_{i}(t)). 
\label{xionchars}
\end{equation}
The integrating factors for these equations are given by $\rme^{\xi_{i}}(t)$ where 
\begin{equation*}
\xi_{i}(t)=-\int_{t}^{t_{1}} \frac{c_{i}(\tau) - c}{\tau} d\tau, 
\end{equation*}
and if we Taylor expand the term inside the integral about $t=0$, we will find that $\rme^{-\xi_{i}}=(t/t_{1})^c \alpha_{i}(t)$, where $\alpha_{i}(t)$ is an $O(1)$ term containing all terms other than the zero order term from the Taylor expansion. The solution to (\ref{xionchars}) is 
\begin{eqnarray}
\label{firstsoln}
\fl x_{i}(t,p_{i}(t)) = \left( \frac{t}{t_{1}} \right)^c \frac{\alpha_{i}(t)}{\alpha_{i}(t_{1})} x_{i}^{(0)}(\eta_{i}) - t^{c}\alpha_{i}(t) \int_{t}^{t_{1}} \frac{\tau^{-c-1}}{\alpha_{i}(\tau)} S_{i}(\tau, p_{i}(\tau)) d\tau, 
\end{eqnarray}
where $x_{i}^{(0)}$ is the initial data at $t=t_{1}$. Now we fix $t \in [0, t_{1}]$ and let $p_{i} \in \mathbb{R}$. Then using (\ref{chars}) we can write (\ref{firstsoln}) as 
\begin{eqnarray}
\label{xisoln}
\fl x_{i}(t,p_{i}(t)) = \left( \frac{t}{t_{1}} \right)^c \frac{\alpha_{i}(t)}{\alpha_{i}(t_{1})} x_{i}^{(0)}(p_{i}(t)-\pi_{i}(t)) \\ \nonumber 
\qquad \qquad  - t^{c}\alpha_{i}(t) \int_{t}^{t_{1}} \frac{\tau^{-c-1}}{\alpha_{i}(\tau)} S_{i}(\tau, p_{i}(t) + \pi_{i}(\tau)-\pi_{i}(t)) d\tau, 
\end{eqnarray}
Now taking the absolute value of (\ref{xisoln}) will produce two integral terms (coming from the two terms in the source $S_{i}$), which we label $I_{i1}$ and $I_{i2}$. $I_{i1}$ is given by
\begin{equation}
I_{i1}=t^{c}|\alpha_{i}(t)| \int_{t}^{t_{1}} \frac{\tau^{-c-1}}{|\alpha_{i}(\tau)|} \tau^{c+1}|h_{i}(\tau)||g(p_{i}(t) + \pi_{i}(\tau)-\pi_{i}(t))| d\tau, 
\label{firstint}
\end{equation}
where we use the fact that $S_{i}(t,p)=t^{c} \Sigma_{i} - \sum_{{j=1, j \ne i}}^{5} c_{ij}(t) x_{j}$ and that $\Sigma_{i}=th_{i}(t)g(p)$ where the  $h_{i}(t)$ terms are $O(1)$ functions as $t \rightarrow 0$. We use the mean value theorem to evaluate this, and conclude that 
\begin{equation}
I_{i1}=t^{c} \frac{|\alpha_{i}(t)|}{|\alpha_{i}(t^*)|} |h_{i}(t^*)||g(p_{i}(t) + \pi_{i}(t^*)-\pi_{i}(t))| (t_{1}-t),.  
\label{firstint2}
\end{equation}
for $t^* \in [t_{1}, t]$. That is, 
\begin{equation}
I_{i1} \leq \mu_{i} t^{c}+O(t^{c+1}), 
\label{firstint3}
\end{equation}
where $\mu_{i}= \sup_{t \in [0, t_{1}]} \frac{|\alpha_{i}(t)|}{|\alpha_{i}(t^*)|} |h_{i}(t^*)||g(p_{i}(t) + \pi_{i}(t^*)-\pi_{i}(t))|t_{1}$. 

The second integral from (\ref{xisoln}) is given by
\begin{eqnarray}
\fl I_{i2}=t^{c}|\alpha_{i}(t)| \int_{t}^{t_{1}} \frac{\tau^{-c-1}}{|\alpha_{i}(\tau)|} \left( \sum_{{j=1, j \ne i}}^{4} |c_{ij}(\tau)| |x_{j}(\tau, p_{i}(t) + \pi_{i}(\tau)-\pi_{i}(t))| \right) \\ \nonumber
\qquad +  \frac{\tau^{-c-1}}{|\alpha_{i}(\tau)|} \left( |c_{i5}(\tau)| |x_{5}(\tau, p_{i}(t) + \pi_{i}(\tau)-\pi_{i}(t))|  \right) d\tau.
\label{secint}
\end{eqnarray}
To handle this integral, we first note that $c_{ij}(t)$ is $O(t)$, so that $c_{ij}(t)=t \tilde{c}_{ij}(t)$, where $\tilde{c}_{ij}(t)$ is $O(1)$ as $t \rightarrow 0$. We then use the bounds on $\vec{x}$ coming from Corollary \ref{Cor2} (as well as using the mean value theorem as before). This produces 
\begin{equation*}
\fl I_{i2}=t^{c} \frac{|\alpha_{i}(t)|}{|\alpha_{i}(t^*)|} \left( \sum_{{j=1, j \ne i}}^{4} |\tilde{c}_{ij}(t^*)| \beta_{j} \frac{(t_{1}^{-c+1}-t^{-c+1})}{-c+1} +   |\tilde{c}_{i5}(t^*)| \beta_{5} \frac{(t_{1}^{-c+1/2}-t^{-c+1/2})}{-c+1/2}  \right), 
\end{equation*}
for $t^* \in [t_{1}, t]$. That is, 
\begin{equation}
\label{secint2}
I_{i2} \leq \nu_{i} t^{1/2}+O(t)
\end{equation}
where $\nu_{i}=\sup_{t \in [0, t_{1}]} \frac{|\alpha_{i}(t)|}{|\alpha_{i}(t^*)|} |\tilde{c}_{i5}(t^*)| \beta_{5} (-c+1/2)^{-1}$. Combining (\ref{firstint3}), (\ref{secint2}) and (\ref{xisoln}) produces 
\begin{equation}
\label{newxibound}
|x_{i}(t,p)| \leq \gamma_{i} \, t^{1/2}, 
\end{equation}
where the $\gamma_{i}$ factors are constants depending on the initial data and the background geometry. 

Using (\ref{newxibound}), it is also possible to improve our previous bound on $x_{5}(t,p)$. The equation which $x_{5}$ obeys is
\begin{equation}
t \frac{\partial x_{5}}{\partial t} + a_{5}(t) \frac{\partial x_{5}}{\partial p} + (c_{5}-c) x_{5} = S_{5}(t,p), 
\label{x5eqn}
\end{equation}
where $S_{5}(t,p)=t^{c} \Sigma_{5} - \sum_{{j=1}}^{4} c_{5j}(t) x_{j}$. The characteristics for this equation are given by
\begin{equation*}
\frac{d p_{5}}{d t} = \frac{a_{5}(t)}{t} \Rightarrow p_{5}(t) = \pi_{5}(t) + \eta_{5}, 
\end{equation*}
where $\pi_{5}(t)=-\int_{t}^{t_{1}} \frac{{a}_{5}(\tau)}{\tau} d\tau$ and $\eta_{5}=p_{5}(t_{1})$. On characteristics, (\ref{x5eqn}) becomes 
\begin{equation}
t \frac{d x_{5}}{d t}(t, p_{5}(t))  + (c_{5}-c) x_{5}(t, p_{5}(t))  = S_{5}(t, p_{5}(t)). 
\label{x5onchars}
\end{equation}
The integrating factor for this equation is given by $\rme^{\xi_{5}}(t)$ where 
\begin{equation*}
\xi_{5}(t)=-\int_{t}^{t_{1}} \frac{c_{5}(\tau) - c}{\tau} d\tau, 
\end{equation*}
and since $c_{5}(t=0)=c$, we see that $\rme^{-\xi_{5}}(t)$ is an $O(1)$ function as $t \rightarrow 0$. The solution to (\ref{x5onchars}) is 
\begin{eqnarray}
\label{x5soln}
\fl x_{5}(t, p_{5}(t))  = \rme^{-\xi_{5}} x_{5}^{(0)}(p_{5}(t)-\pi(t))  \\ \nonumber
\fl \qquad  - \rme^{-\xi_{5}} \int^{t_{1}}_{t} \frac{\rme^{\xi_{5}}(\tau)}{\tau} \left( \tau^{c}\Sigma_{5}(\tau, p_{5}(t) + \pi_{5}(\tau)-\pi_{5}(t))  -   \sum_{j=1}^{4} c_{5 j}(\tau) x_{j}(\tau, p_{5}(t) + \pi_{5}(\tau)-\pi_{5}(t)) \right) d \tau. 
\end{eqnarray}
The integral above contains two terms, 
\begin{equation}
I_{1}=\rme^{-\xi_{5}} \int^{t_{1}}_{t} \frac{\rme^{\xi_{5}}(\tau)}{\tau} \tau^{c}\Sigma_{5}(\tau, p_{5}(t) + \pi_{5}(\tau)-\pi_{5}(t)) d \tau, 
\end{equation}
and 
\begin{equation}
I_{2}= \rme^{-\xi_{5}} \int^{t_{1}}_{t} \frac{\rme^{\xi_{5}}(\tau)}{\tau}  \left(\sum_{j=1}^{4} c_{5 j}(\tau) x_{j}(\tau, p_{5}(t) + \pi_{5}(\tau)-\pi_{5}(t)) \right) d \tau. 
\end{equation}
We recall that $\Sigma_{5}(t,p) = h_{5}(t)g(p)$ and use the mean value theorem to show that 
\begin{equation}
\label{bound1}
I_{1}=\rme^{-\xi_{5}} \rme^{\xi_{5}}(t^*) h_{5}(t^*)g(p_{5}(t) + \pi_{5}(\tau)-\pi_{5}(t)) \frac{(t_{1}^c-t^c)}{c}, 
\end{equation}
for $t^* \in [t_{1}, t]$. For the second integral, we use the mean value theorem and the bound (\ref{newxibound}) which arises from the first part of this theorem. This produces 
\begin{equation}
\label{bound2}
I_{2}= \left(2 \rme^{-\xi_{5}} \rme^{\xi_{5}}(t^*) \sum_{j=1}^{4} c_{5 j}(t^*) \gamma_{j} \right) (t_{1}^{1/2}-t^{1/2}), 
\end{equation}
for $t^* \in [t_{1}, t]$. Combining (\ref{bound1}), (\ref{bound2}) and (\ref{x5soln}) produces 
\begin{equation*}
|x_{5}(t,p)| \leq \gamma_{5}, 
\end{equation*}
where $\gamma_{5}$ is a constant depending only on the initial data and the background geometry. 

\hfill$\square$
\end{proof}
The next lemma shows that we can bound $t^{1/2}x_{i},_{t}$. We use this in Lemma \ref{Lem6} to construct a Cauchy sequence of $\vec{x}$-values in $L^1$. 

\begin{lemma}
\label{Lemderivbound}
Let $\vec{u}$ be a solution to (\ref{5dsys}), subject to Theorem \ref{Thm1}. Define $\vec{\chi}:=\partial \vec{x} / \partial t$. Then 
\begin{equation}
\label{chiibound}
|t^{1/2}\chi_{i}(t,p)| \leq \eta_{1}, 
\end{equation}
for $i=1, \ldots, 4$, where $\eta_{1}$ is a constant depending only on the background geometry and the initial data. 
\end{lemma}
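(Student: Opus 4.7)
My plan is to avoid any new energy or characteristic argument and instead extract the bound algebraically from the PDE itself. Rearranging the $i$-th row of (\ref{xsys}) for $i=1,\ldots,4$ and writing $a_i(t)=t\tilde{a}_i(t)$ (permissible because the first four rows of $A(t)$ are $O(t)$ at $t=0$ and $A(t)$ is diagonal), I would obtain
\begin{equation*}
t^{1/2}\chi_i \;=\; -\,t^{1/2}\tilde{a}_i(t)\,\partial_p x_i \;-\; \frac{c_i(t)-c}{t^{1/2}}\,x_i \;+\; t^{-1/2}S_i(t,p),
\end{equation*}
with $S_i=t^c\Sigma_i-\sum_{j\neq i}c_{ij}(t)x_j$ as in the proof of Lemma \ref{Lem5}. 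The task then reduces to showing that each of the three terms on the right is bounded by a constant depending only on the background geometry and on the initial data.

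The middle term is immediate from Lemma \ref{Lem5}: $c_i(t)-c$ is $O(1)$ at $t=0$ (since $c_i(t)$ itself is $O(t)$), and $|x_i|\leq\gamma_i t^{1/2}$, so the ratio is uniformly bounded. The source term splits similarly. Recall from Section \ref{sec:lqblowup} that $\vec{h}(t)=(0,0,tk_3(t),tk_4(t),k_5(t))$, so $\Sigma_i$ already carries an explicit factor of $t$ for $i=3,4$ and vanishes for $i=1,2$; hence $t^{c-1/2}\Sigma_i$ is $O(t^{c+1/2})$ and bounded on $[0,t_1]$. For the off-diagonal couplings, $c_{ij}(t)=O(t)$ because the first four rows of $C(t)$ vanish at the horizon, while $|x_j|\leq\gamma_j t^{1/2}$ for $j\leq 4$ and $|x_5|\leq\gamma_5$ by Lemma \ref{Lem5}, giving contributions of order $t$ and $t^{1/2}$ respectively.

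The remaining piece is the first term, which requires a pointwise bound on $\partial_p x_i$. Because the coefficients of (\ref{xsys}) depend only on $t$, differentiating with respect to $p$ produces a system of exactly the same form for $\partial_p\vec{x}$, with source $t^c\vec{h}(t)g'(p)$ in place of $t^c\vec{h}(t)g(p)$. Since $g\in C_0^\infty$, so is $g'$, and the full chain of estimates behind Corollary \ref{Cor2} (and, more sharply, Lemma \ref{Lem5}) applies verbatim to $\partial_p\vec{x}$, yielding uniform bounds $|\partial_p x_i|\leq\tilde\beta_i$ on $(0,t_1]$. Consequently $t^{1/2}\tilde{a}_i(t)\partial_p x_i$ is $O(t^{1/2})$ and bounded. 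Combining the three estimates gives $|t^{1/2}\chi_i|\leq\eta_1$ as claimed. The only point requiring care is checking that the differentiated system really does satisfy the hypotheses of Corollary \ref{Cor2} and Lemma \ref{Lem5} without modification, but this is essentially automatic: the coefficient matrices are untouched by $\partial_p$ and the new source retains the separable form $\vec{h}(t)g'(p)$ used throughout Sections \ref{sec:reduction}--\ref{sec:chdivbehaviour}.
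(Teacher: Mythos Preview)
Your proof is correct and takes a genuinely different, more elementary route than the paper. The paper differentiates (\ref{xsys}) with respect to $t$, obtains a new first-order system for $\vec{\chi}$, and then solves that system along characteristics, estimating the resulting integrals with the mean value theorem and the bounds of Lemma \ref{Lem5}. You instead read $t\chi_i$ directly off the $i$-th row of (\ref{xsys}) and bound the right-hand side term by term, using only the pointwise estimates already established in Corollary \ref{Cor2} and Lemma \ref{Lem5} (applied both to $\vec{x}$ and, via the standard $p$-differentiation trick, to $\partial_p\vec{x}$). This is cleaner and requires no new integration.

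One substantive difference is worth flagging. The paper's characteristic argument in fact delivers the \emph{stronger} conclusion $|\chi_i|\leq\eta_1$ for $i=1,\ldots,4$ (no factor of $t^{1/2}$), even though the lemma is stated with the weaker bound; and it is this stronger bound that is actually invoked in the proof of Lemma \ref{Lem6}. Your algebraic route yields exactly the stated estimate and no more, because the term $(c_i(t)-c)x_i/t^{1/2}$ is genuinely $O(1)$: $c_i(t)-c\to -c\neq 0$ while $|x_i|\leq\gamma_i t^{1/2}$. So your argument proves the lemma as written, but if the downstream application requires $|\chi_i|$ bounded rather than $|t^{1/2}\chi_i|$ bounded, you would need either to revert to the characteristic method or to first sharpen Lemma \ref{Lem5} to $|x_i|\leq\gamma_i t$.
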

\begin{proof}
Define $\vec{\chi}:=\partial \vec{x} / \partial t$. By differentiating (\ref{xsys}), we see that $\vec{\chi}$ obeys
\begin{equation}
\label{veceqn}
t \frac{\partial \vec{\chi}}{\partial t} + A(t) \frac{\partial \vec{\chi}}{\partial p} + (C(t)+(1-c)\mathbb{I}) \vec{\chi}=\vec{\sigma}(t,p),
\end{equation}
where $\vec{\sigma}=(t^c \vec{\Sigma}),_{t} - A,_{t} \frac{\partial \vec{x}}{\partial p}-C,_{t}\vec{x}$. For $i=1, \ldots, 4$ this becomes 
\begin{equation}
\label{chieqn}
\fl t \frac{\partial \chi_{i}}{\partial t} + a_{i}(t) \frac{\partial \chi_{i}}{\partial p} + (c_{i}(t)+1-c) \chi_{i}=\sigma_{i}(t,p)-\sum_{j=1, j \neq i}^{5} c_{ij}(t) \chi_{j}=S_{i}(t,p),
\end{equation}
where $a_{i}(t)$ and $c_{i}(t)$ label the diagonal elements of the $A(t)$ and $C(t)$ matrices appearing in (\ref{veceqn}) and since $C(t)$ is not diagonal, the $c_{ij}(t)$ label the off-diagonal elements which we put in the source term $S_{i}(t,p)$. We note that $a_{1}(t)=a_{2}(t)=0$. 

As in Lemma \ref{Lem5}, we solve (\ref{chieqn}) on characteristics (and these are the same characteristics which appeared in Lemma \ref{Lem5}). By a similar method to that which lead to (\ref{xisoln}), we find that the solutions to (\ref{chieqn}) can be written as 
\begin{eqnarray}
\label{chisoln}
\chi_{i}(t,p)=\alpha_{i}(t) \left(\frac{t}{t_{1}} \right)^{c-1} \chi_{i}^{(0)}(p_{i}-\pi_{i}(t)) -  \\ \nonumber 
\qquad \qquad \alpha_{i}(t)t^{c-1} \int_{t}^{t_{1}} \frac{\tau^{-c}}{\alpha_{i}(\tau)} S_{i}(\tau, p_{i} - \pi_{i}(\tau) +\pi_{i}(t)) d \tau, 
\end{eqnarray}
where $\chi_{i}^{(0)}(p_{i}-\pi_{i}(t))$ is the initial data and the integrating factor is $\rme^{-\xi_{i}}(t)=t^{c-1}t_{1}^{1-c} \alpha_{i}(t)$. As in Lemma \ref{Lem5}, we Taylor expand the integrating factors and put all higher order terms into $O(1)$ functions $\alpha_{i}(t)$. We note that $c-1 >0$. Taking an absolute value of (\ref{chisoln}) produces two integrals (coming from the two terms in $S_{i}(t,p)$), which we label 
\begin{equation}
\label{i1int}
I_{i1}(t,p)=|\alpha_{i}(t)| t^{c-1} \int_{t}^{t_{1}} \frac{\tau^{-c}}{\alpha_{i}(\tau)} |\sigma_{i}(\tau, p_{i}(\tau) +\pi_{i}(\tau)-\pi_{i}(t))| d \tau, 
\end{equation}
and 
\begin{equation}
\label{i2int}
\fl I_{i2}(t,p)=|\alpha_{i}(t)| t^{c-1} \int_{t}^{t_{1}} \frac{\tau^{-c}}{\alpha_{i}(\tau)} \sum_{j=1, j \neq i}^{5} \tau \tilde{c}_{ij}(\tau) \chi_{j}(\tau, p_{i}(\tau) +\pi_{i}(\tau)-\pi_{i}(t)) d \tau, 
\end{equation}
where we recall that $c_{ij}(t)$ is $O(t)$, so that $c_{ij}(t)=t \tilde{c}_{ij}(t)$, where $\tilde{c}_{ij}(t)$ is $O(1)$. Now to deal with $I_{i1}$, we note that $\sigma_{i}(t,p) = (t^c \Sigma_{i}),_{t} - a_{i},_{t} \frac{\partial x_{i}}{\partial p}-C,_{t}\vec{x}|_{i}$, where $C,_{t}\vec{x}|_{i}$ indicates the $i^{th}$ row of the matrix $C,_{t} \vec{x}$. Using the bounds (\ref{strongerbound}) on $\vec{x}$ (and note that $C,_{t}\vec{x}|_{i}$ includes $x_{5}$), we can see that overall $\sigma_{i}(t,p)$ is $O(1)$ in $t$. It is also $C^{\infty}$ in $t$, so we can apply the mean value theorem to find
\begin{equation*}
I_{i1}(t,p) = t^{c-1} \bigg| \frac{\alpha_{i}(t^*)\sigma_{i}(t^*, p_{i}(t^*) + \pi_{i}(t^*)-\pi_{i}(t))}{\alpha_{i}(t^*)} \bigg| \frac{(t_{1}^{-c+1}-t^{-c+1})}{(-c+1)}, 
\end{equation*}
for $t^* \in [t_{1}, t]$. We can abbreviate this by writing 
\begin{equation}
\label{firstbound}
I_{i1}(t,p) \leq \eta_{1},
\end{equation}
where $\eta_{1}$ is a constant depending on the initial data and the background geometry (it inherits this dependence from the bounds on $\vec{x}$ entering into $\sigma_{i}$). 
For $I_{i2}$, we note that from (\ref{xsys}), we can deduce that $|t \vec{\chi}|$ is $O(1)$ (actually, the only $O(1)$ term is the $a_{5}(t) \partial x_{5} / \partial p$ term) and it is also $C^{\infty}$ in $t$, so as before, we apply the mean value theorem to find
\begin{equation*}
\fl I_{i2}(t,p)= \sum_{j=1, j \neq i}^{5} \bigg| \frac{\alpha_{i}(t) \tilde{c}_{ij}(t^*)}{\alpha_{i}(t^*)} \bigg| |t^* \chi_{j}(t^*, p_{i}(t*) +\pi_{i}(t^*)-\pi_{i}(t)) | \frac{(t_{1}^{-c+2}-t^{-c+2})}{{(-c+2)}}, 
\end{equation*}
for $t^* \in [t_{1}, t]$, which we can summarise as 
\begin{equation}
\label{secondbound}
I_{i2}(t,p) \leq \eta_{2}t. 
\end{equation}
Combining (\ref{chisoln}), (\ref{firstbound}) and (\ref{secondbound}) produces 
\begin{equation}
|x_{i},_{t}(t,p)|=|\chi_{i}(t,p)| \leq \eta_{1},
\end{equation}
for $i=1, \ldots, 4$, where we neglect higher order terms. 

\hfill$\square$
\end{proof}
We next use this result to show that we can define a sequence $\vec{x}^{(n)}$ of $\vec{x}$-values which is Cauchy in $L^1(\mathbb{R}, \mathbb{R}^5)$. 

\begin{lemma}
\label{Lem6}
Let $\{t^{(n)}\}$ be a sequence of $t$-values in $(0,t_{1}]$ with $\lim_{t \rightarrow 0} t^{(n)}=0$. For each $n \geq 1$, define $\vec{x}^{(n)}(p)=\vec{x}(t^{(n)}, p)$. Then $\{\vec{x}^{(n)}\}$ is a Cauchy sequence in $L^1(\mathbb{R}, \mathbb{R}^5)$. 
\end{lemma}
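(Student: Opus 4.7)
The plan is to reduce the Cauchy property to the integrability of a time-derivative norm near the Cauchy horizon, using the fundamental theorem of calculus together with Fubini's theorem. Concretely, I would write, for each $p \in \mathbb{R}$,
\begin{equation*}
\vec{x}^{(n)}(p) - \vec{x}^{(m)}(p) = \int_{t^{(m)}}^{t^{(n)}} \vec{\chi}(s,p)\, ds,
\end{equation*}
take absolute values, integrate over $p$, and swap the order of integration (justified at each $s>0$ by the smoothness and compact support of $\vec{x}(s,\cdot)$ from Theorem \ref{Thm1}) to obtain
\begin{equation*}
\|\vec{x}^{(n)} - \vec{x}^{(m)}\|_1 \leq \int_{t_{\min}}^{t_{\max}} \|\vec{\chi}(s)\|_1\, ds,
\end{equation*}
where $t_{\min}$ and $t_{\max}$ denote the smaller and larger of $t^{(n)}, t^{(m)}$. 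The problem then becomes that of showing the right-hand side vanishes as $n,m \to \infty$, which will follow once $\|\vec{\chi}(s)\|_1$ is integrable on $(0,t_1]$.

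For the first four components I would combine Lemma \ref{Lemderivbound}, which gives $|\chi_i(s,p)| \leq \eta_1 s^{-1/2}$, with the support estimate $\mathrm{vol}(\mathrm{supp}[\vec{x}](s)) \leq C|\ln s|$ from Lemma \ref{lemspread}, to arrive at $\|\chi_i(s)\|_1 \leq C s^{-1/2}|\ln s|$. This is integrable near $s=0$ (change of variable $u = \sqrt{s}$), so the first four components contribute arbitrarily little to the Cauchy estimate for $n,m$ large.

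For the fifth component the situation is genuinely harder, since Lemma \ref{Lemderivbound} does not supply a bound on $\chi_5$, and reading $\chi_5$ off the PDE produces a term $s^{-1}a_5(s)\partial_p x_5$ whose $L^1$-norm is only controlled as $s^{-1}|\ln s|$, which is not integrable. My plan is to pass to the characteristic coordinate for $x_5$: setting $\eta = p - \pi_5(s)$ and $\tilde{x}_5(s,\eta) := x_5(s, \eta + \pi_5(s))$, the function $\tilde{x}_5$ satisfies a pure ODE in $s$ with no $\partial_p$ term,
\begin{equation*}
s\dot{\tilde{x}}_5 + (c_5(s)-c)\tilde{x}_5 = s^c h_5(s)g(\eta+\pi_5(s)) - \sum_{j=1}^{4} c_{5j}(s)\tilde{x}_j(s,\eta).
\end{equation*}
The bounds $|\tilde{x}_j| \leq \gamma_j s^{1/2}$ from Lemma \ref{Lem5}, the uniform bound on $\tilde{x}_5$ from Corollary \ref{Cor2}, analyticity of $c_5$ with $c_5(0)=c$, and $c>1$ combine to give $\int |\dot{\tilde{x}}_5(s,\eta)|\,d\eta \leq C(s^{c-1}+s^{-1/2}|\ln s|+|\ln s|)$, which is integrable over $(0,t_1]$. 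This would show $\{\tilde{x}_5^{(n)}\}$ is Cauchy in $L^1(\mathbb{R},d\eta)$, and translation invariance of Lebesgue measure then relates this back to $\{x_5^{(n)}\}$ in $L^1(\mathbb{R},dp)$ via the substitution $p = \eta + \pi_5(t^{(n)})$.

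The hardest step will be the last: reconciling Cauchyness in the characteristic $\eta$-frame with Cauchyness in the physical $p$-frame, since the shift $\pi_5(t^{(n)}) - \pi_5(t^{(m)})$ between the two frames grows logarithmically rather than tending to zero. Handling this mismatch carefully, while retaining the integrable control established on characteristics, is the main obstacle I foresee in turning this plan into a complete proof.
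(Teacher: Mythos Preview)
Your treatment of the first four components is essentially the paper's: both use the bound on $\chi_i$ from Lemma~\ref{Lemderivbound} together with the logarithmic support growth of Lemma~\ref{lemspread}; the only cosmetic difference is that you integrate $\|\chi_i(s)\|_1$ over $s$ while the paper applies the mean value theorem pointwise in $t$ before integrating in $p$.

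For the fifth component your instinct to pass to characteristics is correct, and you have correctly located the obstruction: showing $\{\tilde{x}_5^{(n)}\}$ is Cauchy in the $\eta$-frame does not give $\{x_5^{(n)}\}$ Cauchy in the $p$-frame, because the shift $\pi_5(t^{(n)})$ diverges. The paper circumvents this by never leaving the $p$-frame. It writes the characteristic solution explicitly as
\[
x_5(t,p)=\rme^{-\xi_5(t)}\,x_5^{(0)}\!\bigl(p-\pi_5(t)\bigr)-F(t,p),
\]
and treats the two pieces separately. For the homogeneous piece it observes that, since $\pi_5(t)\to\infty$, for $n,m$ large the argument $p-\pi_5(t^{(n)})$ falls outside the compact support of $x_5^{(0)}$, so that term drops out of the difference. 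For the Duhamel piece it computes $\partial_t F$ directly as a function of $(t,p)$: the dangerous transport term $t^{-1}a_5(t)\,\partial_p x_5$ never appears, because it has already been absorbed into the characteristic structure defining $F$. One gets $\partial_t F=-\dot\xi_5\,F - t^{-1}S_5$, and the bounds $|x_j|\leq\gamma_j t^{1/2}$ from Lemma~\ref{Lem5} (plus $c_5(0)=c$) yield $|\partial_t F|\leq \mu\,t^{-1/2}$. A mean value argument and the support estimate then close the Cauchy estimate for $F$ exactly as for the first four components.

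So the missing idea in your plan is this splitting: isolate the drifting initial-data contribution (which must be handled as a translation in $p$, not via a time-derivative bound) from the integral term $F$ (which, unlike $x_5$ itself, has a $t$-derivative controllable directly in the physical coordinates). Once you make that split, the frame-matching problem you flagged disappears.
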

\begin{proof}
We define $\vec{x}^{(n)}:=\vec{x}(t^{(n)}, p)$, where the sequence $\{ t^{(n)} \}_{n=0}^{\infty}$ tends to zero as $n \rightarrow \infty$. The mean value theorem produces
\begin{equation}
| \vec{x}(t^{(m)}, p) - \vec{x}(t^{(n)}, p) | = |\vec{x},_{t}(t^*)| |t^{(m)} - t^{(n)}|, 
\label{xcauchy}
\end{equation}
for some $t^* \in (t^{(m)}, t^{(n)})$. For $i=1, \ldots, 4$, we can use the bound from Lemma \ref{Lemderivbound} and integrate with respect to $p$ to give 
\begin{equation*}
|| x_{i}(t^{(m)}, p) - x_{i}(t^{(n)}, p) ||_{1} \leq |t^{(m)} - t^{(n)}| \eta_{1} \int_{p_{1}}^{p_{2}} \, dp , 
\end{equation*}
where $p_{1}=\mbox{max}_{p \in \mathbb{R}} (\mbox{supp}[x_{i}(t^{(m)}, p)], \mbox{supp}[x_{i}(t^{(n)}, p)])$ and  \\ 
$p_{2}=\mbox{min}_{p \in \mathbb{R}} (\mbox{supp}[x_{i}(t^{(m)}, p)], \mbox{supp}[x_{i}(t^{(n)}, p)])$. From Lemma \ref{lemspread}, we know that $\mbox{supp}[x_{i}(t,p)]$ satisfies $\mbox{supp}[x_{i}(t,p)] \sim \ln(t)+\mu$, where $\mu$ is a constant which represents terms that remain finite as $t \rightarrow 0$. Now since $0 \leq t^{(m)} \leq t^{(n)} \leq t_{1}$, the largest support is that at $t^{(m)}$, so that 
\begin{equation*}
|| x_{i}(t^{(m)}, p) - x_{i}(t^{(n)}, p) ||_{1} \leq |t^{(m)} - t^{(n)}| \eta_{1} (\ln(t^{(m)})+\mu). 
\end{equation*}
We can take the $n \rightarrow \infty$ limit above and see that $x_{i}^{(n)}$, for $i=1, \ldots, 4$, is a Cauchy sequence with respect to the $L^1$-norm.

To show that $x_{5}^{(n)}$ is a Cauchy sequence in $L^1$ we use a different tactic. As in (\ref{x5soln}), we write the solution to the $x_{5}$ equation of motion as 
\begin{equation}
\label{Fx5soln}
x_{5}(t,p)=\rme^{-\xi_{5}}(t) x_{5}^{(0)}(p - \pi_{5}(t)) - F(t,p), 
\end{equation}
where 
\begin{equation}
\label{Fdef}
F(t,p)=\rme^{-\xi_{5}}(t) \int_{t}^{t_{1}} \frac{\rme^{\xi_{5}(\tau)}}{\tau} S_{5}(\tau, p + \pi_{5}(\tau) - \pi_{5}(t)) d \tau. 
\end{equation}
Here $\rme^{\xi_{5}}(t)$ is the integrating factor, and $\xi_{5}(t)=-\int_{t}^{t_{1}} \frac{c_{5}(\tau)-c}{\tau} d \tau$, $\pi_{5}(t)=-\int_{t}^{t_{1}}\frac{a_{5}(\tau)}{\tau} d \tau$ and $c_{5}(t)$ and $a_{5}(t)$ are the $(5,5)$-components of the $C(t)$ and $A(t)$ matrices respectively. $S_{5}(t,p)$ is the source term, $S_{5}(t,p)=t^c \Sigma_{5}(t,p) - \sum_{j=1}^{5}c_{5j}(t) x_{j}(t,p)$, where $c_{5j}(t)$ are the components of the fifth row of the $C(t)$ matrix appearing in (\ref{xsys}). Using (\ref{Fx5soln}) we can write 
\begin{eqnarray}
\label{x5cauchy}
\fl |x_{5}(t^{(n)}, p)-x_{5}(t^{(m)}, p)| \leq \\ \nonumber 
\fl |\rme^{-\xi_{5}}(t^{(n)}) x_{5}^{(0)}(p-\pi_{5}(t^{(n)}))-\rme^{-\xi_{5}}(t^{(m)}) x_{5}^{(0)}(p-\pi_{5}(t^{(m)}))|+|F(t^{(n)}, p)-F(t^{(m)}, p)|. 
\end{eqnarray}
Now if we suppose that $t^{(n)}$ and $t^{(m)}$ are very close to the Cauchy horizon, $t=0$, then tracing the characteristic back to the initial data surface, we can see that it will intersect the initial data surface outside the compact support of the solution. That is, there exists some $N$ such that for $n, m \geq N$, $x_{5}^{(0)}(p-\pi_{5}(t^{(n)})) = x_{5}^{(0)}(p-\pi_{5}(t^{(m)}))=0$. 

Now for the second term in (\ref{x5cauchy}), we use the mean value theorem to show that for $t^* \in [t^{(n)}, t^{(m)}]$,  
\begin{equation}
\label{mvtf}
|F(t^{(n)}, p)-F(t^{(m)}, p)| \leq \bigg| \frac{\partial F}{\partial t}(t^*, p) \bigg| |t^{(n)}-t^{(m)}|. 
\end{equation}
We can easily calculate 
\begin{equation}
\label{Fderiv}
\frac{\partial F}{\partial t} = -\frac{d \xi_{5}}{dt} F(t,p) - \frac{S_{5}(t,p)}{t}. 
\end{equation}
Now using the bounds (\ref{strongerbound}) on $\vec{x}$ (and the fact that $\Sigma_{5}=h_{5}(t)g(p)$ where $h_{5}(t)$ is $O(1)$ as $t \rightarrow 0$) we can show that 
\begin{equation*}
 \bigg| \frac{\partial F}{\partial t}(t, p) \bigg| \leq \mu \, t^{-1/2}, 
\end{equation*}
where $\mu$ is a constant depending only on the background geometry and the initial data. Combining this with (\ref{mvtf}) produces 
\begin{equation}
\label{Fbound}
|F(t^{(n)}, p)-F(t^{(m), p})| \leq \mu (t^*)^{-1/2}|t^{(n)}-t^{(m)}|. 
\end{equation}
Returning to (\ref{x5cauchy}), assuming $n, m \geq N$ and using (\ref{Fbound}) produces 
\begin{equation*}
|x_{5}(t^{(n)}, p)-x_{5}(t^{(m)}, p)| \leq \mu (t^*)^{-1/2}|t^{(n)}-t^{(m)}|. 
\end{equation*}
Finally, we take the $L^1$-norm to find 
\begin{equation*}
||x_{5}(t^{(n)}, p)-x_{5}(t^{(m)}, p)||_{1} \leq \mu (t^*)^{-1/2}|t^{(n)}-t^{(m)}| \int_{p_{1}}^{p_{2}} \, dp ,  
\end{equation*}
where $p_{1}=\mbox{max}_{p \in \mathbb{R}} (\mbox{supp}[x_{5}(t^{(m)}, p)], \mbox{supp}[x_{5}(t^{(n)}, p)])$ and  \\ 
$p_{2}=\mbox{min}_{p \in \mathbb{R}} (\mbox{supp}[x_{5}(t^{(m)}, p)], \mbox{supp}[x_{5}(t^{(n)}, p)])$. From Lemma \ref{lemspread}, we know that $\mbox{supp}[x_{5}(t,p)]$ satisfies $\mbox{supp}[x_{i}(t,p)] \sim \ln(t)+\mu$, where $\mu$ is a constant which represents terms that remain finite as $t \rightarrow 0$. Now since $0 \leq t^{(m)} \leq t^{(n)} \leq t_{1}$, the largest support is that at $t^{(m)}$, so that 
\begin{equation*}
||x_{5}(t^{(n)}, p)-x_{5}(t^{(m)}, p)||_{1} \leq \mu (t^*)^{-1/2}|t^{(n)}-t^{(m)}| (\ln(t^{(m)}+\mu).   
\end{equation*}
and taking the limit $n \rightarrow \infty$, we see that $x_{5}^{(n)}$ is also a Cauchy sequence in $L^1$. So we can conclude that $\vec{x}^{(n)}$ is a Cauchy sequence in $L^1(\mathbb{R}, \mathbb{R}^5)$. 

\hfill$\square$
\end{proof}
Now with Lemma \ref{Lem6} in place, and since we know that $\vec{x} \in L^1(\mathbb{R}, \mathbb{R}^5)$ for $t \in (0, t_{1}]$, we can show that $\vec{x}$ does not vanish on the Cauchy horizon. To do this, we will make use of two theorems from real analysis, which we state here. The proofs of both theorems are standard; see \cite{LiebLoss} for details. 

\begin{theorem}
\label{dominated}
Let $1 \leq p \leq \infty$. Suppose $\Omega$ is some set, $\Omega \subseteq \mathbb{R}^n$, and let $f^{(i)}$, $i=1, 2, \ldots$ be a Cauchy sequence in $L^p(\Omega)$. Then there exists a unique function $f \in L^p$ such that $||f^{(i)} - f||_{p} \rightarrow 0$ as $i \rightarrow \infty$, that is, $f^{(i)}$ converges strongly in the $L^p$-norm to $f$ as $i \rightarrow \infty$. 

Furthermore, there exists a subsequence $f^{(i_{1})}, f^{(i_{2})}, \ldots $, $i_{1}<i_{2}< \ldots$ and a non-negative function $F \in L^p(\Omega)$ such that 
\begin{itemize}
\item Domination: $|f^{(i_{k})}(x)| \leq F(x)$ for all $k$ and for a dense subset of $x \in \Omega$,
\item Pointwise Convergence: $\lim_{k \rightarrow \infty} f^{(i_{k})}(x) = f(x)$ for a dense subset of $x \in \Omega$. 
\end{itemize}
\end{theorem}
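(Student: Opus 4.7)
The plan is to prove this by the standard Riesz--Fischer argument, in two stages: first establish completeness of $L^p$ (producing the unique limit $f$ with strong convergence), then extract a subsequence whose partial telescoping sums of absolute differences provide the dominating function $F$ and the pointwise convergence.

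First I would exploit the Cauchy property to extract a rapidly convergent subsequence $f^{(i_k)}$ with $\|f^{(i_{k+1})} - f^{(i_k)}\|_p \leq 2^{-k}$ for all $k \geq 1$. The idea is to set
\begin{equation*}
G_n(x) := |f^{(i_1)}(x)| + \sum_{k=1}^{n} |f^{(i_{k+1})}(x) - f^{(i_k)}(x)|,
\end{equation*}
which is a monotone increasing sequence of non-negative measurable functions. The triangle inequality in $L^p$ immediately yields $\|G_n\|_p \leq \|f^{(i_1)}\|_p + 1$ uniformly in $n$. The Monotone Convergence Theorem then produces a pointwise a.e.\ limit $G(x) := \lim_{n \to \infty} G_n(x)$ with $\|G\|_p \leq \|f^{(i_1)}\|_p + 1 < \infty$, so in particular $G \in L^p(\Omega)$ and $G(x) < \infty$ for a dense (in fact co-null) subset of $\Omega$.

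Second, the finiteness of $G$ a.e.\ means the telescoping series $\sum_k (f^{(i_{k+1})} - f^{(i_k)})$ converges absolutely a.e., so $f^{(i_k)}(x) \to f(x)$ pointwise a.e.\ for some measurable function $f$. Since $|f^{(i_k)}(x)| \leq G(x)$ for every $k$ and for a.e.\ $x$, the function $F := G$ supplies the required domination, and Fatou's lemma applied to $|f^{(i_k)}|^p$ shows $f \in L^p(\Omega)$. To upgrade to strong convergence of the full sequence, I would apply Fatou once more to $|f^{(i_k)} - f^{(j)}|^p$ in the variable $k$: for each fixed $j$,
\begin{equation*}
\|f - f^{(j)}\|_p \leq \liminf_{k \to \infty} \|f^{(i_k)} - f^{(j)}\|_p,
\end{equation*}
and the right-hand side is made arbitrarily small by choosing $j$ large, using the Cauchy property of the original sequence. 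Uniqueness of $f$ modulo a.e.\ equality is immediate from the uniqueness of limits in the $L^p$ norm.

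The main obstacle, and the one genuine piece of cleverness, is the construction of the dominating function in the first step: one has to resist the temptation to try to dominate $|f^{(i)}|$ by a single $L^p$ function directly (which generally fails), and instead exploit the geometric decay of $\|f^{(i_{k+1})} - f^{(i_k)}\|_p$ to force $\ell^1$-summability of the differences and hence an $L^p$ majorant via monotone convergence. Everything else is routine bookkeeping with Fatou's lemma and the triangle inequality.
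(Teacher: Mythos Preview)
Your proposal is correct and follows the standard Riesz--Fischer argument for the completeness of $L^p$, which is precisely what the paper has in mind: the paper does not prove this theorem at all but simply cites it as a standard result from Lieb and Loss, \textit{Analysis}. Your construction of the dominating function via the telescoping series with geometrically summable norms is exactly the textbook proof found there.
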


\begin{theorem} 
\label{Lebesgue}
(Lebesgue's Dominated Convergence Theorem) Let $\{ f^{(i)} \}$ be a sequence of summable functions which converges to $f$ pointwise almost everywhere. If there exists a summable $F(x)$ such that $|f^{(i)}(x)| \leq F(x)$ $\forall i$, then $|f(x)| \leq F(x)$ and 
\begin{equation*}
\lim_{i \rightarrow \infty} \int_{\Omega} f^{(i)}(x) dx = \int_{\Omega} f(x) dx, 
\end{equation*}
that is, we can commute the taking of the limit with the integration. 
\end{theorem}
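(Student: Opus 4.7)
The plan is to deduce this classical statement from Fatou's Lemma applied in two directions to the dominated sequence. First I would verify the pointwise inequality $|f(x)| \leq F(x)$: on the full-measure set where $f^{(i)}(x) \to f(x)$ and where the domination $|f^{(i)}(x)| \leq F(x)$ holds simultaneously for every $i$, continuity of the absolute value gives $|f(x)| = \lim_{i \to \infty} |f^{(i)}(x)| \leq F(x)$. This already ensures $f$ is measurable (as an almost-everywhere pointwise limit of measurable functions) and summable by comparison with $F \in L^1(\Omega)$.

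For the convergence of the integrals, the key observation is that $F \pm f^{(i)} \geq 0$ almost everywhere, because $|f^{(i)}| \leq F$. Applying Fatou's Lemma to the non-negative sequence $\{F - f^{(i)}\}$ gives
\begin{equation*}
\int_\Omega \liminf_{i \to \infty}(F - f^{(i)}) \, dx \leq \liminf_{i \to \infty} \int_\Omega (F - f^{(i)}) \, dx.
\end{equation*}
By pointwise convergence $f^{(i)} \to f$ a.e. and linearity (using $\int_\Omega F \, dx < \infty$), the left-hand side equals $\int_\Omega F \, dx - \int_\Omega f \, dx$ and the right-hand side equals $\int_\Omega F \, dx - \limsup_{i \to \infty} \int_\Omega f^{(i)} \, dx$. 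Cancelling $\int_\Omega F \, dx$ yields $\limsup_i \int_\Omega f^{(i)} \leq \int_\Omega f$. An identical argument applied to the non-negative sequence $\{F + f^{(i)}\}$ produces the reverse bound $\int_\Omega f \leq \liminf_i \int_\Omega f^{(i)}$. Sandwiching these two inequalities between the trivial $\liminf \leq \limsup$ forces equality and delivers $\lim_{i \to \infty} \int_\Omega f^{(i)} \, dx = \int_\Omega f \, dx$.

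The only substantive obstacle is Fatou's Lemma itself, which I would take as known (or derive as a standard consequence of the Monotone Convergence Theorem by introducing the non-decreasing sequence $g^{(i)}(x) := \inf_{k \geq i} f^{(k)}(x)$, whose pointwise limit is $\liminf_i f^{(i)}$ and whose integrals are bounded above by $\int_\Omega f^{(k)} \, dx$ for all $k \geq i$). Since this entire development is completely standard real analysis and is invoked here only to justify passing the $t \to 0$ limit inside the spatial integral that produces $\bar{x}$ from $\vec{x}$, in practice I would simply quote the result from the standard reference already cited.
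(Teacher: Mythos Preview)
Your proof via Fatou's Lemma is correct and entirely standard. Note, however, that the paper does not prove this theorem at all: it is stated without proof, preceded by the remark that ``The proofs of both theorems are standard; see \cite{LiebLoss} for details.'' So your proposal actually goes further than the paper, which simply invokes the result as a black box from real analysis---exactly what you yourself conclude would suffice in the final paragraph.
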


\begin{theorem}
\label{Thm10}
Let $\vec{u}$ be a solution to (\ref{5dsys}), subject to Theorem \ref{Thm1}. Then $\vec{x}(t,p)=t^{c}\vec{u}$ does not vanish as $t \rightarrow 0$, for a generic choice of initial data. Here the term generic refers to the open dense subset of $L^1$ initial data which lead to this result. 
\end{theorem}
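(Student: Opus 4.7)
The plan is to promote the non-vanishing of the averaged quantity $\bar{x}(t)$ at $t=0$ (recorded in Remark \ref{remark1}) to the non-vanishing of $\vec{x}(0,p)$ itself on a set of positive measure. The only missing ingredient is the interchange of the limit $t\to 0$ with the $p$-integration, which will follow from Lebesgue's Dominated Convergence Theorem (Theorem \ref{Lebesgue}) once a pointwise a.e. limit of $\vec{x}(t,\cdot)$ as $t\to 0$ and a common integrable dominator for the approximating sequence have been identified.

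First I would fix an arbitrary sequence $\{t^{(n)}\}\subset(0,t_{1}]$ with $t^{(n)}\to 0$ and set $\vec{x}^{(n)}(p):=\vec{x}(t^{(n)},p)$. By Lemma \ref{Lem6}, $\{\vec{x}^{(n)}\}$ is Cauchy in $L^1(\mathbb{R},\mathbb{R}^5)$. Theorem \ref{dominated} then supplies a unique $\vec{X}\in L^1(\mathbb{R},\mathbb{R}^5)$ with $\|\vec{x}^{(n)}-\vec{X}\|_1\to 0$, together with a subsequence $\vec{x}^{(n_k)}$ that converges to $\vec{X}$ pointwise almost everywhere and is dominated in absolute value by some non-negative $F\in L^1(\mathbb{R},\mathbb{R}^5)$. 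I would then \emph{define} $\vec{x}(0,p):=\vec{X}(p)$; the $L^1$-uniqueness in Theorem \ref{dominated} ensures that this definition is independent of the particular null sequence chosen.

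With these ingredients in place, Theorem \ref{Lebesgue} applied to the dominated, pointwise-a.e.-convergent subsequence $\vec{x}^{(n_k)}$ yields
\begin{equation*}
\int_{\mathbb{R}}\vec{x}(0,p)\,dp=\lim_{k\to\infty}\int_{\mathbb{R}}\vec{x}^{(n_k)}(p)\,dp=\lim_{k\to\infty}\bar{x}(t^{(n_k)})=\lim_{t\to 0}\bar{x}(t).
\end{equation*}
By Remark \ref{remark1}, the right-hand side is non-zero for every choice of initial data lying in an open dense subset of $\mathcal{S}=L^1(\mathbb{R},\mathbb{R}^5)$ (namely $\mathcal{S}''$, as identified in Theorems \ref{Lem3} and \ref{Lem4}). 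Consequently $\int_{\mathbb{R}}\vec{x}(0,p)\,dp\neq 0$ generically, which forces $\vec{x}(0,\cdot)$ to be non-zero on a set of positive Lebesgue measure, as required.

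The main obstacle I anticipate lies not in this final application of Theorems \ref{dominated} and \ref{Lebesgue} — which is essentially a bookkeeping exercise once the hypotheses are met — but rather in ensuring the existence of the dominator $F$. The pointwise bounds of Lemma \ref{Lem5}, $|x_i(t,p)|\leq\gamma_i t^{1/2}$ for $i=1,\dots,4$ and $|x_5(t,p)|\leq\gamma_5$, do not by themselves furnish an $L^1$ dominator, since the spatial support of $\vec{x}(t,\cdot)$ grows like $-\ln|t|$ as $t\to 0$ by Lemma \ref{lemspread}, and hence $\int|x_5|\,dp$ is \emph{a priori} only logarithmically controlled. What saves the argument is that $F$ is produced abstractly from the $L^1$ Cauchy structure of Lemma \ref{Lem6}, which in turn rests on the derivative bound of Lemma \ref{Lemderivbound}; the present theorem is therefore best viewed as the clean packaging of the chain Lemma \ref{Lem5} $\to$ Lemma \ref{Lemderivbound} $\to$ Lemma \ref{Lem6} through the dominated convergence machinery.
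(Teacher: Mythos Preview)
Your proposal is correct and follows essentially the same route as the paper: pass to a null sequence, invoke Lemma \ref{Lem6} to get the $L^1$ Cauchy property, extract via Theorem \ref{dominated} a dominated, pointwise-a.e.-convergent subsequence, and then apply Theorem \ref{Lebesgue} to interchange limit and integral so that Remark \ref{remark1} forces $\int_{\mathbb{R}}\vec{x}(0,p)\,dp\neq 0$ generically. Your closing observation that the dominator must come abstractly from the Cauchy structure rather than from the pointwise bounds of Lemma \ref{Lem5} (because of the logarithmic support growth in Lemma \ref{lemspread}) is exactly the point, and your conclusion ``non-zero on a set of positive measure'' is in fact the precise statement warranted by the $L^1$ limit --- the paper's phrasing ``open subset $(a,b)$'' tacitly assumes a continuity that is not established at $t=0$.
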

\begin{proof}
Since $\vec{x} \in L^1$ for $t \in (0,t_{1}]$, the proof of this theorem follows by an application of the two theorems from analysis quoted above. Consider the sequence $\vec{x}^{(n)}:=\vec{x}(t^{(n)}, p)$, where $\{ t^{(n)} \}_{n=0}^{\infty}$ tends to zero as $n \rightarrow \infty$. In Lemma \ref{Lem6}, we showed that this sequence is Cauchy in $L^1$. Theorem \ref{dominated} therefore provides for the existence of a dominated subsequence of $\vec{x}^{(n)}$. In particular, by applying this theorem we may conclude that there exists a non-negative $H \in L^1(\mathbb{R}, \mathbb{R})$ and a unique $\vec{h} \in L^1(\mathbb{R}, \mathbb{R}^5)$ such that 
\begin{equation*}
| \vec{x}^{\, (n_{m})} | \leq H(p) \quad \forall m, 
\end{equation*}
and $||\vec{x}^{\, (n_{m})} - \vec{h}||_{1} \rightarrow 0 $ as $m \rightarrow \infty$. 
Next we apply the Lebesgue dominated convergence theorem (Theorem \ref{Lebesgue}) to the dominated subsequence $\vec{x}^{\, (n_{m})}$. This produces 
\begin{equation}
\label{commute}
\lim_{m \rightarrow \infty} \int_{\mathbb{R}} \vec{x}(t^{(n_{m})}, p) \, dp = \int_{\mathbb{R}} \vec{x}(0, p) dp, 
\end{equation}
where we identify $\vec{h} \in L^1(\mathbb{R}, \mathbb{R}^5)$ with $\vec{x}(0,p)$. \footnote[1]{That is, $\vec{x}(0,p)$ is defined on a dense subset of $\mathbb{R}$ by the second result of Theorem \ref{Lebesgue}. It suffices to take any bounded extension to ``fill in'' the definition of $\vec{x}(0,p)$ on the remaining set of zero measure. }

Now if we recall Remark \ref{remark1}, which indicated that $\lim_{t \rightarrow 0} \int_{\mathbb{R}} \vec{x} \, dp \neq 0$, we can conclude that $\lim_{m \rightarrow \infty} \int_{\mathbb{R}} \vec{x}(t^{(n_{m})}, p) \, dp \neq 0$. Combining this with (\ref{commute}) produces  
\begin{equation}
\label{xnotzero}
\int_{\mathbb{R}} \vec{x}(0, p) dp \neq 0, 
\end{equation}
which implies that there exists an open subset $(a, b)$ on the Cauchy horizon such that $\vec{x}(t, p) \neq 0$ for $p \in (a, b)$. We note that (\ref{xnotzero}) holds generically since $\lim_{t \rightarrow 0} \int_{\mathbb{R}} \vec{x} \, dp \neq 0$ for a generic (that is, open and dense in $L^1$) set of initial data (see Remark \ref{remark1}). 
\hfill$\square$
\end{proof}
We conclude that $\vec{x}:=t^{c}\vec{u}$ exists and is non-zero on the Cauchy horizon for $p \in (a,b)$, for a general choice of initial data. This in turn tells us that the perturbation $\vec{u}$ diverges in a pointwise manner at the Cauchy horizon, with a characteristic power given by $t^{-c}$. 

\begin{theorem}
\label{Thmudiv}
There exists an open and dense subset of initial data $\vec{u}^{(0)} \in L^1(\mathbb{R}, \mathbb{R}^5)$ such that the solution $\vec{u}$ of (\ref{5dsys2}) corresponding to this initial data blows up as $t \rightarrow 0$ on an open subset $p \in (a,b)$, that is 
\begin{equation}
\label{udiverges}
\lim_{t \rightarrow 0} \vec{u}(t,p) = \infty,  \qquad \qquad \forall p \in (a,b). 
\end{equation} 
\end{theorem}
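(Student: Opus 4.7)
The plan is to reduce the pointwise blow-up of $\vec{u}$ to two facts already in hand: Theorem \ref{Thm10}, which identifies an open dense subset of $L^{1}$ initial data for which the rescaled vector $\vec{x}=t^{c}\vec{u}$ admits a non-trivial trace $\vec{x}(0,\cdot)$ on the Cauchy horizon, and the bound $c>3>0$ from Theorem \ref{Thm2}, which guarantees that $t^{-c}$ diverges as $t\to 0$. Since $\vec{u}(t,p)=t^{-c}\vec{x}(t,p)$, the pointwise blow-up of $\vec{u}$ follows immediately at any $p$ at which $\vec{x}(t,p)$ remains bounded away from zero along some sequence $t^{(n_{m})}\to 0$. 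The work is therefore essentially bookkeeping.

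First I would fix $\vec{u}^{(0)}$ in the open dense subset of $L^{1}(\mathbb{R},\mathbb{R}^{5})$ produced by Theorem \ref{Thm10}. For such initial data, Lemma \ref{Lem6} and the dominated convergence argument of Theorem \ref{Thm10} yield a representative $\vec{x}(0,\cdot)\in L^{1}(\mathbb{R},\mathbb{R}^{5})$ with $\int_{\mathbb{R}}\vec{x}(0,p)\,dp\neq 0$. In particular, the support of $\vec{x}(0,\cdot)$ has positive Lebesgue measure, and invoking the pointwise convergence clause of Theorem \ref{dominated} I can select an open interval $(a,b)\subset\mathbb{R}$, a constant $\delta>0$, and a subsequence $t^{(n_{m})}\to 0$ such that $|\vec{x}(0,p)|\geq\delta$ on $(a,b)$ and $\vec{x}(t^{(n_{m})},p)\to\vec{x}(0,p)$ pointwise on a dense subset of $(a,b)$. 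Openness and density of the set of admissible initial data are inherited directly from Theorem \ref{Thm10}.

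To conclude, for each such $p$ and $m$ sufficiently large one has $|\vec{x}(t^{(n_{m})},p)|\geq\delta/2$, so
\begin{equation*}
|\vec{u}(t^{(n_{m})},p)| = (t^{(n_{m})})^{-c}\,|\vec{x}(t^{(n_{m})},p)| \;\geq\; \frac{\delta}{2}\,(t^{(n_{m})})^{-c} \;\longrightarrow\; \infty,
\end{equation*}
since $c>0$. This establishes (\ref{udiverges}) along the subsequence for a dense subset of $p\in(a,b)$. The only mildly subtle point is that Theorem \ref{dominated} delivers pointwise convergence only on a dense subset, whereas the statement asks for blow-up at every $p\in(a,b)$. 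This is handled by choosing the representative of $\vec{x}(0,\cdot)$ supplied by the dominated convergence argument and, if necessary, shrinking $(a,b)$ to an open subinterval of the set where $\vec{x}(0,p)\neq 0$ in that representative. I expect the main obstacle, viewed globally, to have been the preceding construction of the limit $\vec{x}(0,\cdot)$ and its non-vanishing via Theorem \ref{Thm10}; once these are granted, the present theorem is an essentially algebraic corollary requiring no new estimates beyond those of Sections \ref{sec:lqblowup} and \ref{sec:chdivbehaviour}.
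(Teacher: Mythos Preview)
Your proposal is correct and follows essentially the same route as the paper: both arguments treat the theorem as an immediate corollary of Theorem \ref{Thm10} via the identity $\vec{u}=t^{-c}\vec{x}$ with $c>0$, together with Remark \ref{remark1} (equivalently Theorems \ref{Lem3}--\ref{Lem4}) for the open--dense genericity of the initial data in $L^{1}$. You are somewhat more explicit than the paper about the subsequence and dense--subset issues arising from Theorem \ref{dominated}, but this does not constitute a different approach.
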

\begin{proof}
It follows immediately from Theorem \ref{Thm10} that $\vec{u}$ blows up as $t \rightarrow 0$ on an open subset $p \in (a,b)$ for a choice of $C_{0}^{\infty}(\mathbb{R}, \mathbb{R}^5)$ initial data. Recall Remark \ref{remark1} which indicates that $\bar{x}= \int_{\mathbb{R}} \vec{x} \, dp \neq 0$ for a generic choice of initial data from $L^1(\mathbb{R}, \mathbb{R}^5)$. We can therefore extend the results of Theorem \ref{Thm10} to a choice of initial data taken from an open and dense subset of $L^1$. We conclude that for such initial data 
\begin{equation*}
\lim_{t \rightarrow 0} \vec{u}(t,p) = \infty,  \qquad \qquad \forall p \in (a,b). 
\end{equation*} 

\end{proof}

\section{Physical Interpretation of Variables}
\label{sec:gi}
So far, we have established the behaviour of $\vec{u}$ as it approaches the Cauchy horizon. We now wish to provide an interpretation of these results in terms of the perturbed Weyl scalars, which represent the gravitational radiation produced by the metric and matter perturbations. In this section, we will use the coordinate system $(u,v, \theta, \phi)$, where $u$ and $v$ are the in- and outgoing null coordinates (see (\ref{uvcoords}) for their definitions), rather than $(z, p, \theta, \phi)$ coordinates. This is a useful choice of coordinate system to make when considering the perturbed Weyl scalars. We follow throughout the presentation of \cite{physical}. 

For the even parity perturbations, only two of the perturbed Weyl scalars, $\delta \Psi_{0}$ and $\delta \Psi_{4}$, are identification and tetrad gauge invariant (see  \cite{physical} and \cite{szekeres}). This means that if we make a change of null tetrad, or a change of our background coordinate system, we will find that these terms are invariant under such changes. We note that $\delta \Psi_{0}$ and $\delta \Psi_{4}$ represent transverse gravitational waves propagating radially inwards and outwards.These terms are given by 
\begin{eqnarray*}
\delta \Psi_{0} = \frac{Q}{2r^2} \bar{l}^{A} \bar{l}^{B} k_{AB},  \qquad \qquad \delta \Psi_{4}=\frac{Q^{*}}{2r^2} \bar{n}^{A} \bar{n}^{B} k_{AB},
\end{eqnarray*}
where $Q$ and $Q^{*}$ are angular coefficients depending on the other vectors in the null tetrad, and on the basis constructed from the spherical harmonics. The ingoing and outgoing null vectors $\bar{l}^{A}$ and $\bar{n}^{A}$ are given in (\ref{inoutvectors}). The term 
\begin{equation}
\label{deltap}
\delta P_{-1}=|\delta \Psi_{0} \delta \Psi_{4}|^{1/2}
\end{equation}
is also invariant under spin-boosts, and therefore has a physically meaningful magnitude \cite{physical}. 


\begin{theorem}
\label{Thm12}
The perturbed Weyl scalars $\delta \Psi_{0}$ and $\delta \Psi_{4}$, as well as the scalar $\delta P_{-1}$, diverge on the Cauchy horizon.
\end{theorem}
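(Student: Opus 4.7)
The plan is to express the contractions $\bar{l}^{A}\bar{l}^{B}k_{AB}$ and $\bar{n}^{A}\bar{n}^{B}k_{AB}$ that appear in $\delta\Psi_{0}$ and $\delta\Psi_{4}$ as explicit linear combinations of the components of the state vector $\vec{u}(t,p)$, with coefficients depending only on the background geometry, and then invoke Theorem \ref{Thmudiv} to conclude that these quantities blow up pointwise on the Cauchy horizon.

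First I would use the definitions (\ref{inoutvectors}) of the null vectors together with the coordinate transformation (\ref{uvcoords}) to express $\partial/\partial u$ and $\partial/\partial v$ as linear combinations of $\partial/\partial z$ and $\partial/\partial p$. Carrying out the Jacobian computation yields
\begin{equation*}
\bar{l}^{A}\bar{l}^{B}k_{AB} = \frac{1}{B^{2}}\Bigl(a_{+}^{2}\alpha + 2 a_{+}b_{+}\beta + b_{+}^{2}\delta\Bigr), \qquad \bar{n}^{A}\bar{n}^{B}k_{AB} = a_{-}^{2}\alpha + 2 a_{-}b_{-}\beta + b_{-}^{2}\delta,
\end{equation*}
where $\alpha$, $\beta$ are the independent components of $k_{AB}$ in $(z,p)$ coordinates, $\delta$ is given in terms of these by the trace-free condition (item 6 of Section \ref{sec:reduction}), and the coefficients $a_{\pm}(z,r)$, $b_{\pm}(z,r)$, $B(z,r)$ are background quantities, explicit in $z$ and $r$. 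Since $f_{+}(z_{c})=0$ by (\ref{CH}), the coefficients associated with $\partial/\partial u$ become singular as $t\to 0$; the factor $1/B^{2}$, by contrast, can be shown to admit a finite non-zero limit using the Cauchy horizon condition $e^{\nu(z_{c})}=z_{c}^{2}$.

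Next I would work backwards through the sequence of linear transformations defined in Section \ref{sec:reduction}. Using the explicit formulas for $\vec{Y}$ in terms of $(\alpha,\beta,u,v,w,\Gamma)$ (item 11) and for $\vec{u}$ in terms of $\vec{Y}$ (item 17), and eliminating $u_{2}$ via the propagated constraint (\ref{mnontriv}), the components $\alpha(z,p)$ and $\beta(z,p)$ are readily expressed as linear combinations of $u_{1}$, $u_{3}$, $u_{4}$, $u_{5}$ and the initial data function $g(p)$, with coefficients rational in $z$ and analytic on a neighborhood of $z_{c}$. Substitution produces $\bar{l}^{A}\bar{l}^{B}k_{AB}$ and $\bar{n}^{A}\bar{n}^{B}k_{AB}$ as explicit $z$-dependent linear combinations of $u_{1},\ldots,u_{5}$ and $g(p)$.

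By Theorem \ref{Thmudiv}, there is an open dense subset of $L^{1}$ initial data for which $\vec{u}(t,p)$ diverges pointwise on some open interval $p\in(a,b)$ as $t\to 0$; more precisely, the proof of Theorem \ref{Thm10} together with Lemma \ref{Lem5} shows that the divergence is driven by $u_{5}\sim t^{-c}$, while $u_{1},\ldots,u_{4}$ are subleading (bounded by $t^{-c+1/2}$). As long as the coefficient of $u_{5}$ in the linear expressions above does not vanish at $z=z_{c}$, the contractions $\bar{l}^{A}\bar{l}^{B}k_{AB}$ and $\bar{n}^{A}\bar{n}^{B}k_{AB}$ diverge at least as fast as $t^{-c}$ on $(a,b)$; the singular Jacobian prefactors in $\bar{l}^{A}\bar{l}^{B}k_{AB}$ only strengthen the divergence. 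It then follows that $\delta\Psi_{0}$ and $\delta\Psi_{4}$ blow up on this interval, and hence so does their geometric mean $\delta P_{-1}=|\delta\Psi_{0}\delta\Psi_{4}|^{1/2}$.

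The principal obstacle is confirming that the coefficient of the divergent component $u_{5}$ in the relevant linear combinations is non-vanishing at the Cauchy horizon. An accidental cancellation would force a finer analysis of the subleading behaviour of $\vec{u}$, and one would need to exhibit a different component of $\vec{x}$ which survives the contractions. Given the nongeneric character such a cancellation would require from the long chain of changes of variables in Section \ref{sec:reduction}, it appears highly unlikely, but should be verified directly, most easily by a computer algebra computation analogous to those used elsewhere in the paper.
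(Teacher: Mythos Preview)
Your proposal is correct and follows essentially the same approach as the paper: express $\delta\Psi_{0}$ and $\delta\Psi_{4}$ via the coordinate transformation as linear combinations of $\alpha$ and $\beta$, rewrite these in terms of the components of $\vec{u}$ by retracing the reductions of Section~\ref{sec:reduction}, and observe that the divergent component $u_{5}$ appears with nonzero coefficient. The paper carries out this computation explicitly, finding that $\alpha$ and $\beta$ depend only on $u_{1}$, $u_{4}$, $u_{5}$ (so no constraint elimination is needed), and presents the resulting coefficients $\beta_{1},\dots,\beta_{6}$ directly; your caution about verifying that the $u_{5}$-coefficient does not vanish at $z_{c}$ is well placed, as the paper simply exhibits the formulas without commenting further on this point.
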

\begin{proof}
We begin by writing the tensor $k_{AB}$ in $(u,v)$ coordinates as 
\[k_{AB}= \left( \begin{array}{cc} 
\eta(u,v) & \nu(u,v)  \\
\nu(v,v) & \lambda(v,v) \\
\end{array} \right).\]
The condition that $k_{AB}$ be tracefree results in $\nu(u,v) =0$. In $(u,v)$ coordinates, the perturbed Weyl scalars become 
\begin{eqnarray*}
\delta \Psi_{0} = \frac{Q}{2r^2 B^2} \eta(u,v), \qquad \qquad \delta \Psi_{4} = \frac{Q^{*}}{2r^2} \lambda(u,v), 
\end{eqnarray*}
where the factor of $B(u,v)$ is the same factor which appears in (\ref{inoutvectors}). Now by performing a coordinate transformation, we can write $\alpha(z,p)$ and $\beta(z,p)$ (the components of $k_{AB}$ in $(z,p)$ coordinates) in terms of $\eta(u,v)$ and $\lambda(u,v)$ and by so doing, we can find $\delta \Psi_{0}$ and $\delta \Psi_{4}$ in terms of $\alpha(z,p)$ and $\beta(z,p)$. We find
\begin{eqnarray}
\label{weylzp}
\delta \Psi_{0} = F(z,p)(\alpha(z,p) - f_{-}^{-1}(z) \beta(z,p)),  \\ \nonumber 
\delta \Psi_{4} = G(z,p)(f_{+}(z)\alpha(z,p) - \beta(z,p)),
\end{eqnarray}
where the coefficients $F$ and $G$ are given by
\begin{eqnarray*}
F(z,p)=\frac{Q}{2r^2 B^2} \frac{f_{+}^2}{u^2} \left(\frac{f_{-}}{f_{-}-f_{+}}\right), \qquad \qquad G(z,p)=\frac{Q^{*}}{2r^2} \frac{f_{-}^2}{v^2} \left(\frac{1}{f_{+}-f_{-}}\right),
\end{eqnarray*}
and we note that $F \sim r^{-4}$ and $G \sim r^{-2}$ (recall that $B(u,v) \sim r^2$).  

Now, if we retrace our steps through the first order reduction in Section \ref{sec:reduction}, we find that $u_{5}(z,p)$ contributes to $\alpha(z,p)$, $\beta(z,p)$, $k(z,p)$ and its first derivatives. In particular, the pointwise divergence of $u_{5}(z,p)$ on the Cauchy horizon produces a similar divergence in these terms. We can write $\alpha$ and $\beta$ as 
\begin{eqnarray}
\label{alpandbeta}
\fl \alpha(z, p)=\frac{S}{1-\dot{S}^2} \left(u_{4}(-1+\dot{S}) + u_{5}(1+\dot{S}) + 2u_{1}(1+\dot{S}) \right),  \\ \nonumber 
\fl \beta(z, p)=\frac{S}{1+\dot{S}^2} \left(u_{4}(-1+\dot{S})(z-S+z\dot{S}) + 2z(1+\dot{S})u_{1} + u_{5}(z-z\dot{S}+S(1+\dot{S})) \right), 
\end{eqnarray}
where $S(z)=(1+az)^{2/3}$ is the radial function. Combining (\ref{weylzp}) and (\ref{alpandbeta}) produces 
\begin{eqnarray*}
\delta \Psi_{0}=F(z,p)(\beta_{1}(z)u_{4}(z,p)+\beta_{2}(z)u_{5}(z,p)+\beta_{3}(z)u_{1}(z,p)), \\ \nonumber 
\delta \Psi_{4}=G(z,p)(\beta_{4}(z)u_{4}(z,p)+\beta_{5}(z)u_{1}(z,p)+\beta_{6}(z)u_{5}(z,p)),
\end{eqnarray*}
where 
\begin{eqnarray*}
\beta_{1}(z)=\frac{2(-1+\dot{S})S \dot{S}^2}{1-\dot{S}^4}, \\
\beta_{2}(z)=S \left(\frac{1}{1-\dot{S}} - \frac{z-z\dot{S}+S(1+\dot{S})}{(z-S+z\dot{S})(1+\dot{S}^2)} \right), \\ 
\beta_{3}(z)=2S \left( \frac{1}{1-\dot{S}} - \frac{z(1+\dot{S})}{(z-S+z \dot{S})(1+\dot{S}^2)} \right), \\ 
\beta_{4}(z)=\frac{2(-1+\dot{S})}{1-\dot{S}^4} (S+z\dot{S}(-1+\dot{S})),  \\ 
\beta_{5}(z)= \frac{2S^2(1+\dot{S})^2}{1-\dot{S}^4}, \\ 
\beta_{6}(z)=\frac{S}{1-\dot{S}^4} (-\dot{S}(1+\dot{S})(-2\dot{S}S -z +\dot{S}^2z)).  
\end{eqnarray*}
So $\delta \Psi_{0}$ and $\delta \Psi_{4}$ depend on $u_{5}$ and therefore they diverge as the Cauchy horizon is approached. Similarly, $\delta P_{-1}$ diverges on the Cauchy horizon, as it depends on $\delta \Psi_{0}$ and $\delta \Psi_{4}$.  
\hfill$\square$
\end{proof}

To construct a gauge invariant interpretation for the matter term $\Gamma(z,p)$, we note that by comparing the GS terms (\ref{m2se}) and (\ref{TAdef}) to (\ref{dustsepert}), we find that 
\begin{equation*}
T_{A}=\bar{\rho}(z,p) \, \bar{u}_{A} (\Gamma(z,p)+g(p))
\end{equation*}
which we contract with the background dust velocity $\bar{u}^{A}$ to find 
\begin{equation*} 
\bar{u}^{A} T_{A} = -\bar{\rho}(z,p) (\Gamma(z,p) + g(p))
\end{equation*}
where $\bar{u}^{A}T_{A}$ is a gauge invariant scalar. 


\section{Conclusion}
\label{sec:concl}

We have considered here the behaviour of even parity perturbations of the self-similar LTB spacetime in the case where there is a naked singularity. In particular, we have examined their evolution in the approach to the Cauchy horizon associated with the naked singularity. We first identified the fundamental system of PDEs which govern the evolution of the even parity perturbations but chose to use the five dimensional system in preference, as its symmetric hyperbolicity is a very useful property. We next considered an averaged form of the perturbation variable and showed that it displayed a generic divergence in its $L^q$-norm. The divergence occurs for all angular number $l$. Here the term generic refers to the fact that the divergence occurs when solutions evolve from a set of initial data which is open and dense in the set of all initial data. This result is not by itself sufficient to ensure that the perturbation diverges on the Cauchy horizon, as it is perfectly possible for a pointwise finite function to have a diverging $L^q$-norm. However, this result gave us a useful ansatz for investigating the asymptotic behaviour of the solution itself in the approach to the Cauchy horizon. 

To determine the pointwise behaviour of the perturbation in the approach to the Cauchy horizon, we introduced a scaled version of the state vector,  $\vec{x}:=t^{c}\vec{u}$. We used a series of energy methods which produced bounds on $\vec{x}$ and its $p$-derivatives in the approach to the Cauchy horizon. By combining these results with the formal solutions for $\vec{x}$ arising from the method of characteristics, we were able to establish that $\vec{x} \in L^1$. Finally, this allowed us to establish a theorem which demonstrated that $\vec{x}$ is non-zero on the Cauchy horizon over some interval. This result pertains for all angular numbers $l$. 

Iguchi, Harada and Nakao \cite{INHeven} studied the behaviour of the quadrupole mode ($l=2$) of the even parity perturbations of the LTB spacetime. They numerically solved the linearised Einstein equations and found that this perturbation diverged on the Cauchy horizon. In one sense, our results are a generalization of theirs, in that this method allows us to treat all perturbations. However, the extra symmetry of self-similarity is needed in order to apply our methods. 

We note a potential issue with this work. Our perturbations are at linear order. It follows that it is somewhat strange to conclude that they diverge on the Cauchy horizon, as they are therefore far too large to remain at linear order. However, this result still indicates that this spacetime is not stable to perturbation. 

In previous work, we demonstrated that the odd parity perturbations of this background spacetime are finite for all $l$, where finiteness was measured with respect to initial values of a natural energy norm for the odd parity system. Taken as a whole, this work supports the hypothesis of cosmic censorship, in that one should expect perturbations on a naked singularity spacetime to diverge as the Cauchy horizon is approached. 

The background spacetime investigated here is of course not a serious model of gravitational collapse, as at the very least, it ignores the effects of pressure during the collapse. A natural extension of this work would therefore be to consider the self-similar perfect fluid model, which contains a naked singularity for a wide range of the equation of state parameter. The study of the behaviour of perturbations in this spacetime would be a very interesting application of the methods developed here. 

\ack
We thank Carsten Gundlach for helpful discussions. This research was funded by the Irish Research Council for Science, Engineering and Technology, grant number P02955. 

\appendix

\section{Equations of Motion}
\label{eqnsmotion}
We list here the various matrix coefficients and source terms omitted in sections \ref{sec:reduction}, \ref{sec:lqblowup} and \ref{sec:chdivbehaviour}. We neglect to list those terms whose exact form is not important. We note in what follows that $S=S(z)=(1+az)^{2/3}$ is the radial function, $q=q(z)=\lambda/4 \pi (-z \dot{S} +S)S^2$ is the density function and $\cdot=\frac{\partial }{\partial z}$. 

The five dimensional system takes the form 
\begin{equation*}
\frac{\partial \vec{u}}{\partial z} + \tilde{A}(z) \frac{\partial \vec{u}}{\partial p} + \tilde{C}(z) \vec{u} = \vec{\sigma}(z,p)
\end{equation*}
where the matrix coefficient $\tilde{A}(z)$ is given in Section \ref{sec:reduction}. $\tilde{C}(z)$ is given by  
\begin{equation*}
\tilde{C}=\left(
\begin{array}{ccccc}
c_{11} & c_{12} & c_{13} & c_{14} & c_{15} \\
c_{21} & c_{22} & 0 & c_{24} & c_{25} \\
c_{31} & c_{32} & c_{33} & c_{34} & c_{35}\\
c_{41} & 0 & c_{43} & c_{44} & c_{45} \\
c_{51} & 0 & c_{53} & c_{54} & c_{55}
\end{array}
\right), 
\end{equation*}
where 
\begin{eqnarray*}
\fl c_{11}=\frac{-3 z (-1+\dot{S}) \dot{S}^2-S^2 \ddot{S}+S (3 \dot{S}^2-z \ddot{S}+\dot{S} (-3+2 z \ddot{S}))}{S(-1+\dot{S}) (S-z \dot{S})}, \\ 
\fl c_{12}=-\frac{-1+\dot{S}}{2 S}, \qquad \qquad  \qquad \qquad  c_{13}= \frac{-1+\dot{S}}{2 S},  \\ 
\fl c_{14}= \frac{-1+\dot{S}}{2 S}, \qquad \qquad  \qquad \qquad  c_{15}= \frac{-1+\dot{S}}{2 S},  \\ 
\fl c_{21}= -\frac{8 \pi  q S}{-1+\dot{S}}, \qquad \qquad  \qquad \qquad  c_{22}= -\frac{\dot{q}}{q}, \\ 
\fl c_{24}= -\frac{4 \pi  q S}{1+\dot{S}}, \qquad \qquad  \qquad \qquad c_{25}= -\frac{4 \pi  q S}{-1+\dot{S}},  
\end{eqnarray*}

\begin{eqnarray*}
\fl c_{31}=\frac{n_{1}}{z S (-1+\dot{S}) (S-z \dot{S})^2}, \\ 
\fl n_{1}=2 (2 z^3 \dot{S}^4-2 z^2 S \dot{S}^2 (3 \dot{S}+z \ddot{S})+S^3 (-2 \dot{S}+z^2 S^{(3)}) \\ 
+z S^2 (6 \dot{S}^2+z^2 \ddot{S}^2+z \dot{S}(2 \ddot{S}-z S^{(3)}))), \\ 
\fl c_{32}=-\frac{2}{z}-\frac{\dot{q}}{q}, \qquad \qquad  \qquad \qquad c_{33}=\frac{1}{z}+\frac{2 \dot{S}}{S}, \\ 
\fl c_{34}=\frac{n_{2}}{2 z S (1+\dot{S})}, \\ 
\fl n_{2}=z \dot{S} (2-L-L^2+2 \dot{S}+2 \dot{S}^2)-4 S^2 \ddot{S}+S (-2+L+L^2-2 \dot{S}^2-2 z \ddot{S} \\ 
+\dot{S}(-2+4 z \ddot{S})), \\ 
\fl c_{35}=\frac{n_{3}}{2 z S (-1+\dot{S})}, \\ 
\fl n_{3}=z \dot{S} (-2+L+L^2+2 \dot{S}-2 \dot{S}^2)+4 S^2 \ddot{S}-S (-2+L+L^2-2 \dot{S}^2+2 z \ddot{S} \\ 
+\dot{S} (2+4 z \ddot{S})), 
\end{eqnarray*}

\begin{eqnarray*}
\fl c_{41}= \frac{4 (1+\dot{S}) (-z \dot{S}^2+S (\dot{S}+z \ddot{S}))}{S (-1+\dot{S}) (z-S+z \dot{S})}, \\ 
\fl c_{43}= \frac{(1+\dot{S}) (S-z \dot{S})}{S (z-S+z \dot{S})}, \\ 
\fl c_{44}= \frac{3+(z+S) \ddot{S}+\dot{S}(3+z \ddot{S})}{(1+\dot{S}) (z-S+z \dot{S})}, \\ 
\fl c_{45}= \frac{(1+\dot{S}) (S-z \dot{S})}{S (z-S+z \dot{S})}, 
\end{eqnarray*}

\begin{eqnarray*}
\fl c_{51}= -\frac{4 (-z \dot{S}^2+S (\dot{S}+z \ddot{S}))}{S (z+S-z \dot{S})}, \\ 
\fl c_{53}=-\frac{(-1+\dot{S}) (S-z \dot{S})}{S (z+S-z \dot{S})}, \\ 
\fl c_{54}=-\frac{(-1+\dot{S}) (S-z \dot{S})}{S (z+S-z \dot{S})}, \\ 
\fl c_{55}=-\frac{3+(-z+S) \ddot{S}+\dot{S} (-3+z \ddot{S})}{(-1+\dot{S}) (z+S-z \dot{S})}. 
\end{eqnarray*}
The five dimensional source term is given by $\vec{\Sigma}_{5}= \vec{f}(t)g(p)$ where $\vec{f}(t)=(0,0,f_{1}(z), f_{2}(z), f_{3}(z))^T$ and
\begin{eqnarray*}
f_{1}(z)=-\frac{16 \pi q (-S+z \dot{S})}{z S}, \qquad \qquad f_{2}(z)=-\frac{8 \pi  q(1+\dot{S}) (S-z \dot{S})}{S(z-S+z \dot{S})}, \\  f_{3}(z)=\frac{8 \pi q(-1+\dot{S})(S-z \dot{S})}{S[z](z+S-z \dot{S})}. 
\end{eqnarray*}
In sections \ref{sec:lqblowup} and \ref{sec:chdivbehaviour} we used the system in the form 
\begin{equation*}
t\frac{\partial \vec{u}}{\partial t} + A(t) \frac{\partial \vec{u}}{\partial p} + C(t) \vec{u} = \vec{\Sigma}(t,p). 
\end{equation*}
Here $A(t) = t \tilde{A}(z)$, $C(t)=t \tilde{C}(z)$ and $\vec{\Sigma}(t, p) = t\vec{\Sigma}_{5}(z,p)$. 

The coefficients appearing in (\ref{mnontriv}) are given by 
\begin{eqnarray*}
\fl g_{1}(z)=\frac{4 S(-6 z^2 S \dot{S}^3+2 z^3 \dot{S}^4+4 \pi  z q S^2 (S-z \dot{S})^2+S^3 (-2 \dot{S}+z \ddot{S})+z S^2 (6 \dot{S}^2-z \dot{S} \ddot{S}+z^2 \ddot{S}^2))}{(-1+\dot{S}) (-S+z \dot{S})}, \\ 
\fl g_{2}(z)=2 S(2 S^2+2 z^2 \dot{S}^2+z S (-4 \dot{S}+z \ddot{S})), \\ 
\fl g_{3}(z)=-2 S^2 (S+z (-\dot{S}+z \ddot{S})), \\ 
\fl g_{4}(z)=-\frac{m_{1}(z) }{1+\dot{S}}, \qquad  \qquad \qquad \qquad  g_{5}(z)=\frac{m_{2}(z)}{1-\dot{S}}, \\ 
\fl g_{6}(z)=2 S^2 (S-z \dot{S}), \qquad \qquad \qquad     g_{7}(z)=32 \pi  q S (S-z \dot{S})^2, 
\end{eqnarray*}
where 
\begin{eqnarray*}
\fl m_{1}(z)=S (z^2 \dot{S} (1+\dot{S}) (-2+L+L^2-2 \dot{S}-2 \dot{S}^2)+S^3 (8 \pi  z q-4 \ddot{S}) \\ 
-z S (-2+L+L^2-4 \dot{S}^3-2 z \ddot{S} +2 \dot{S} (-3+L+L^2+z \ddot{S})+\dot{S}^2 (-6+4 z \ddot{S})) \\ 
+S^2 (-2+L+L^2-2 \dot{S}^2+4 z \ddot{S}+\dot{S} (-2-8 \pi  z^2 q+8 z \ddot{S}))), \\ 
\fl m_{2}(z)=S (-z^2 (-1+\dot{S}) \dot{S} (-2+L+L^2+2 \dot{S}-2 \dot{S}^2)+4 S^3 (2 \pi  z q+\ddot{S}) \\ 
-z S (-2+L+L^2+4 \dot{S}^3+2 z \ddot{S} -2 \dot{S} (-3+L+L^2-z \ddot{S})-2 \dot{S}^2 (3+2 z \ddot{S})) \\ 
-S^2 (-2+L+L^2-2 \dot{S}^2-4 z \ddot{S}+\dot{S} (2+8 \pi  z^2 q+8 z \ddot{S}))).  
\end{eqnarray*}
The coefficient matrix $E(z)$ appearing in (\ref{4deqnofmotion}) is given by
\begin{equation*}
E(z)=\left( \begin{array}{cccc}
 0 & 1 & 0 & 0 \\
 0 & 0 & 0 & 0 \\
 0 & 0 & \frac{3 (1+a z)^{1/3}}{3+a z-3 z (1+a z)^{1/3}} & 0 \\
 0 & 0 & 0 & -\frac{3 (1+a z)^{1/3}}{3+a z+3 z (1+a z)^{1/3}}
\end{array}
\right), 
\end{equation*}
so we can explicitly see that this system is not symmetric hyperbolic. 

Finally, we present the similarity matrix $S$ (which appears in Section \ref{xbehaviour} and is used in the proof of Theorem \ref{Thm4}), a constant matrix which transforms the zero order term of the Taylor expansion of $C(t)$, $C(t=0)$, into $C_{0}$. It is given by
\begin{equation*}
S=\left(
\begin{array}{ccccc}
 s_{1} & s_{2}& s_{2} & 0 & 0 \\
 0 & 0 & 0 & 1 & 0 \\
 0 & 0 & 1 & 0 & 0 \\
 0 & 1 & 0 & 0 & 0 \\
 1 & 0 & 0 & 0 & 1
\end{array}
\right),
\end{equation*}
where
\begin{eqnarray*}
s_{1}=\frac{S(z_{c}) (3+(-z_{c}+S(z_{c})) \ddot{S}(z_{c})+\dot{S}(z_{c}) (-3+z_{c} \ddot{S}(z_{c})))}{4 (-1+\dot{S}(z_{c})) (z_{c} \dot{S}(z_{c})^2-S(z_{c}) (\dot{S}(z_{c})+z \ddot{S}(z_{c})))},  \\ 
s_{2}=\frac{(-1+\dot{S}(z_{c})) (-S(z_{c})+z_{c} \dot{S}(z_{c}))}{4 (-z_{c} \dot{S}(z_{c})^2+S(z_{c}) (\dot{S}(z_{c})+z_{c} \ddot{S}(z_{c})))}. 
\end{eqnarray*}

\section{The Behaviour of $c$}

In the statement of Theorem \ref{Thm2}, we looked that the Jordan canonical form of the zero order term in the Taylor expansion of the matrix $C(t)$. We claimed that the only non-zero eigenvalue of this matrix was in the range $c \in (3, +\infty)$, for $a \in (0, a^*)$. Here we prove this claim. 

In the previous section, we presented the coefficients of the equation of motion for $\vec{u}$. Recall that the coefficient of $\vec{u}$ was given as $\tilde{C}(z)$, where in terms of the notation of (\ref{5dsys}), $C(t) = t\tilde{C}(z)$. After we put $\tilde{C}(z)$ in Jordan canonical form, we find that the only non-zero term is the $(5, 5)$ entry, which is given by  
\begin{equation}
\label{cdef}
\tilde{c}_{55}=\frac{3+(S-z)\ddot{S}+ \dot{S}(-3+z \ddot{S})}{(1-\dot{S})(z+S-z\dot{S})}. 
\end{equation}
Now define $h(z):= z+S-z \dot{S}$. On the Cauchy horizon, $h(z_{c})=0$ by definition. We can use this to Taylor expand the numerator of (\ref{cdef}) and to simplify the zero order term. We can also Taylor expand the factor of $h(z)$ which appears in the denominator so that $h(z) = \dot{h}(z_{c})(z-z_{c})+ O((z-z_{c})^2) = (1-z_{c}\ddot{S}(z_{c}))+O((z-z_{c})^2)$. Inserting these expansions into (\ref{cdef}) produces 
\begin{equation}
\label{cdef2}
\tilde{c}_{55}=\left(2 + \frac{1}{1-z_{c}\ddot{S}(z_{c})} \right) \frac{1}{z-z_{c}} + O(1),
\end{equation}
as $z \rightarrow z_{c}$. So we must determine the value of $\dot{h}(z_{c})=1-z_{c}\ddot{S}(z_{c})=1+\frac{4}{3}a^2z_{c}(a)(1+az_{c}(a))^{-4/3}$, where we use (\ref{Sdef}) to prove the last equality and we write $z_{c}=z_{c}(a)$ to emphasise that the location of the Cauchy horizon depends on the value of $a$, the nakedness parameter. 

We first analyse how $z_{c}$ depends on $a$, before using this information to determine the behaviour of $\dot{h}(z_{c})$. We define $\lambda(z)=(1+az)h(z)^3$, so that $\lambda(z)=z^3(1+az)+(1+\frac{a}{3}z)^3$. The Cauchy horizon corresponds to the first negative root of the quartic equation $\lambda(z)=0$. This root must lie in the interval $z \in (-\frac{1}{a}, 0)$, since $z=-\frac{1}{a}$ corresponds to the singularity (see (\ref{Sdef})). We know that a root exists for $a \in (0, a^*)$, where $a=a^*$ corresponds to a double root of $\lambda$, where $\lambda(z_{c}(a^*))=\dot{\lambda}(z_{c}(a^*))=0$. Let $z^*:=z_{c}(a^*)$. We can easily show (using $\lambda = 0$) that 
\begin{equation*}
(1+\frac{a}{3}z)\dot{\lambda} \bigg|_{\lambda=0} = z^2(3+4az+\frac{1}{3}a^2z^2), 
\end{equation*}
which implies that $a^*z^*$ satisfies the quadratic 
\begin{equation}
\label{astardef}
(a^*)^2(z^*)^2 + 12a^*z^* +9 = 0. 
\end{equation}
By solving this for $a^*z^*$ and picking the larger root, we have
\begin{equation*}
z^*=\frac{1}{a^*}(-6 + 3\sqrt{3}). 
\end{equation*}
Next, we consider the dependence of $z_{c}$ on $a$. If we differentiate the condition $\lambda=0$ with respect to $a$, multiply by $(1+az)$ and use $\lambda=0$ to simplify, we find that 
\begin{equation}
\label{zcderiv}
(a^2z_{c}^2 + 12az_{c} + 9) \frac{dz_{c}}{da} = 2 az_{c}^3. 
\end{equation}
We note that the coefficient of $dz_{c} / da$ vanishes at $a=a^*$ (see \ref{astardef}) and is positive in the interval $a \in (0, a^*)$ (we can see this by calculating its roots and noting that for the range of allowable $a$, we are always above the larger root). Now since $z_{c}<0$, this implies that $dz_{c} / da < 0$ for all $a \in (0, a^*)$ and 
\begin{equation*}
\lim_{a \rightarrow a^*} \frac{d z_{c}}{da} = -\infty. 
\end{equation*}
So to summarise, we know that $z_{c}(0)=-1$ (see the definition of $\lambda$), $z_{c}(a^*)=z^*$ and $z_{c}$ is monotonically decreasing from $-1$ down to $z^*$ as $a$ increases from $0$ to $a^*$.

We now determine the range of $\dot{h}(z_{c})$. Define $u(a)=\dot{h}(z_{c}(a))$. Then $u(0)=1$ (since  $u(a)=\dot{h}(z_{c})=1+\frac{4}{3}a^2z_{c}(a)(1+az_{c}(a))^{-4/3}$), and by definition $u(a^*)=0$ (recall that $\lambda(a)=(1+az)h(a)^3$). A straightforward calculation (using (\ref{zcderiv})) shows that 
\begin{equation*}
\frac{d u}{da}=\frac{4}{9}az_{c}(1+az_{c})^{-4/3}\frac{(3+2az_{c})}{3+4az_{c}+\frac{1}{3}a^2z_{c}^2}. 
\end{equation*}
Since $a>0$, $z_{c}<0$, $1+az_{c}>0$ and $3+4az_{c}+\frac{1}{3}a^2z_{c}^2>0$ on $a \in (0, a^*)$, it follows that $du / da < 0$ for $a \in (0, a^*)$. So $u(a)=\dot{h}(z_{c}(a)) \in [0, 1]$ and $u$ decreases monotonically from $u=1$ at $a=0$ to $u=0$ at $a=a^*$. It follows that $\frac{1}{u(a)} \in (1, \infty)$ for $a \in (0, a^*)$ with 
\begin{equation*}
\lim_{a \rightarrow 0^{+}} \frac{1}{u(a)} = 1, \qquad \qquad \qquad \lim_{a \rightarrow (a^*)^{-}} \frac{1}{u(a)}= + \infty. 
\end{equation*}
We note that (\ref{cdef2}) can be written as 
\begin{equation*}
\tilde{c}_{55}=\left( 2+\frac{1}{u(a)} \right)\frac{1}{z-z_{c}}+O(1), 
\end{equation*}
and if we note that the $(5,5)$ entry in the Jordan canonical form of the zero order term in the Taylor expansion of $C(t)$ is $c:=c_{55} = t\tilde{c}_{55}$, where $t=z-z_{c}$, then we can conclude that $c=(2+\frac{1}{u(a)}) \in (3, +\infty)$ with 
\begin{equation*}
\lim_{a \rightarrow 0^{+}} c = 3, \qquad \qquad \qquad \lim_{a \rightarrow (a^*)^{-}} c = + \infty. 
\end{equation*}

\section*{References}

\end{document}